\documentclass[11pt]{article}
\usepackage{mathrsfs}
\usepackage{amsmath}
\usepackage[colorlinks=true, allcolors=blue, urlcolor=black]{hyperref}

\makeatletter
\newcommand{\removelatexerror}{\let\@latex@error\@gobble}
\makeatother
\usepackage{amsmath}
\usepackage{amsthm}
\usepackage{amsfonts}
\usepackage{amssymb}
\usepackage{graphicx}
\usepackage{dsfont}
\usepackage{mathtools}
\usepackage{cleveref}
\usepackage{pgfplots}
\usepackage{bbm}
\usepackage[noend]{algpseudocode}
\usepackage{stmaryrd}

\usepackage{algorithm}
\usepackage{complexity}
\usepackage{enumitem}
\usepackage{tikz}
\usetikzlibrary{positioning}
\usepackage{thm-restate}
\usepackage{float}
\usetikzlibrary{calc}
\usepackage{thm-restate}
\usetikzlibrary{shapes,decorations.markings}
\usepackage{caption}
\usepackage{subcaption}
\usepackage{comment}

\usepackage{lipsum}
\usetikzlibrary{snakes}

\newtheorem{theorem}{Theorem}[section]
\newtheorem{observation}[theorem]{Observation}
\newtheorem{assumption}[theorem]{Assumption}
\newtheorem{definition}[theorem]{Definition}
\newtheorem{claim}[theorem]{Claim}
\newtheorem{lemma}[theorem]{Lemma}
\newtheorem{proposition}[theorem]{Proposition}

\newtheorem{fact}[theorem]{Fact}
\newtheorem{corollary}[theorem]{Corollary}

\renewcommand{\E}{\mathbb{E}}
\newcommand{\tO}{\widetilde{O}}

\renewcommand{\W}{\mathcal{W}}
\usepackage[english]{babel}
\usepackage[T1]{fontenc}
\usepackage{caption}

\newcommand{\maxflow}[0]{\textnormal{MaxFlow}}
\newcommand{\mincut}[0]{\textnormal{MinCut}}
\newcommand{\eps}[0]{\varepsilon}

\newcommand{\FindEdge}{\mathsf{FindEdge}}
\newcommand{\IS}{\mathsf{IS}}

\usepackage[margin=1in]{geometry}

\algblockdefx{Withprob}{EndWithprob}[1]{\textbf{with probability} #1 \textbf{do}}{}

\makeatletter
\ifthenelse{\equal{\ALG@noend}{t}}%
  {\algtext*{EndWithprob}}
  {}%
\makeatother

\title{Minimum $s$--$t$ Cuts with Fewer Cut Queries}

\author{Yonggang Jiang\footnote{MPI-INF \ \href{}{\url{yjiang@mpi-inf.mpg.de}}} \and Danupon Nanongkai\footnote{MPI-INF \ \href{}{\url{danupon@gmail.com}}} \and Pachara Sawettamalya\footnote{Department of Computer Science, Princeton University. Supported by NSF CAREER award CCF-233994. \ \href{mailto:ps3122@princeton.edu}{\url{ps3122@princeton.edu}}}}

\date{}

\usepackage[backend=biber, style=alphabetic, doi=false, url=false, isbn=false, minalphanames = 3, maxalphanames = 4, maxbibnames=10]{biblatex}
\addbibresource{main.bib}

\begin{document}

\maketitle

% \dn{For the name, I think calling $n^5/3$ a barrier is too much. Because of this, I think the first one is more preferrable that others. }
% \dn{For the name, how about "Cut Queries" instead of "Queries"?}

\begin{abstract}

We study the problem of computing a minimum $s$--$t$ cut in an unweighted, undirected graph via \emph{cut queries}. In this model, the input graph is accessed through an oracle that, given a subset of vertices $S \subseteq V$, returns the size of the cut $(S, V \setminus S)$.

This line of work was initiated by Rubinstein, Schramm, and Weinberg (ITCS 2018), who gave a randomized algorithm that computes a minimum $s$--$t$ cut using $\widetilde{O}(n^{5/3})$ queries, thereby showing that one can avoid spending $\widetilde{\Theta}(n^2)$ queries required to learn the entire graph.\footnote{Throughout this work, we use $\widetilde{O}(\cdot)$, $\widetilde{\Theta}(\cdot)$, $\widetilde{\Omega}(\cdot)$, and $\widetilde{\omega}(\cdot)$ to hide suitable $\mathrm{polylog}(n)$ factors.} A recent result by Anand, Saranurak, and Wang (SODA 2025) also matched this upper bound via a deterministic algorithm based on blocking flows.

In this work, we present a new randomized algorithm that improves the cut-query complexity to $\widetilde{O}(n^{8/5})$. At the heart of our approach is a query-efficient subroutine that incrementally reveals the graph edge-by-edge while increasing the maximum $s$--$t$ flow in the learned subgraph at a rate faster than classical augmenting-path methods. Notably, our algorithm is simple, purely combinatorial, and can be naturally interpreted as a recursive greedy procedure.

As a further consequence, we obtain a \emph{deterministic} and \emph{combinatorial} two-party communication protocol for computing a minimum $s$--$t$ cut using $\widetilde{O}(n^{11/7})$ bits of communication. This improves upon the previous best bound of $\widetilde{O}(n^{5/3})$, which was obtained via reductions from the aforementioned cut-query algorithms. In parallel, it has been observed that an $\widetilde{O}(n^{3/2})$-bit randomized protocol can be achieved via continuous optimization techniques; however, these methods are fundamentally different from our combinatorial approach.

% \todo{I'm thinking maybe we don't put a citation \cite{vdBHY25}  here as it is not officially online yet? Maybe in a later version of this paper we can actually include a citation once it's online. I do put an actual citation after Thm 1.2 and that one we should keep.}

% \dn{I think we should mention $n^{1.5}$ non-combinatorial algorithm too.}\yj{non-combinatorial and randomized} \ps{Is there a citation for this work? Maybe we can mention it in the intro? }\dn{I'm talking about Hossein's thesis. If we don't talk about any previous work at all (i.e. delete the last sentence), then I'm fine with discussing this in the intro. But if keep the last sentence, not talking about $n^{1.5}$ feels like cheating to me. Besides Hossein's thesis, there's now another result by Georgia Tech group that will be submitted to SODA that has $n^{1.5}$ communication as an implication.}\yj{I can create the citation for Hesse's paper (with me and Jan), I think we definitely need to mention it in the previous work as the state of the art randomized complexity for min-cost flow} \ps{Writting-wise I find it quite tricky to slot this result into this paragraph without breaking the flow. I'll try regardless.}
\end{abstract}

\thispagestyle{empty}
\newpage
\tableofcontents
\pagenumbering{roman}
\newpage
\pagenumbering{arabic}

\section{Introduction}

% \ps{I took Danupon's advice and restructued this intro. The changes include (1) mentioning that $n^2$ is trivial, (2) group the model and results together, and (3) not to ``overstate'' our contribution, (4) bring up the IPM $n^{1.5}$ comm, and interpret our result as a combinatorial and/or deterministic approach. I also defer all discussions about global/weighted/directed min-cuts and SFM to the related work section (1.1) as I think it flows better this way.  I feel like this is how i like it the most, but let me know what you guys think.}

% \todo{Pachara: I write an open problem section towards the end (Section 6). Should we consider moving it to the intro as a section 1.2? Ideally I want to but I'm worried about the ``only first 10 pages will be read'' policy.}

The \emph{minimum $s$--$t$ cut} problem is a cornerstone of combinatorial optimization and graph algorithms. Formally, given an undirected, unweighted graph $G = (V, E)$ on $n$ vertices, along with two distinguished terminals $s, t \in V$, the goal is to compute an $s$--$t$ cut of the vertex set—that is, a partition $(S, T)$ with $s \in S$ and $t \in T$—that minimizes the number of edges crossing the cut, i.e., the quantity $|E(S,T)| := |E \cap (S \times T)|$. In this work, we study the complexity of computing a minimum $s$--$t$ cut in two computational models that restrict access to the input graph.

\paragraph{Cut-query.} In this setting, we may query an oracle with a partition $(S, T)$  of $V$ (i.e. \emph{a cut}) to obtain the cut value $|E(S, T)|$. Our goal is to compute a minimum $s$-$t$ cut of $G$ while minimizing the total number of queries made to the oracle. This problem was first introduced by the work of Rubinstein, Schramm, and Weinberg \cite{RubinsteinSW18} in an effort to study submodular function minimization problem.\footnote{The problem was in fact studied earlier by Cunningham~\cite{Cunningham85} who observed that the problem can be solved in $\widetilde{O}(n^2)$ queries even on weighted directed graphs. Note, however, that the $\widetilde{O}(n^2)$ query complexity is trivial for undirected graphs.} While the ``trivial'' $O(n^2)$ queries suffices to \emph{learn} the whole undirected graph, the work of \cite{RubinsteinSW18} were the first to break this quadratic barrier via an $\widetilde{O}(n^{5/3})$-query algorithm utilizing the randomized construction of a cut sparsifier. More recently, Anand, Saranurak, and Wang \cite{AnandSW25} developed a new deterministic algorithm, matching the same $\widetilde{O}(n^{5/3})$ query complexity. Notably, their approach is purely combinatorial via an implementation of Dinitz's blocking flow algorithm \cite{Dinitz70}, thereby is vastly different from the techniques deployed in \cite{RubinsteinSW18}. Nevertheless, both algorithms circumvent the necessity to learn the whole graph which information-theoretically require $\widetilde{\Omega}(n^2)$ queries to the oracle when the graph is dense.

% \paragraph{Main result \#1: An improved query algorithm for minimum $s$--$t$ cuts.}\dn{One alternative is to remove this bold part and just write ``Theorem 1.1 (Main result \#1)''. In fact, I think it's a bit better since the bold part here seems to interrupt the discussion that continues from the previous paragraph.} 

The works of Rubinstein, Schramm, and Weinberg \cite{RubinsteinSW18} and Anand, Saranurak, and Wang \cite{AnandSW25} approach the $s$--$t$ minimum cut problem from two completely orthogonal perspectives; yet both arrive at algorithms with the same query complexity of $\tO(n^{5/3})$. This convergence naturally raises the question: \emph{Do the current $\tO(n^{5/3})$-query upper bounds represent a latent barrier for computing a minimum $s$-$t$ cut?} In this work, we show that we can indeed go below this bound.

\begin{restatable}[Main result \#1]{them}{mincutcq}There is a randomized algorithm that computes a minimum $s$-$t$ cut of an undirected unweighted graph $G$. The algorithm always makes $\widetilde{O}(n^{8/5})$ cut queries to $G$ and succeeds with high probability.
\label{thm:min_cut_cq}
\end{restatable}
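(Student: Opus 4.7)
The plan is to maintain a learned subgraph $H \subseteq G$ together with a flow $f$ in $H$, and to augment $f$ iteratively while discovering new edges through cut queries. The key component is a subroutine that grows the maximum flow of $H$ at an amortized cost per unit of flow strictly below the $\widetilde{O}(n)$ cost of a single augmenting path; this is what allows us to beat the $\widetilde{O}(n^{5/3})$ bound of~\cite{AnandSW25}, whose Dinitz-style implementation performs $n^{2/3}$ phases of $\widetilde{O}(n)$ queries each.

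\textbf{Basic primitives.} By recursive binary search on cut queries, all edges across any known cut $(S, V \setminus S)$ can be enumerated in $\widetilde{O}(|E_G(S,V\setminus S)| + 1)$ queries. Moreover, if $f$ is a flow in $H$ and $S^*$ is the source side of the residual graph $H_f$, then $(S^*,V\setminus S^*)$ is a certificate cut: a single query to $|E_G(S^*, V \setminus S^*)|$ detects optimality, and otherwise one missing edge across $S^*$ can be revealed at $\widetilde{O}(1)$ cost. However, adding a single edge across $S^*$ need not create an augmenting path, so this alone does not beat the standard augmenting-path bound.

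\textbf{Recursive augmentation.} The heart of the algorithm is a recursive subroutine that, given a target flow increment $\Delta$, reveals edges and augments $f$ by $\Delta$ units using few queries on average. A Dinitz-style template views the residual as a layered graph: one pass through $\ell$ layers yields a blocking flow of value at least one (and usually more). The recursion comes from partitioning the layer range into two halves and invoking the subroutine first on the left half and then on the residual of the right half, so that edges revealed in earlier recursive calls are reused in later ones. Combined with the bulk edge-enumeration primitive, this should give a sub-$\widetilde{O}(n)$ amortized cost per unit of flow growth, which is precisely the ``faster than augmenting-path'' rate promised in the introduction.

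\textbf{Putting it together.} The outer algorithm proceeds in phases, each of which strictly increases the shortest $s$--$t$ distance in the residual and invokes the subroutine to find a blocking flow in the current layered graph. Phases with small residual distance $\ell$ are handled by batched discovery at cost $\widetilde{O}(n)$ per phase, while the total flow remaining once $\ell$ exceeds a threshold is bounded by the classical Karzanov-type estimate $O((n/\ell)^2)$. Choosing the crossover parameter to equalise these two costs is what yields the final $\widetilde{O}(n^{8/5})$ bound. The main obstacle will be the amortized accounting for the recursive augmentation: one must verify that edges discovered across recursive invocations are reused without duplication, that the polylogarithmic overhead from binary search and recursion depth does not wipe out the savings, and that correctness of the flow and of the certificate cut at termination is preserved through the recursion.
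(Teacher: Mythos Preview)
Your proposal has a genuine gap: the central claim that ``recursive augmentation'' over layer ranges yields sub-$\widetilde{O}(n)$ amortized cost per unit of flow is never substantiated, and the surrounding Dinitz/Karzanov framework you describe only recovers $\widetilde{O}(n^{5/3})$. Concretely, if the first $L$ phases each cost $\widetilde{O}(n)$ and the residual flow after distance $L$ is $O((n/L)^2)$, then balancing $nL$ against $n\cdot (n/L)^2$ gives $L=n^{2/3}$ and total $\widetilde{O}(n^{5/3})$---exactly the Anand--Saranurak--Wang bound. To reach $n^{8/5}$ you would need the residual flow to be handled at roughly $n^{4/5}$ queries per unit, and nothing in your recursion sketch (splitting the layer range, reusing edges across halves) explains how to achieve that rate.

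The paper's route is entirely different and does not use blocking flows or layer distances. Step~1 is a \emph{witness argument}: a witness is a small $s$--$t$ cut $(S,T)$ together with a padded set $Y\supseteq E_H(S,T)$ of size $f-k'-1$. One defines a potential $\hat z_v=\Pr[v\in T]$ over a uniformly random witness and shows (via a flow-cover bound $|Q|=O(n\sqrt{k})$ on the augmenting edges) that some edge $(u,v)\in G\setminus H$ has $|\hat z_u-\hat z_v|=\Omega(\sqrt{k}/n)$; a short $\mathsf{FindLongEdge}$ subroutine locates such an edge in $O(\log n)$ cut queries. Adding that edge kills an $\Omega(\sqrt{k}/n)$ fraction of all witnesses, so $\widetilde{O}(n^2/\sqrt{k})$ edges suffice to halve the flow deficit; summing geometrically gives the cost $\widetilde{O}(n^2/\sqrt{\Delta}+n^3/\Delta^2)$ to reach deficit $\Delta$. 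Step~2 is \emph{not} a Karzanov tail bound but a modified RSW algorithm: build a sparsifier of the residual mixed graph, contract highly connected components, and learn the contracted graph, at cost $\widetilde{O}(n^{4/3}\Delta^{1/3})$. Balancing at $\Delta=n^{4/5}$ yields $\widetilde{O}(n^{8/5})$. The randomness (and the reason the result is stated as randomized) enters through the sparsifier construction in Step~2, which your deterministic Dinitz-style outline omits entirely.
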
 

In addition, our algorithm can be extended to enumerate \emph{all} minimum $s$--$t$ cuts within the same query complexity, although we do not make this extension explicit in our later analysis. We also highlight that our algorithms are simple, purely combinatorial, and can be naturally interpreted as a recursive greedy procedure.

\paragraph{Two-player communication.} In this setting, input graph $G = (V,E)$ is edge-partitioned across two players, namely Alice has $G_A = (V, E_A)$, and Bob has $G_B = (V, E_B)$, with $E = E_A \cup E_B$. The players are allowed to exchange a sequence of messages (via a \emph{protocol}) with the goal of computing a minimum $s$-$t$ cut of $G$ while minimizing the total amount of communication between the players.

In fact, any upper bound in the cut-query model translates to an upper bound in the communication setting, with an additional $O(\log n)$ factor. To see this, observe that any cut query can be simulated with $O(\log n)$ bits of communication: Alice locally computes $|E_A(S, T)|$ and sends the result to Bob, who then computes $|E_B(S, T)|$ and sends it back to Alice. The players can thus recover the cut value by computing $|E(S, T)| = |E_A(S, T)| + |E_B(S, T)|$. Since both $|E_A(S, T)|$ and $|E_B(S, T)|$ are bounded by $O(n^2)$, they can each be represented with $O(\log n)$ bits, making the communication cost per query $O(\log n)$. 

As an immediate corollary, simulating the deterministic $\widetilde{O}(n^{5/3})$-query algorithm of Anand, Saranurak, and Wang~\cite{AnandSW25} yields a deterministic communication protocol with total cost $\widetilde{O}(n^{5/3})$ bits. Furthermore, although the algorithm of Rubinstein, Schramm, and Weinberg~\cite{RubinsteinSW18} is originally randomized in the cut-query setting, it can be derandomized in the communication model, resulting in the same $\widetilde{O}(n^{5/3})$ deterministic upper bound. To our knowledge, prior to this work, no subquadratic deterministic or combinatorial protocol for computing a minimum $s$--$t$ cut is known beyond simulating one of these cut-query algorithms.

As a by-product of \Cref{thm:min_cut_cq}, our $\tO(n^{1.6})$-query algorithm immediately yields a two-player communication protocol with total communication cost of $\tO(n^{1.6})$ bits. Furthermore, through a careful implementation of our algorithm, we obtain a \emph{deterministic} communication protocol with the improved complexity of $\tO(n^{1.58})$ bits. This modest improvement arises from exploiting some key distinctions between the two models, thereby allowing us to realize the power of communication setting more truly.

% \yj{Is this $n^{1.6}$ randomized or deterministic? I thought the important thing to stress here is that by exploring the power of communication setting we can get rid of randomness}. \ps{The $n^{1.6}$ cut query is randomized so if we just blindly simulate it in CC it'll be randomized. The next sentence clarifies it can be actually be made deterministic (i've just put it in italics for emphasis)}

% \footnote{We define the value of the $s$-$t$ min-cut of $G$ as $\nu_{s,t}(G)$. In this paper, we always fix a pair of terminals $(s, t)$, so without further specification, we omit the subscripts and simply write it as $\nu(G)$. Since our graph is unweighted, we always have $\nu(G) \leq n-1$.}

% We summarize our contributions via the following theorems.\footnote{We define the value of the $s$-$t$ min-cut of $G$ as $\nu_{s,t}(G)$. In this paper, we always fix a pair of terminals $(s, t)$, so without further specification, we omit the subscripts and simply write it as $\nu(G)$.}

\begin{restatable}[Main result \#2]{them}{mincutcomm} There is a deterministic two-player communication protocol that computes a minimum $s$-$t$ cut of an undirected unweighted graph $G$. The protocol uses $\widetilde{O}(n \hspace{.15mm} \nu^{4/7})$ bits of communication where $\nu$ denotes the value of $G$'s minimum $s$-$t$ cut, and always succeeds.
\label{thm:min_cut_comm}
\end{restatable}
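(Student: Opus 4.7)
The plan is to re-implement the $\widetilde{O}(n^{8/5})$-cut-query algorithm underlying \Cref{thm:min_cut_cq} inside the two-player communication model, while (a) parameterizing its complexity more tightly in terms of $\nu$ rather than $n$, and (b) exploiting features of the communication setting that cannot be mimicked by cut queries alone. A direct simulation already gives $\widetilde{O}(n^{8/5})$ bits, so the further savings must come from both a sharper analysis and a new implementation of certain internal subroutines. The two structural advantages I would exploit are that each player knows her private edges (so ``find an edge inside a given cut'' becomes one $O(\log n)$-bit message rather than a $O(\log^2 n)$-bit binary search) and that each player can transmit a global piece of data---such as a BFS layering, a reachability labeling, or a deterministic cut sparsifier---using only $\widetilde{O}(n)$ bits.

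First, I would classify the cut queries issued by the recursive edge-revealing procedure of \Cref{thm:min_cut_cq} into two groups: \emph{extraction} queries that locate an augmenting edge inside a known cut, and \emph{structural} queries that describe the current residual graph. Extraction queries translate directly into $O(\log n)$-bit messages in the communication model since the owner of the edge can simply name it. Structural queries, however, can be \emph{batched}: a single $\widetilde{O}(n)$-bit transcript of a distance labeling in the current residual graph suffices for many extractions within the same phase. This batching is the main quantitative lever that the cut-query model does not support.

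Second, I would re-balance the parameters of the recursive greedy procedure with the new per-operation costs. In the cut-query analysis, balancing per-phase query cost against the number of phases gives a trade-off that optimizes at $\nu^{3/5}$ phases and yields $\widetilde{O}(n^{8/5})$ under the worst case $\nu = \Theta(n)$. When each phase instead costs $\widetilde{O}(n)$ bits (rather than $\widetilde{O}(n^{?})$ queries of $O(\log n)$ bits each), the new optimum is reached by arranging each phase to augment the flow by $\widetilde{\Theta}(\nu^{3/7})$ units, so that only $\widetilde{O}(\nu^{4/7})$ phases are required and the total communication is $\widetilde{O}(n \cdot \nu^{4/7})$. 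Verifying this requires an inductive accounting of the work inside each recursive call together with a potential argument on the residual flow.

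Third, I would handle derandomization. The randomness in \Cref{thm:min_cut_cq} is used chiefly for a cut-sparsification step based on random sampling; in the communication model the same role can be played by a deterministic cut sparsifier enumerated by either player and sent as a single $\widetilde{O}(n)$-bit message. Because both players see the full transcript and the sparsifier construction is deterministic, no shared randomness is needed and the protocol always succeeds. The main obstacle will be the rebalancing in the second step: one must verify that the gains from $\widetilde{O}(n)$-bit batched structural messages compound correctly across the recursion, and that the resulting phase-length schedule does actually achieve the $\nu^{4/7}$ exponent rather than a worse balance forced by hidden polylogarithmic overheads in the sparsifier.
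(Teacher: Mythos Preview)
Your plan does not identify the two concrete mechanisms that actually drive the $\widetilde{O}(n\nu^{4/7})$ bound, and the ``batching'' heuristic you propose in their place is not a workable substitute.

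First, the cut-query algorithm of \Cref{thm:min_cut_cq} issues essentially one kind of query inside the greedy edge-revealing loop: $\FindEdge$-style queries, used via the $\FLE$ subroutine. All of the ``structural'' data you mention---in particular the potentials $\hat z_v$---are computed locally from the explicit subgraph $H$ and cost no queries at all. So the extraction/structural dichotomy you propose does not carve the algorithm at a joint, and there is nothing to batch: the entire cost is already edge-extraction. Your per-phase accounting (``each phase costs $\widetilde{O}(n)$ bits and augments by $\widetilde{\Theta}(\nu^{3/7})$'') therefore has no mechanism behind it; augmenting by $\nu^{3/7}$ units still requires naming at least that many edges, and a single $\widetilde{O}(n)$-bit distance labeling from one player does not let the other player locate those edges without further interaction.

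Second, the paper's improvement comes from two ideas you do not mention. For Step~1 (find a large flow), the communication gain over cut-query is obtained by switching from witnesses of $H$ to \emph{residual} witnesses of $H$ with respect to a fixed flow $F$ of value $f-k$, and then proving that the number of such witnesses is at most $n^{O(f)}$ rather than $n^{O(n)}$; this sharper count (which uses a preprocessing step making $H$ connected from $s$ and a path-by-path labeling argument along $F$) cuts the number of greedy iterations from $\widetilde{O}(n^2/\sqrt{k})$ to $\widetilde{O}(nf\sqrt{k}/k')$. For Step~2 (RSW on the residual), the gain comes from a $k$-forest-packing preprocessing that forces each player's effective graph to have average degree $O(\nu)$, so the deterministic sparsifier of \Cref{lem:sparsifier_small_weights} has weights $\widetilde{O}(\eps^2\nu)$ rather than $\widetilde{O}(\eps^2 n)$, yielding $\widetilde{O}(n\,\nu(\mathcal G)^{2/3})$ for RSW. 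The final bound then arises from balancing these two \emph{asymmetric} phases at $\Delta=\Theta(\nu^{6/7})$, giving
\[
\widetilde{O}\Bigl(\tfrac{n\nu}{\sqrt{\Delta}}+\tfrac{n\nu^2}{\Delta^2}+n\Delta^{2/3}\Bigr)=\widetilde{O}(n\nu^{4/7}),
\]
not from a uniform sequence of $\widetilde{O}(n)$-bit rounds. You correctly anticipate that derandomization goes through deterministic sparsifiers, but that alone does not produce the exponent.
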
 

% \dn{I feel that $\nu(G)$ is confusing at this point. Maybe define it first as $\nu(G; s, t)$ here, or simply $\nu$, but later say that we just write $\nu(G)$ when $s$ and $t$ are clear from the context, and justify why you need this short version (i.e. that you will have many graphs over the same set of verices).}

In parallel, it is known in the literature that one can achieve an $\tO(n^{1.5})$-bit protocol via a randomized non-combinatorial approach based on continuous optimizations—e.g., as observed by  \cite{HY25}.\footnote{The existence of a $\tO(n^{1.5})$-bit protocol was brought to our attention through private communication with experts in the area. To respect the double-blind reviewing policy, we refrain from disclosing their identities.}  Nevertheless, our result of \Cref{thm:min_cut_comm} serves as the first \emph{deterministic} and \emph{combinatorial} communication protocol for computing a minimum $s$--$t$ cut that surpasses the $\tO(n^{5/3})$ bounds.

% In fact, our algorithms for both \Cref{thm:min_cut_cq} and \Cref{thm:min_cut_comm} can be extended to enumerate \emph{all} minimum $s$--$t$ cuts within the same query and communication complexities, although we do not make this extension explicit in our analysis. We also highlight that our algorithms are simple, purely combinatorial, and can be naturally interpreted as a recursive greedy procedure.

\subsection{Related Works}

Although not the focus of our work, we briefly highlight related results on the cut-query complexity of a closely related problem: computing a \emph{global} minimum cut. The seminal work of Rubinstein, Schramm, and Weinberg~\cite{RubinsteinSW18} also initiated the study of this problem for undirected, unweighted graphs and gave a randomized $\widetilde{O}(n)$-query algorithm, which spurred a sequence of follow-up works. Apers et al.~\cite{ApersEGLMN22} subsequently removed all polylogarithmic factors, obtaining the first truly linear $O(n)$-query algorithm. More recently, Anand, Saranurak, and Wang~\cite{AnandSW25} developed the first \emph{deterministic} algorithm for this problem, using $\widetilde{O}(n^{5/3})$ queries via a novel expander decomposition technique. Kenneth-Mordoch and Krauthgamer~\cite{kenneth-mordoch2025} explored round-query trade-offs and showed that a global minimum cut can be computed in only two rounds using $\widetilde{O}(n^{4/3})$ queries. For weighted undirected graphs, Mukhopadhyay and Nanongkai~\cite{MukhopadhyayN20} achieved a near-linear $\widetilde{O}(n)$-query upper bound. Finally, the algorithm of \cite{RubinsteinSW18} was shown to be implementable in the semi-streaming model using only two passes as observed by Assadi, Chen, and Khanna~\cite{AssadiCK19}, and was later simplified by Assadi and Dudeja~\cite{AssadiD21}.

\paragraph{Connections to Submodular Function Minimization.}  
A set function $f: 2^{[n]} \to \mathbb{R}$ is called \emph{submodular} if it satisfies a diminishing returns property: for every $A \subseteq B \subseteq [n]$ and $x \notin B$, it holds that $f(A \cup \{x\}) - f(A) \ge f(B \cup \{x\}) - f(B)$. The goal of the submodular function minimization (SFM) problem is to find a minimizer of $f$, given query access to a value oracle, while minimizing the number of queries. After decades of work~\cite{GrotschelLS81, Cunningham85, Queyranne98, FleischerI00, Iwata03, Vygen03, Orlin07, IwataO09, LeeSW15, AxelrodLS20, Jiang21}, the state-of-the-art upper bound is $O(n^2 \log n)$, due to Jiang~\cite{Jiang23}. On the other hand, the current best deterministic lower bound remains at $\Omega(n \log n)$, established by Chakrabarty, Graur, Jiang, and Sidford~\cite{ChakrabartyGJS22}, following a sequence of earlier linear-query lower bounds~\cite{Harvey08, ChakrabartyLSW17, GraurPRW20}. We also highlight another recent work by Chakrabarty, Graur, Jiang, and Sidford \cite{ChakrabartyGJS23} which investigates the parallel complexity of SFM and may be of independent interest.

There is a well-known connection between SFM and the problem of computing minimum cuts in graphs. In fact, the minimum cut function---whether global or $s$-$t$, weighted or unweighted, directed or undirected---forms a canonical example of a submodular function. This connection has long motivated efforts to reduce SFM to specific cut problems. As emphasized in Rubinstein, Schramm, and Weinberg~\cite{RubinsteinSW18}, one major motivation for studying the cut-query complexity of minimum cuts was to use them as a proxy for proving quadratic lower bounds for SFM.

However, this approach faces significant limitations. The randomized $\widetilde{O}(n)$-query algorithm of~\cite{RubinsteinSW18} and the subsequent truly linear $O(n)$-query algorithm of Apers et al.~\cite{ApersEGLMN22} exclude any $\omega(n)$ lower bound for SFM via reductions from minimum cuts in unweighted, undirected graphs. Similarly, the near-linear upper bound of Mukhopadhyay and Nanongkai~\cite{MukhopadhyayN20} rules out $\widetilde{\omega}(n)$ lower bounds from weighted graphs. Our own result (\Cref{thm:min_cut_cq}) also rules out $\widetilde{\omega}(n^{8/5})$ lower bounds from minimum $s$-$t$ cuts in unweighted graphs. Nevertheless, even a modest improvement---say, proving an $\Omega(n^{1.01})$ lower bound for SFM via minimum $s$-$t$ cuts---would be a notable breakthrough.

% Perhaps the most promising path toward proving quadratic lower bounds for SFM lies in the minimum cut problem on \emph{directed} graphs. The common techniques and subroutines that apply to designing cut query algorithms in undirected graphs often break down in the directed setting. At the extreme, even learning the underlying graph is impossible with only cut-query access: if a directed graph contains a cycle, reversing the orientation of the cycle leaves all directed cut values unchanged, leaving the two graphs indistinguishable via any number of cut queries. To  the best of our knowledge, no subquadratic algorithm is known for computing minimum cuts in directed graphs.\footnote{Even for \emph{reachability}, perhaps the simplest problem in directed graphs, no subquadratic-query upper bound is known.} Therefore, any progress---either an $\omega(n)$-query lower bound or an $o(n^2)$-query upper bound---would represent a major advance.

Perhaps the most promising path toward proving quadratic lower bounds for SFM lies in the minimum cut problems on \emph{directed} graphs. The common techniques and subroutines that apply to designing cut query algorithms in undirected graphs often break down in the directed setting. At the extreme, even learning the underlying graph is impossible with only cut-query access: if a directed graph contains a cycle, reversing the orientation of the cycle leaves all directed cut values unchanged, leaving the two graphs indistinguishable via any number of cut queries. To this end, while we cannot hope to recover the entire directed input graph using only cut queries, Cunningham~\cite{Cunningham85} observed that $\widetilde{O}(n^2)$ queries suffice to compute minimum cuts in directed graphs, even when the graph is weighted. To the best of our knowledge, this quadratic upper bound has stood ever since, and not even the slightest improvement to an $n^{2 - o(1)}$ upper bound is known.\footnote{Even for \emph{reachability}, perhaps the simplest problem in directed graphs, no subquadratic-query upper bound is known.} Therefore, any progress---either an $\omega(n)$-query lower bound or an $o(n^2)$-query upper bound---would represent a major advance.

\subsection{Organization}
The remainder of this work is organized as follows. In \Cref{sec:prelim}, we introduce the notation used throughout the paper. Readers familiar with graph-theoretic concepts, the cut-query model, and the two-party communication model may skip directly to \Cref{sec:setup}, where we provide a high-level overview and sketch the main ideas behind our algorithm. The implementation and analysis in the cut-query model are presented in a combination of \Cref{sec:RSW} and \Cref{sec:large_flow_cq}. The implementation and analysis in the two-player communication model are deferred to \Cref{sec:large_flow_comm}. We conclude with a discussion of open problems in \Cref{sec:open_problems}. Finally, \Cref{appendix:flow_cover_proof} and \Cref{appendix:optimal_sparsifier} contain the missing proofs from earlier sections.

% \newpage

\section{Notations and Preliminaries}
\label{sec:prelim}

In this section, we introduce the terminologies and computation models that will be used throughout the paper. Due to brevity, some of the proofs are deferred to \Cref{appendix:flow_cover_proof}.

% \footnote{This incldues the proof of \Cref{lem:flow_cover}, \Cref{cor:approx_nu_cq}, \Cref{prop:ISS_cq}, \Cref{lem:learn_graph_cq}, \Cref{lem:sparsifier_comm}, and \Cref{cor:approx_nu_comm}
% }

\paragraph{Graph.} Let $G = (V, E)$ denote a graph with a designated terminals $s$ and $t$.  Unless stated otherwise, a graph is assumed to be undirected and unweighted. For two graphs $G = (V, E)$ and $G' = (V,E')$ on the same vertex set, we denote $G \cup G' = (V, E \cup E')$ as the graph obtained by combining the edges of $G$ and $G'$. We also write $G \cup e$ for the addition of a single edge $e$ to $G$.

In this work, we also study graphs whose edge sets comprise both undirected unweighted edge (i.e. unit-weight) and directed weighted edges. We call a graph of this type a \emph{mixed graph}, denoted by $\mathcal{G} = (G, F)$ where $G$ is undirected unweighted graph, and $F$ is a directed graph with bounded weights. For a directed edge, we write $(a, b)$ to represent an edge from $a$ to $b$. For an undirected edge, we may write it as either $(a, b)$ or $(b, a)$ interchangeably, with no implied direction.

For a weighted graph $G = (V, E)$, we denote $w_G: E \rightarrow \mathbb{Z}_{\geq 0}$ as the weight function that maps an edge of $G$ to its positive integral weight. The total weight of any edge set $E' \subseteq E$ is given by $w_G(E') := \sum_{e \in E'} w_G(e)$. Also denote $w(G)$ or $|G|$ to be the total weight of edges in the graph.

Let $G = (V, E_G)$ and $H = (V, E_H)$ be two graphs on the same set of vertices which may be directed or weighted. We say $H$ is a \emph{subgraph} of $G$, denoted $H \preceq G$, iff $E_H \subseteq E_G$, and $w_H(e) \leq w_G(e)$ for any edge $e \in E_H$.  When both $G$ and $H$ are undirected unweighted, the second condition is not necessary.

For any graph $G = (V,E)$ and $U \subseteq V$, denote $G[U]$ to be an induced subgraph of $G$ on vertex set $U$. For a partition $(V_1,...,V_z)$ of $V$, denote $G\langle V_1,..,V_z\rangle$ to be a (weighted) graph over $z$ super-vertices resulted from contracting each vertex set $V_i$ into one super-vertex. Equivalently, we can view $G\langle V_1,..,V_z\rangle$ as $G$ with all edges within each $G[V_i]$ removed.

\paragraph{Flow and Maximum Flow.} Let $G = (V, E)$ be a (possibly directed weighted) graph with a designated source $s$ and sink $t$. A \emph{unit of $s$-$t$ flow} (or simply a \emph{unit flow}) is a directed path from $s$ to $t$. We say a directed graph $F$ is a \emph{flow graph of value $f$} if and only if $F$ can be exactly partitioned into $f$ unit flows. We denote $\maxflow(G)$ as the value of $G$'s largest flow subgraph. When $G$ is undirected unweighted, $\maxflow(G)$ corresponds to the maximum number of edge-disjoint $s$-$t$ paths that can be packed into $G$.

\begin{lemma}[Lemma 5.4 of \cite{RubinsteinSW18}]
Let $F = (V, E)$ be a non-circular flow graph of value $f$ with $n$ vertices. Let $W$ be the maximum weight of edges in $F$.  Then, we have $w(F) \leq O(n\sqrt{fW})$.  
\label{lemma:flow_cover_weighted}
\end{lemma}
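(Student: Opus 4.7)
The plan is to combine two ingredients: (i) a BFS-depth bound for non-circular flow graphs, and (ii) a greedy shortest-path flow decomposition that reveals unit flows one at a time.

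First I would establish the following sub-lemma: any non-circular flow graph $F'$ of value $f'$ on at most $n$ vertices with maximum edge weight $W'$ satisfies $D(F') \le n\sqrt{W'/f'}$, where $D(F')$ denotes the shortest $s$-$t$ distance in $F'$. Let $L_0,L_1,\ldots,L_{D(F')}$ be the BFS layers of $F'$ from $s$. Because $F'$ is acyclic, every edge $(u,v)\in F'$ satisfies $d_{F'}(v)\le d_{F'}(u)+1$, and so for any $i$ the cut $C_i:=\{(u,v)\in F':\,u\in L_{\le i},\,v\in L_{>i}\}$ consists \emph{only} of edges from $L_i$ to $L_{i+1}$. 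Max-flow min-cut gives $w(C_i)\ge f'$, while the bipartite bound gives $w(C_i)\le W'|L_i||L_{i+1}|$, so $|L_i||L_{i+1}|\ge f'/W'$ and $|L_i|+|L_{i+1}|\ge 2\sqrt{f'/W'}$ by AM-GM. Summing over $i=0,\ldots,D(F')-1$ bounds the left-hand side by $2n$, yielding $D(F')\le n\sqrt{W'/f'}$.

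Using this sub-lemma, I would set $F_1:=F$ and, for $i=1,\ldots,f$, extract a shortest $s$-$t$ path $P_i$ in $F_i$ of length $\ell_i$, and form $F_{i+1}$ by subtracting one unit of flow along $P_i$ (decrement $w(e)$ for every $e\in P_i$ and delete any edge whose weight hits $0$). A quick induction shows that $F_{i+1}$ remains a non-circular flow graph of value $f-i$ on at most $n$ vertices with maximum weight at most $W$. The sub-lemma applied to $F_i$ therefore gives $\ell_i=D(F_i)\le n\sqrt{W/(f-i+1)}$. Since every edge $e\in F$ is consumed by exactly $w(e)$ of the extracted paths, we have $w(F)=\sum_{i=1}^f \ell_i$, and the standard estimate $\sum_{j=1}^f j^{-1/2}\le 2\sqrt f$ closes the computation:
\[
w(F)\;\le\;n\sqrt{W}\sum_{i=1}^f\frac{1}{\sqrt{f-i+1}}\;=\;n\sqrt{W}\sum_{j=1}^f\frac{1}{\sqrt{j}}\;\le\;2n\sqrt{fW}.
\]

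The main subtlety is the sub-lemma—specifically, verifying that each cut $C_i$ really contains only edges from $L_i$ to $L_{i+1}$, which is what enables the clean bipartite bound $w(C_i)\le W'|L_i||L_{i+1}|$. This is precisely where non-circularity is used: in a graph with directed cycles, edges could cross $C_i$ in large ``jumps'' or even backward, breaking the count. Once the sub-lemma is in place, the greedy decomposition is essentially routine.
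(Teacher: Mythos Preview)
Your proof is correct. The paper itself does not prove this lemma---it cites it from \cite{RubinsteinSW18}---so there is no in-paper argument to compare against directly. The paper does, however, prove the unweighted analogue (\Cref{lem:flow_cover}) in \Cref{appendix:flow_cover_proof}, and that proof takes a genuinely different route: it fixes a topological ordering $\phi$ of the DAG and double-counts $X=\sum_{e} w_F(e)\,\phi(e)$, bounding it above by $nf$ via telescoping along each of the $f$ flow paths, and below by $|E|^2/(2n)$ via the observation that the outgoing $\phi$-increments at each vertex are distinct positive integers together with Cauchy--Schwarz. That argument is slick for the unweighted count $|E|$ but does not extend cleanly to the weighted bound $w(F)\le O(n\sqrt{fW})$ (plugging in $w(F)\le W\cdot|E|$ only yields $O(Wn\sqrt f)$). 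Your BFS-layer depth bound plus greedy shortest-path extraction handles the weight $W$ naturally and is closer in spirit to classical blocking-flow analyses; it also specializes to recover \Cref{lem:flow_cover} by setting $W=1$.

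One small correction to your closing commentary: the fact that every edge of $C_i$ runs from $L_i$ to $L_{i+1}$ follows purely from the BFS inequality $d_{F'}(v)\le d_{F'}(u)+1$, which holds in \emph{any} directed graph, acyclic or not; so non-circularity is not actually what drives the sub-lemma. Where you genuinely use non-circularity is in the identity $w(F)=\sum_{i=1}^f \ell_i$: acyclicity guarantees that the residual $F_{f+1}$, being an acyclic flow of value $0$, has no edges at all, so the $f$ extracted paths account for the entire weight. In a circular flow graph the extraction would still terminate after $f$ paths, but leftover weight could sit in cycles and the equality would fail.
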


The following lemma bounds the number of edges in a flow graph.

\begin{restatable}{lem}{flowcover} 
Let $F = (V, E)$ be a non-circular flow graph of value $f$ with $n$ vertices. Then, $F$ contains $|E| \leq O(n\sqrt{f})$ edges.  
\label{lem:flow_cover}
\end{restatable}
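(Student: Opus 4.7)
The plan is to derive this as an immediate corollary of the already-stated Lemma~\ref{lemma:flow_cover_weighted}. I would regard the unweighted directed graph $F$ as a weighted directed graph by assigning unit weight $w_F(e) = 1$ to every edge $e \in E$. Under this re-interpretation, the definition of a ``flow graph of value $f$''---a decomposition of $F$ into $f$ unit $s$-$t$ flows (directed paths)---coincides exactly with the original unweighted requirement, since an edge of unit weight can be used by at most one path in the partition, so the decomposition is just $f$ edge-disjoint $s$-$t$ paths covering $E$.

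Once this identification is in place, the rest is bookkeeping: the maximum edge weight is $W = 1$, and the total edge weight equals $w(F) = \sum_{e \in E} 1 = |E|$. Substituting into the weighted bound of Lemma~\ref{lemma:flow_cover_weighted} gives
\[
|E| \;=\; w(F) \;\leq\; O\bigl(n\sqrt{fW}\bigr) \;=\; O\bigl(n\sqrt{f}\bigr),
\]
which is the claim.

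The only minor item to verify is that the non-circularity hypothesis transfers cleanly under the reinterpretation, which it does, since we are not modifying the underlying directed graph, only declaring its weights. I don't expect any genuine obstacle here: this lemma is effectively a packaging corollary of the weighted version, stated separately because the downstream sections work primarily with unweighted flow subgraphs and invoke the $O(n\sqrt{f})$ bound on $|E|$ directly.
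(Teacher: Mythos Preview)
Your proposal is correct: viewing $F$ as a unit-weighted graph with $W=1$ and $w(F)=|E|$, Lemma~\ref{lemma:flow_cover_weighted} immediately yields $|E|=w(F)\le O(n\sqrt{fW})=O(n\sqrt{f})$. The non-circularity and flow-decomposition hypotheses carry over verbatim, so there is no gap.

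The paper, however, does not argue this way. Instead of invoking the weighted bound as a black box, it gives a self-contained direct proof: fix a topological order $\phi:V\to[n]$ (which exists since $F$ is acyclic), and double-count the quantity $X=\sum_{e\in E}\phi(e)$ where $\phi((a,b)):=\phi(b)-\phi(a)$. Summing along each of the $f$ flow paths gives $X\le nf$ by telescoping; summing over out-neighborhoods and using that the values $\phi(u)-\phi(v)$ for $u\in N(v)$ are distinct positive integers gives $X\ge \frac{1}{2n}|E|^2$ via Cauchy--Schwarz. Combining the two bounds yields $|E|\le O(n\sqrt{f})$.

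Your route is shorter and perfectly legitimate given that Lemma~\ref{lemma:flow_cover_weighted} is already stated in the paper (cited from \cite{RubinsteinSW18}); the paper's route has the advantage of being fully self-contained and of exposing the underlying combinatorial mechanism, which is essentially the same argument that proves the weighted version anyway.
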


\paragraph{Residual Graph.}Given an undirected unweighted graph $G$ and a flow subgraph $F \preceq G$, the \emph{residual graph} of $G$ induced by $F$, denoted $G_F$, is obtained by replacing each edge in $F$ with its reverse edge, assigned a weight of 2. Notably, $G_F$ is a mixed graph consisting of the undirected unweighted part $G \setminus F$, and a directed weighted part which is a reversal of $F$ with weight 2 on each edge (see \Cref{fig:residual} for an illustration).
% Any $s$-$t$ directed path in the residual graph is called an \emph{augmenting path}.

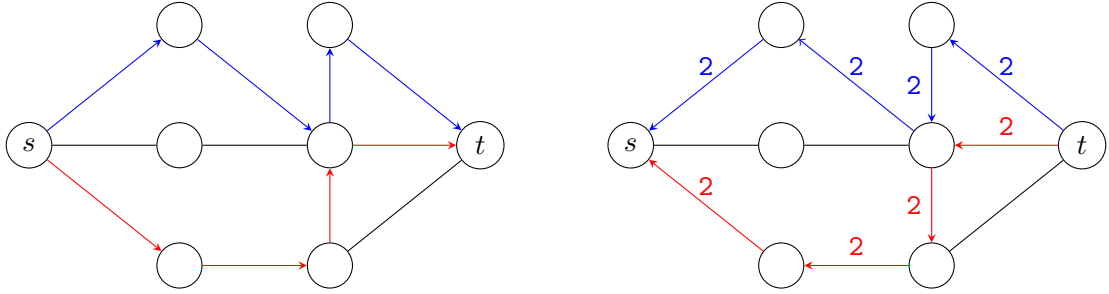
\begin{figure}[h]
    \centering
    \begin{tikzpicture}
    % Nodes in set A

    \node[draw, circle, minimum size = 6mm] (s) at (-2, -0.8) {$s$};
    
    \node[draw, circle, minimum size = 6mm] (A1) at (0, 0.8) {};
    \node[draw, circle, minimum size = 6mm] (A2) at (0, -0.8) {};
    \node[draw, circle, minimum size = 6mm] (A3) at (0, -2.4) {};
    
    % Nodes in set B
    \node[draw, circle, minimum size = 6mm] (B1) at (2, 0.8) {};
    \node[draw, circle, minimum size = 6mm] (B2) at (2, -0.8) {};
    \node[draw, circle, minimum size = 6mm] (B3) at (2, -2.4) {};

    \node[draw, circle, minimum size = 6mm] (t) at (4, -0.8) {$t$};

    % Edges between sets
    \draw[->,  >=stealth, blue] (s) -- (A1);
    \draw (s) -- (A2);
    \draw[->, >=stealth, red]  (s) -- (A3);

    \draw[->,  >=stealth, blue] (A1) -- (B2);
    \draw (A2) -- (B2);
    \draw[->,  >=stealth, red]  (A3) -- (B3);

    \draw[->,  >=stealth, blue] (B1) -- (t);
    \draw[->,  >=stealth, red] (B2) -- (t);
    \draw[->,  >=stealth, blue] (B2) -- (B1);
    \draw[->,  >=stealth, red] (B3) -- (B2);
    \draw (B3) -- (t);

    \node[draw, circle, minimum size = 6mm] (s') at (6, -0.8) {$s$};
    
    \node[draw, circle, minimum size = 6mm] (A1') at (8, 0.8) {};
    \node[draw, circle, minimum size = 6mm] (A2') at (8, -0.8) {};
    \node[draw, circle, minimum size = 6mm] (A3') at (8, -2.4) {};
    
    % Nodes in set B
    \node[draw, circle, minimum size = 6mm] (B1') at (10, 0.8) {};
    \node[draw, circle, minimum size = 6mm] (B2') at (10, -0.8) {};
    \node[draw, circle, minimum size = 6mm] (B3') at (10, -2.4) {};

    \node[draw, circle, minimum size = 6mm] (t') at (12, -0.8) {$t$};

    % Edges between sets
    \draw[<-, >=stealth, blue] (s') -- node[midway, above] {\tt 2} (A1');
    \draw (s') -- (A2');
    \draw[<-, >=stealth, red] (s') -- node[midway, above ] {\tt 2}(A3');

    \draw[<-, blue] (A1') -- node[midway, above] {\tt 2}  (B2');
    \draw (A2') -- (B2');
    \draw[<-,  >=stealth, red] (A3') -- node[midway, above] {\tt 2}(B3');

    \draw[<-, >=stealth,blue] (B1') -- node[midway, above] {\tt 2}  (t');
    \draw[<-, >=stealth,red] (B2') -- node[midway, above] {\tt 2}  (t');
    \draw[<-, >=stealth,blue] (B2') -- node[midway, left] {\tt 2}  (B1');
    \draw[<-, >=stealth,red] (B3') -- node[midway, left] {\tt 2}  (B2');
    \draw (B3') -- (t');  

\end{tikzpicture}
    \caption{Shown on the left is a graph $G$ with a flow graph $F$ of size 2 indicated by the union of red and blue edges. Shown on the right is $G_F$ which is the residual graph of $G$ induced by $F$.}
    \label{fig:residual}
\end{figure}

\begin{proposition} For any graph $G$ and a flow subgraph $F \preceq G$, we have:
$$\maxflow(F) + \maxflow(G_F) = \maxflow(G).$$
\label{prop:sum_flows}
\end{proposition}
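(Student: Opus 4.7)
The plan is to prove $\maxflow(F) + \maxflow(G_F) = \maxflow(G)$ by two matching inequalities, exploiting the standard correspondence between flows in $G$ and flows in the residual $G_F$. Let $f := \maxflow(F)$, $g := \maxflow(G_F)$, and $h := \maxflow(G)$. Throughout, I will treat a flow as a signed function on edges: for each edge $e$ of $G$, fix the orientation given by $F$ (for $e\in F$) or an arbitrary orientation (for $e\in G\setminus F$), and record flow values with sign, so flow conservation and capacity constraints are symmetric statements about $|\cdot|$.

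For the direction $h \ge f+g$, I would take an optimal flow $\Phi$ of value $g$ in $G_F$ and add it to $F$ edge-by-edge to obtain a flow of value $f+g$ in $G$. For each $e=(u,v)\in F$ oriented $u\to v$ by $F$, the residual $G_F$ contains only the reverse arc $(v,u)$ with capacity $2$; if $\Phi$ pushes $k\in\{0,1,2\}$ units across $(v,u)$, then the combined flow sends $1-k \in\{-1,0,1\}$ units along the undirected edge $e$ of $G$, which respects the unit capacity. For each $e\in G\setminus F$, the edge appears in $G_F$ as the same undirected unit-capacity edge, and we simply inherit $\Phi$'s value on it. Flow conservation at each internal vertex is preserved because both $F$ and $\Phi$ are flows, and the superposition contributes $f$ and $g$ units of net outflow at $s$ respectively, giving a flow of value $f+g$ in $G$.

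For the direction $h \le f+g$, I would start from an optimal flow $H$ of value $h$ in $G$ and define a flow $\Phi$ of value $h-f$ in $G_F$ by a symmetric construction: for $e\in F$ oriented $u\to v$, if $H$ sends $h_e\in\{-1,0,+1\}$ units in direction $u\to v$, set $\Phi$ to route $1-h_e\in\{0,1,2\}$ units along the reverse arc $(v,u)$, which fits capacity $2$; for $e\in G\setminus F$, copy $H$'s value onto the undirected edge in $G_F$. Again, since $H$ and $F$ are flows with divergence $(h,-h)$ and $(f,-f)$ at $(s,t)$ and $0$ elsewhere, the ``difference'' $\Phi$ has divergence $(h-f,-(h-f))$ at $(s,t)$ and $0$ elsewhere, so it is a valid flow of value $h-f$ in $G_F$. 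This gives $g \ge h-f$, completing the argument.

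The only subtle point is to confirm that the edge-by-edge arithmetic truly respects the capacity and conservation constraints on a \emph{mixed} graph: the capacity-$2$ reverse arcs in $G_F$ exactly absorb the two possible ``cancel'' and ``reverse'' operations on an edge that was previously carrying one unit of $F$, while the undirected unit edges in $G \setminus F$ carry over unchanged. Once this bookkeeping is made precise (e.g.\ by working with the signed representation described above), both directions are immediate and yield $h = f+g$.
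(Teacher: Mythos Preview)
Your proof is correct. The paper itself does not prove this proposition; it is stated without proof as a standard fact (it is the classical flow-augmentation identity underlying Ford--Fulkerson). Your two-direction argument---superposing $F$ with a maximum flow in $G_F$ to produce a flow of value $f+g$ in $G$, and conversely subtracting $F$ from a maximum flow in $G$ to produce a flow of value $h-f$ in $G_F$---is exactly the textbook justification, and your bookkeeping with the capacity-$2$ reverse arcs correctly handles the fact that an undirected unit edge saturated by $F$ can be both cancelled and fully reversed.
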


\paragraph{Cut and Minimum Cut.} Let $G = (V, E)$ be a (possibly directed weighted) graph with a designated terminals $s$ and $t$. A \emph{cut} of $G$ is defined as a partition $(S,T)$ of the vertex set $V$. If $s \in S$ and $t \in T$, we say such cut is an \emph{$s$-$t$ cut}. An undirected/directed edge $(a,b) \in E$ is said to \emph{cross} the cut if $a \in S$ and $b \in T$. The set of edges crossing the cut $(S,T)$ is denoted by $E_G(S, T)$. The \emph{size} of a cut $(S, T)$ is defined as the cardinality of $E_G(S, T)$, given by $|E_G(S, T)|$. The \emph{value} of a cut is the total weight of the edges crossing the cut, quantified as $w_G(S, T) = \sum_{e \in E_G(S,T)} w_G(e)$. When $G$ is undirected and unweighted, we may write the cut value by $w_G(S,T)$ or $|E_G(S,T)|$. For clarity, we reserve the notation of cut value $|E_G(S,T)|$ solely for the case that $G$ is undirected and unweighted. For the remaining cases, we write the cut value as $w_G(S,T)$.

The \emph{minimum $s$-$t$ cut value} of $G$, denoted by $\mincut(G)$, is the minimum value among all possible $s$-$t$ cuts. An $s$-$t$ cut that attain the minimum cut value is called a \emph{minimum $s$-$t$ cut}. A cut (not necessarily an $s$-$t$ cut) of smallest value is called a \emph{global minimum cut}. The following theorem, established in \Cite{Ford56}, formalizes the duality between maximum flow and minimum cuts.

\begin{theorem}[Max-Flow Min-Cut Theorem]
For any graph $G$ with a source $s$ and sink $t$, the maximum value of an $s$-$t$ flow equals the minimum value of an $s$-$t$ cut, formally: $$\maxflow(G) = \mincut(G).$$

\label{thm:maxflow_mincut}
\end{theorem}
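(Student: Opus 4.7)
The plan is to prove the two inequalities separately, following the classical argument. For the easy direction $\maxflow(G) \le \mincut(G)$, I would fix any $s$-$t$ cut $(S,T)$ and observe that every unit of $s$-$t$ flow must use at least one edge of $E_G(S,T)$: since $s \in S$ and $t \in T$, each $s$-$t$ path crosses the cut at least once. Decomposing a maximum flow into its constituent unit flows and counting (with weights, if $G$ is weighted) gives $\maxflow(G) \le w_G(S,T)$, and minimizing over all $s$-$t$ cuts yields the inequality.

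The harder direction $\mincut(G) \le \maxflow(G)$ will be established through the residual graph machinery already set up in the paper. Let $F \preceq G$ be a maximum flow subgraph, so $\maxflow(F) = \maxflow(G)$. The first claim is that $\maxflow(G_F) = 0$: otherwise, by \Cref{prop:sum_flows}, $\maxflow(G) = \maxflow(F) + \maxflow(G_F) > \maxflow(F)$, contradicting maximality of $F$. In particular, there is no directed $s$-$t$ path in $G_F$. Let $S$ be the set of vertices reachable from $s$ in $G_F$ and $T = V \setminus S$; by the claim, $t \in T$, so $(S,T)$ is an $s$-$t$ cut.

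The final step is to show $w_G(S,T) \le |F| = \maxflow(G)$. Every edge $e = (a,b) \in E_G(S,T)$ with $a \in S$, $b \in T$ must be entirely used by $F$ in the $a \to b$ direction; if not, a positive-capacity forward residual arc would remain in $G_F$ from $a$ to $b$, and $b$ would be reachable from $s$, contradicting $b \in T$. Conversely, no flow arc in $F$ crosses from $T$ back to $S$, since such an arc would produce a reverse residual arc from $S$ to $T$ in $G_F$ (again contradicting the definition of $S$). Combining these two observations with flow conservation, the total weight of boundary edges equals the net flow out of $S$, which equals $|F|$.

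The main obstacle will be handling the mixed-graph nature of $G_F$ cleanly: its forward part is the undirected unit-weight subgraph $G \setminus F$ while its backward part is the reverse of $F$ with edge weight $2$, so the reachability and saturation arguments above must respect the right edge orientations and capacities. Once this bookkeeping is done carefully, the argument is entirely classical and matches the Ford--Fulkerson proof, which we would reference via \cite{Ford56} at the end of the sketch.
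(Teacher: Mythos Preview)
The paper does not give its own proof of this theorem: it is stated as a classical result with a citation to \cite{Ford56} and no argument. Your proposal is the standard Ford--Fulkerson proof and is correct.

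One small caution on circularity: you invoke \Cref{prop:sum_flows} to conclude $\maxflow(G_F)=0$ when $F$ is a maximum flow. That proposition is also stated in the paper without proof, and in its full strength (both inequalities) it is essentially equivalent to the augmenting-path characterization of maximum flows, which sits in the same circle of ideas as the theorem you are proving. Fortunately you only need the easy direction---if $G_F$ admits a positive $s$--$t$ flow then $F$ can be augmented to a strictly larger flow in $G$---which is an elementary cancellation argument and does not rely on the Max-Flow Min-Cut Theorem. It would be cleaner to state and use just that direction explicitly rather than the full \Cref{prop:sum_flows}.
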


% Throughout this paper, we unify the max-flow and min-cut value of $G$ via $\nu(G)$, that is we define $$\nu(G) := \maxflow(G) = \mincut(G).$$

% We define the value of the $s$-$t$ min-cut of $G$ as $\nu_{s,t}(G)$. In this paper, we always fix a pair of terminals $(s, t)$, so without further specification, we omit the subscripts and simply write it as $\nu(G)$. 

Throughout this paper, we use $\nu(G)$ to denote the value of the maximum $s$–$t$ flow and, equivalently, the value of the minimum $s$–$t$ cut in the graph $G$:
$$\nu(G) := \maxflow(G) = \mincut(G).$$
More precisely, we define $\nu_{s,t}(G)$ as the value of the minimum $s$–$t$ cut. Since we always fix a specific pair of terminals $(s, t)$, we omit the subscripts and simply write $\nu(G)$.

\paragraph{$k$-Connected Component.} Let $G = (V,E)$ be an undirected graph that can possibly be weighted. We say $U \subseteq V$ is \emph{$k$-connected component} iff $G[U]$ has \emph{global} min-cut of value at least $k$. The following lemma is known. For its proof, we refer readers to Lemma 3.8 of \cite{RubinsteinSW18}.

\begin{lemma} Let $G$ be an $n$-vertex undirected weighted graph with no $k$-connected components. Then, we have $w(G) \leq k(n-1)$.
\label{lem:small_connectivity_edge_bounds}
\end{lemma}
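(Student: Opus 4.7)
The plan is to proceed by strong induction on the number of vertices $n$. For the base case $n = 1$ the graph is edgeless, so $w(G) = 0 = k(1-1)$.

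For the inductive step, I would first apply the hypothesis to $U = V$ itself: since $V$ is not a $k$-connected component, the global minimum cut of $G[V] = G$ has value strictly less than $k$. Hence a nontrivial partition $(S,T)$ of $V$ exists with $w_G(S,T) < k$, and the integrality of the weights (from the preliminaries, $w_G$ takes values in $\mathbb{Z}_{\geq 0}$) upgrades this to $w_G(S,T) \leq k-1$.

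Next I would verify that neither $G[S]$ nor $G[T]$ contains a $k$-connected component. This is immediate because for any $U \subseteq S$ one has $(G[S])[U] = G[U]$, so any $k$-connected component of $G[S]$ would also be a $k$-connected component of $G$, contradicting the assumption; the symmetric argument handles $G[T]$. Since $|S|, |T| < n$, the inductive hypothesis applies and gives $w(G[S]) \leq k(|S|-1)$ and $w(G[T]) \leq k(|T|-1)$.

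Finally, I would decompose $w(G) = w(G[S]) + w(G[T]) + w_G(S,T)$ and combine the three bounds to obtain $w(G) \leq k(|S|-1) + k(|T|-1) + (k-1) = k(n-1) - 1 \leq k(n-1)$, completing the induction. I do not expect a real obstacle: the argument is a clean recursive partition, and the only mildly delicate point is the observation that the ``no $k$-connected component'' property descends to induced subgraphs, which follows directly from the definition of induced subgraphs. Alternative approaches based on finding a single low-degree vertex and peeling it off do not seem to work directly, since a graph with no $k$-connected component might still have every vertex of degree $\geq k$ (its witnessing small cut could be balanced), which is why I favor the recursive-partition route.
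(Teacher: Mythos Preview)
Your argument is correct. The paper does not actually prove this lemma; it simply cites Lemma~3.8 of \cite{RubinsteinSW18} for the proof. Your recursive-partition argument (split along a global cut of value $<k$, recurse on each side, and sum) is exactly the standard proof of this fact and is almost certainly what \cite{RubinsteinSW18} does as well. The only minor remark is that the integrality of weights is not really needed: from $w_G(S,T) < k$ you would get $w(G) < k(n-1)$ in the inductive step, which still yields $w(G) \le k(n-1)$; but since the paper's preliminaries do stipulate $w_G : E \to \mathbb{Z}_{\ge 0}$, your use of $w_G(S,T) \le k-1$ is also fine.
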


\paragraph{Cut Sparsifier.} Let $G$ be an undirected graph that may be weighted, and let $\eps \in (0,1)$. We say an undirected weighted graph $H$ is an $\eps$-\emph{cut sparsifier} of $G$ iff $E(H) \subseteq E(G)$ and we have $(1-\eps) \cdot w_G(S,T) \leq w_H(S,T) \leq (1+\eps) \cdot w_G(S,T)$ for any cut $(S,T)$. 

It is widely known that for any $\eps \in (0,1)$, any graph $G$ has an $\eps$-cut sparsifier with $\tO(n/\eps^2)$ edges. We state the version of sparisifer we need via the following theorem. 

\begin{restatable}{them}{sparsifiersmallweight} Let $G$ be an unweighted undirected connected graph with $n$ vertices and average degree $d$. Then, for any $\eps \in (0,1)$, there exists an $\eps$-cut sparsifier of $G$ with $O\left(\frac{n \log^2{n}}{\eps^2}\right)$ edges and weights bounded by $O\left(\frac{\eps^2d}{\log^2{n}}\right)$.
\label{lem:sparsifier_small_weights}
\end{restatable}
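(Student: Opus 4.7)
The plan is to apply a variant of Benczur--Karger sparsification in which each per-edge sampling probability is additionally capped from below, ensuring that every retained edge carries weight at most $W := \eps^{2} d / \log^{2} n$. Recall that the \emph{strong connectivity} $k_e$ of an edge $e$ is the largest integer $k$ for which some $k$-edge-connected subgraph of $G$ contains $e$. I would define
\[
p_e \;:=\; \min\!\left(1,\; \max\!\left(\tfrac{1}{W},\; \tfrac{\rho \log n}{\eps^{2}\, k_e}\right)\right)
\]
for a sufficiently large absolute constant $\rho$, independently sample every edge $e$ with probability $p_e$, and weight each sampled edge by $1/p_e$ in the output sparsifier $H$. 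The weight bound $1/p_e \leq W$ is then immediate by construction.

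To control the edge count, let $k := \rho d / \log n$ and split $E(G)$ into a Benczur--Karger regime $\{e : k_e \leq k\}$ and a capped regime $\{e : k_e > k\}$. In the Benczur--Karger regime, $p_e$ equals the standard Benczur--Karger rate, so the expected number of retained edges is $O(n \log n / \eps^{2})$ by the standard analysis. In the capped regime, every edge is retained with probability $p_e = 1/W$, and since $|E(G)| = nd/2$, the expected number retained is $nd/(2W) = O(n \log^{2} n / \eps^{2})$. A Chernoff bound on the total edge count upgrades this sum to $|E(H)| = O(n \log^{2} n / \eps^{2})$ with high probability.

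For the cut-approximation guarantee, I would appeal to the original Benczur--Karger analysis, which certifies that $w_H(S,T) \in (1 \pm \eps) \cdot |E_G(S,T)|$ simultaneously for every cut $(S,T)$ with high probability whenever $p_e \geq \rho \log n / (\eps^{2} k_e)$. My $p_e$ only ever \emph{increases} the Benczur--Karger rate (via the cap $1/W$), and such monotone inflation only decreases the variance of the cut estimator $\sum_{e \in E_G(S,T)} X_e / p_e$; hence the strong-component-based union bound of Benczur--Karger goes through unchanged, and all cuts are preserved with high probability. A final union bound over the cut-preservation event and the edge-count event certifies the existence of a sparsifier with the claimed parameters.

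The step I expect to be the most delicate is justifying that the cap-induced modification of $p_e$ does not break the layered Benczur--Karger concentration argument. While intuitively oversampling cannot hurt, the formal proof decomposes each cut into edges of distinct strong-connectivity levels and Chernoff-bounds level by level, so one has to check that every step remains valid when $p_e$ is inflated. The cleanest route is to invoke a robust form of the Benczur--Karger sampling theorem from the literature that permits \emph{any} $p_e \geq C \log n / (\eps^{2} k_e)$; alternatively, one can trace through the proof directly and observe monotonicity in $p_e$ throughout.
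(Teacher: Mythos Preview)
Your proposal is correct and takes a genuinely different route from the paper's own proof.

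The paper actually proves the stronger \emph{spectral} sparsifier statement via an \emph{iterative halving} scheme: in each round, edges with small leverage score ($w_e R_e \le n/(\lambda m)$) are kept independently with probability $1/2$ and their weight doubled, while the remaining (few) high-leverage edges are kept deterministically; a matrix Chernoff bound shows that one round incurs spectral error $\sqrt{Cn\log n/(\lambda m)}$, roughly halves the edge count, and at most doubles the maximum weight. Iterating $\ell = O(\log n)$ times and summing the geometrically increasing errors yields total error $\eps$, final edge count $O(n\log^2 n/\eps^2)$, and maximum weight $2^\ell = O(\eps^2 d/\log^2 n)$.

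Your approach, by contrast, is a \emph{single-shot} Bencz\'ur--Karger sample with the per-edge probability floored at $1/W$. This is more elementary---no matrix concentration, no recursion---and the weight bound is immediate by construction rather than emerging from the recursion depth. The trade-off is that you only certify a cut sparsifier, whereas the paper obtains a spectral one; but the stated theorem only requires cut sparsification, so nothing is lost for the paper's downstream applications. Your flagged concern about whether inflating $p_e$ breaks the layered Bencz\'ur--Karger analysis is legitimate but resolvable cleanly: the general sampling framework of Fung, Hariharan, Harvey, and Panigrahi (cited in the paper as \cite{FungHHP19}) establishes precisely that any choice $p_e \ge c\log n/(\eps^2 k_e)$ yields an $\eps$-cut sparsifier with high probability, so you need not retrace the strong-component decomposition by hand.
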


We make several remarks. First, the theorem establishes an asymptotically optimal trade-off between the number of edges and the maximum weight of the sparsifier. To see this, let $G$ be a graph with $m$ edges. Any $\varepsilon$-cut sparsifier must have total weight at least $(1 - \varepsilon)m = \Omega(m)$. Since our sparsifier contains $O\left(\frac{n \log^2 n}{\varepsilon^2}\right)$ edges, at least one edge must carry a weight $\frac{\Omega(m)}{O\left(\frac{n \log^2 n}{\varepsilon^2}\right)} = \Omega\left(\frac{\varepsilon^2 d}{ \log^2 n}\right)$,
which matches our upper bound up to constant factors.

Second, for completeness, the proof of \Cref{lem:sparsifier_small_weights} is deferred to \Cref{appendix:optimal_sparsifier}. In fact, we prove a stronger statement for an extended notion known as a \emph{spectral} sparsifier. We emphasize that our proof does not claim novelty in technique. Many foundational results on sparsification are well established \cite{Karger99, BenczurK15, FungHHP19, SpielmanS11, SpielmanT11, BatsonSS12, ChuGPSSW23, Lau0025}, and our argument merely follows prior approaches, such as that of \cite{ChuGPSSW23}. However, to the best of our knowledge, previous work has not explicitly addressed bounding the maximum edge weight in sparsifiers. When bounds are provided, they are typically of order $\tO(\varepsilon^2 n)$. Our contribution lies in explicitly deriving \emph{asymptotically optimal} bounds on the maximum edge weight. This refinement turns out to be critical for our algorithmic applications.

\subsection{Cut-Query Model}

In this setting, an input graph $G = (V,E)$ is hidden from us, and we are only allow to make \emph{cut-queries} to the graph. In particular, for each query we can specify a cut $(S,T)$ to an oracle, and receives $|E_G(S,T)|$ in return. Our objective is to compute a task (in this case, a minimum $s$-$t$ cut) while minimizing the query complexity of our algorithm. Note that the only notion of cost here is the queries; thus, we can assume to have an unlimited local computation power.

Many fundamental tasks can be simulated efficiently via cut queries. For instance, we can find out if an edge $(a,b)$ exists in the graph within $O(1)$ queries, learn all the $m$ edges in the graph within $O(n + m \log n)$ queries, or uniformly sample an edge of the graph with an amortized cost of $O(\log n)$ queries. This random sampling primitive also opens up many doors of opportunity to the world of sequential algorithm. Perhaps one of the most insightful takeaways from \cite{RubinsteinSW18} is that we can obtain a cut sparsifier of $G$ within not-too-many cut queries to $G$, thereby enabling us to obtain a $(1\pm \eps)$-approximation of the \emph{value} of the minimum $s$-$t$ cut.

\begin{lemma}[Theorem 3.5 of \cite{RubinsteinSW18}]  For any $\eps \in (0,1)$, there is a randomized algorithm that computes an $\eps$-cut sparsifier of $G$ with $\tO(n/\eps^2)$ edges and maximum weight $\tO(\eps^2 n)$. The algorithm uses $\tO(n/\eps^2)$ cut queries to $G$ and succeeds w.h.p.
\label{lem:sparsifier_cq}
\end{lemma}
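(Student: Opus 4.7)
The existence of a sparsifier with the stated parameters follows from \Cref{lem:sparsifier_small_weights} by plugging in $d \leq n$, so the real task is to \emph{construct} such an object using only $\tO(n/\eps^2)$ cut queries. My plan is to implement the Benczúr--Karger importance-sampling framework with each primitive realized via cut queries, exploiting the ``uniformly sample a random edge in amortized $O(\log n)$ queries'' primitive highlighted earlier in the section.

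First, I would compute a hierarchical $k$-edge-connectivity decomposition of $V$ for $k \in \{1, 2, 4, \ldots, O(n)\}$ by iteratively locating cuts of value less than $k$ inside the current components, using known cut-query global-min-cut subroutines. By \Cref{lem:small_connectivity_edge_bounds}, the total weight of edges crossing cuts of value $<k$ is at most $k(n-1)$; amortizing across the $O(\log n)$ levels bounds the number of refining cuts by $\tO(n)$, so the whole decomposition stage costs $\tO(n)$ queries. Each edge $e$ then gets a strong-connectivity class $k_e \in [2^i, 2^{i+1})$ extracted from this decomposition, and within each class I sample every edge independently with probability $p_i := \Theta(\log n / (\eps^2 \cdot 2^i))$ and assign weight $1/p_i = \tO(\eps^2 \cdot 2^i)$ upon inclusion. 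Standard Benczúr--Karger concentration certifies that the resulting graph is w.h.p.\ an $\eps$-cut sparsifier; \Cref{lem:small_connectivity_edge_bounds} bounds the edges in class $i$ by $O(n \cdot 2^i)$, so the expected number sampled per class is $\tO(n/\eps^2)$, giving $\tO(n/\eps^2)$ edges in total and a maximum weight of $\tO(\eps^2 n)$ since $k_e \le n$.

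The sampling itself should be realized without learning every edge: for each pair of strong components $(U, U')$ appearing at level $k$, the cut size $|E(U, U')|$ can be recovered from the cut values of $U$, $U'$, and $U \cup U'$ using $O(1)$ queries, and the required Binomial$(|E(U,U')|, p_i)$ number of distinct edges is then drawn uniformly from $E(U,U')$ via the $O(\log n)$-amortized edge-sampling primitive, for $\tO(n/\eps^2)$ cost in aggregate. The main obstacle I expect is ensuring that the hierarchical decomposition step stays at $\tO(n)$ queries in total rather than per level---this requires carefully reusing cuts discovered at lower thresholds when refining for larger $k$---together with verifying that the $O(\log n)$ simultaneous Benczúr--Karger concentration bounds can be union-bounded without inflating the final complexity beyond $\tO(n/\eps^2)$.
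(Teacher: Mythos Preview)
The paper does not prove this lemma; it is quoted verbatim as Theorem~3.5 of Rubinstein--Schramm--Weinberg, so there is no in-paper argument to compare against. Your high-level scheme (Bencz\'ur--Karger importance sampling, with strength classes learned via cut queries and per-edge weight $1/p_i = \tO(\eps^2 k_e)$) is the right framework, and your edge-count and max-weight accounting are fine.

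The genuine gap is the step you yourself flag: building the hierarchical strong-connectivity decomposition by ``iteratively locating cuts of value $<k$ using known cut-query global-min-cut subroutines.'' A single min-cut call on a component $U$ costs $\tO(|U|)$ queries, and at a fixed threshold $k$ you may need $\Omega(|U|)$ successive splits before every piece is $k$-connected (think of a long path of $k$-cliques joined by single edges). That gives $\tO(n)$ queries \emph{per split}, hence potentially $\tO(n^2)$ queries at a single level; ``reusing cuts from lower thresholds'' does not help, since those cuts are already coarser than the refinement you need at level $k$. So as written the decomposition stage does not stay within $\tO(n)$ queries.

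RSW avoid this entirely: rather than computing min-cuts inside components, they estimate strengths by uniformly subsampling $G$ at geometrically decreasing rates $p$, learning each sampled skeleton in full (it has $\tO(n)$ edges in expectation, hence $\tO(n)$ queries via \Cref{lem:learn_graph_cq}), and reading off the connected components of the skeleton as proxies for $\Theta(1/p)$-connected components. This replaces your recursive-min-cut loop with $O(\log n)$ straight ``sample-and-learn'' passes of $\tO(n)$ queries each, which is where the $\tO(n)$ total for the decomposition actually comes from.
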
 

% As a corollary, we can use this sparsifier to approximate the \emph{value} of $G$'s minimum $s$-$t$ cut.

\begin{restatable}{coro}{approxnucq} For any $\eps \in (0,1)$, there is a randomized algorithm that computes $f$ such that $(1-\eps) \nu(G) \leq f \leq (1+\eps) \nu(G)$. The algorithm uses $\tO(n/\eps^2)$ cut queries to $G$ and succeeds w.h.p.
\label{cor:approx_nu_cq}
\end{restatable}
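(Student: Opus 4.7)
The plan is to reduce the approximate minimum $s$--$t$ cut problem to exact minimum $s$--$t$ cut computation on a sparsified graph, leveraging \Cref{lem:sparsifier_cq}. Concretely, I would first invoke the algorithm from \Cref{lem:sparsifier_cq} with parameter $\eps$ to obtain, with high probability, an $\eps$-cut sparsifier $H$ of $G$ using $\tO(n/\eps^2)$ cut queries. Then I would output $f := \nu(H)$, which is computed purely by local computation on $H$ without any further queries to $G$, using e.g.\ any standard polynomial-time max-flow algorithm; this is free since in the cut-query model the only cost is queries to the oracle.

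For correctness, condition on the high-probability event that $H$ is a valid $\eps$-cut sparsifier, so that $(1-\eps)\, w_G(S,T) \le w_H(S,T) \le (1+\eps)\, w_G(S,T)$ for every cut $(S,T)$, and in particular for every $s$--$t$ cut. Let $(S^*, T^*)$ be a minimum $s$--$t$ cut of $G$ and $(S^H, T^H)$ a minimum $s$--$t$ cut of $H$. The upper bound of the sparsifier gives
\[
\nu(H) \;\le\; w_H(S^*, T^*) \;\le\; (1+\eps)\, w_G(S^*, T^*) \;=\; (1+\eps)\, \nu(G),
\]
while the lower bound gives
\[
\nu(H) \;=\; w_H(S^H, T^H) \;\ge\; (1-\eps)\, w_G(S^H, T^H) \;\ge\; (1-\eps)\, \nu(G),
\]
since $(S^H, T^H)$ is in particular an $s$--$t$ cut of $G$. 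Hence $f = \nu(H)$ satisfies the desired two-sided approximation.

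The query complexity is dominated by the one call to \Cref{lem:sparsifier_cq}, which uses $\tO(n/\eps^2)$ cut queries; the subsequent min-cut computation on $H$ incurs no queries. There is essentially no main obstacle: the entire argument is a short syntactic consequence of the sparsifier guarantee plus max-flow/min-cut duality applied locally to $H$. The only minor point to double-check is that the sparsifier definition in \Cref{sec:prelim} preserves \emph{all} cuts (including $s$--$t$ cuts), which it does by definition.
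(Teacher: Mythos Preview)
Your proposal is correct and follows essentially the same approach as the paper: compute an $\eps$-cut sparsifier $H$ via \Cref{lem:sparsifier_cq}, then output $f = \nu(H)$ computed locally. Your two-sided bound argument is in fact slightly cleaner than the paper's own presentation.
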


Throughout this paper, we will utilize a more specific type of query to a graph $G = (V,E)$ called the \emph{find-edge query} which can be simulated efficiently via cut queries.

\begin{restatable}[Find-edge query]{defi}{ISSdef} Let $G = (V,E)$ be an underlying graph. A \emph{find-edge} query takes in as inputs disjoint sets $A,B \subseteq V$, and outputs $\FindEdge(A,B)$ which is either \textnormal{``NONE''} if $E \cap (A \times B) = \emptyset$, or an edge $(a,b) \in E \cap (A \times B)$ if otherwise.
\end{restatable}

\begin{restatable}{prop}{ISScq} Let $G = (V,E)$ be an underlying graph. Then, for any disjoint sets $A,B \subseteq V$, we can determine $\FindEdge(A,B)$ using either
\begin{itemize}
    \item $O(1)$ cut queries to $G$ if $\FindEdge(A,B)$ outputs \textnormal{``NONE''} (i.e. $E \cap (A \times B) = \emptyset$), or
    \item $O(\log n)$ cut queries to $G$ if $\FindEdge(A,B) $ outputs an edge $(a,b) \in E \cap (A \times B)$. 
\end{itemize}
\label{prop:ISS_cq}
\end{restatable}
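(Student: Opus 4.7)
The plan is to first show that the cut-query oracle can compute the exact number of edges between any two disjoint vertex sets $A, B \subseteq V$ using only a constant number of queries. The key observation is a three-way inclusion-exclusion:
\[
|E(A, V \setminus A)| + |E(B, V \setminus B)| - |E(A \cup B,\, V \setminus (A \cup B))| \;=\; 2\,|E(A, B)|.
\]
This identity holds because edges incident to $A$ (resp.\ $B$) consist of those going to $B$ and those going to $V \setminus (A \cup B)$, while edges in the cut $(A \cup B,\, V \setminus (A \cup B))$ are precisely the edges from $A$ or $B$ to $V \setminus (A \cup B)$; thus the $A$-$B$ edges get double-counted in the first two terms and the subtraction cancels everything else. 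Hence $|E(A,B)|$ is recovered from three cut queries, giving the $O(1)$-query bound for the emptiness case: we simply declare $\FindEdge(A, B) = $ ``NONE'' iff the computed value is $0$.

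For the find-edge case, I would apply a divide-and-conquer search. Maintain sets $A' \subseteq A$ and $B' \subseteq B$ initialized to $A$ and $B$, preserving the invariant that $E \cap (A' \times B') \neq \emptyset$. At each step, partition $A'$ into two halves $A'_1, A'_2$ of sizes $\lceil |A'|/2 \rceil$ and $\lfloor |A'|/2 \rfloor$, and invoke the $O(1)$-query emptiness subroutine on $(A'_1, B')$. If $E \cap (A'_1 \times B')$ is nonempty, set $A' \leftarrow A'_1$; otherwise by the invariant $E \cap (A'_2 \times B')$ must be nonempty, so set $A' \leftarrow A'_2$. After at most $\lceil \log |A| \rceil$ iterations, $A'$ becomes a singleton $\{a\}$. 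Perform the same halving procedure on $B'$ for at most $\lceil \log |B| \rceil$ steps to obtain a singleton $\{b\}$, and return the edge $(a, b)$. Each iteration spends $O(1)$ cut queries on a single emptiness check, so the total cost is $O(\log n)$.

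The argument has no real obstacle beyond verifying the inclusion-exclusion identity, which is immediate by counting edges according to their endpoints (each of the three terms is a legitimate cut query since the defining bipartition is over all of $V$). The correctness of the binary search follows directly from the invariant, since whenever one half's crossing is empty the other's must contain an edge, and the halving strictly reduces the side's size. Together these give precisely the two stated cost bounds.
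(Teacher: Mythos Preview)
Your proof is correct and follows essentially the same approach as the paper: both use the three-cut inclusion-exclusion identity to obtain $|E(A,B)|$ in $O(1)$ queries, then binary-search to locate an edge in $O(\log n)$ further queries. The only cosmetic difference is that the paper halves $A$ and $B$ simultaneously (querying the four quadrants $(A_i,B_j)$ each round), whereas you reduce $A$ to a singleton first and then $B$; both yield the same $O(\log n)$ bound.
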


One of its abundant benefits of the find-edge queries is that it allows us to learn the (contracted) graphs with query complexity near-linear in the representation size of the graph.

\begin{restatable}{lem}{learngraphcq} Let $(V_1,...,V_z)$ be a partition of a vertex set of $G = (V,E)$. Then, we can learn all edges of $G\langle V_1,...,V_z \rangle$ using $ \tO\left(z + \left|G\langle V_1,...,V_z \rangle\right| \right)$ cut queries.
\label{lem:learn_graph_cq}  
\end{restatable}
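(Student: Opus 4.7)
\medskip
\noindent\textbf{Proof plan.}
The plan is to learn $G' := G\langle V_1,\dots,V_z\rangle$ in two phases: (i) discover which super-vertex pairs are adjacent, and (ii) compute the multiplicity (weight) of each such pair. Both phases use only find-edge queries and plain cut queries, and by \Cref{prop:ISS_cq} these compose into $\tO(z+|G'|)$ cut queries overall.

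\medskip
\noindent\emph{Phase 1 (finding adjacent pairs).}
For each super-vertex $V_i$, I would iteratively uncover its neighbors in $G'$ as follows. Maintain a set $\mathcal{U}\subseteq \{V_1,\dots,V_z\}\setminus\{V_i\}$ of super-vertices still marked ``unknown'', initialized to all of them. Repeatedly call
$\FindEdge\bigl(V_i,\,\bigcup_{V_j\in\mathcal{U}} V_j\bigr)$;
if it returns \textnormal{NONE}, stop (this costs $O(1)$ cut queries by \Cref{prop:ISS_cq}); otherwise it returns an edge $(a,b)$ with $b\in V_j$ for some $V_j\in\mathcal{U}$ at a cost of $O(\log n)$ queries. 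In the latter case, record the pair $(V_i,V_j)$ as adjacent, remove $V_j$ from $\mathcal{U}$, and repeat. Letting $d_i$ denote the number of distinct super-vertex neighbors of $V_i$ in $G'$, the cost for $V_i$ is $O(d_i\log n)+O(1)$. Summing over $i$ and using $\sum_i d_i \le 2\,|G'|$ (edges are counted from both endpoints and each distinct adjacency contributes at most the multiplicity to $|G'|$), the total cost of phase 1 is $\tO(z+|G'|)$.

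\medskip
\noindent\emph{Phase 2 (determining multiplicities).}
For each unordered pair $(V_i,V_j)$ flagged as adjacent in phase 1, I would read off the multiplicity of the super-edge with $O(1)$ cut queries via the standard polarization identity
\[
|E(V_i,V_j)| \;=\; \tfrac{1}{2}\Bigl(\,|E(V_i,V\setminus V_i)| + |E(V_j,V\setminus V_j)| - |E(V_i\cup V_j,\, V\setminus(V_i\cup V_j))|\,\Bigr).
\]
The single-side terms $|E(V_i,V\setminus V_i)|$ can be precomputed for all $i$ using $z$ queries and cached; each pair then incurs only one additional cut query for the joint term. Since the number of adjacent pairs is at most $|G'|$, phase 2 costs $O(z+|G'|)$.

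\medskip
\noindent\emph{Main obstacle.}
The only place where a naive bound would blow up is in ensuring that we do not spend nontrivial cost on the $\Theta(z^2)$ non-adjacent pairs. This is exactly resolved by the asymmetric cost in \Cref{prop:ISS_cq}: each ``dead-end'' search (terminated by a \textnormal{NONE} answer) costs only $O(1)$, so at most one such $O(1)$-termination is charged per super-vertex, giving the $O(z)$ term. Every $O(\log n)$-cost call is amortized against a newly discovered adjacency, yielding the $\tO(|G'|)$ term. Combining both phases gives the claimed bound of $\tO(z+|G'|)$ cut queries to learn $G\langle V_1,\dots,V_z\rangle$.
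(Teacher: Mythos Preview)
Your proof is correct. Both your argument and the paper's hinge on the same asymmetric-cost guarantee of \Cref{prop:ISS_cq}: every ``successful'' $\FindEdge$ call is charged to something in $|G\langle V_1,\dots,V_z\rangle|$, while each super-vertex incurs only a single $O(1)$ \textnormal{NONE} termination, yielding the $z$ term.

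The route, however, differs. The paper does not separate discovery from multiplicity: in iteration $i$ it repeatedly calls $\FindEdge(V_i,V_{>i})$, peeling off one \emph{cross-edge} at a time (and simulating cut queries to the graph with the already-learned edges removed), until \textnormal{NONE}. Thus it spends one $O(\log n)$ call per unit of weight in $|G\langle V_1,\dots,V_z\rangle|$. Your two-phase approach instead spends one $O(\log n)$ call per \emph{distinct adjacent super-pair} and then reads each multiplicity in $O(1)$ via polarization. This avoids the need to simulate queries to an edge-deleted graph (you only shrink the vertex set $\mathcal{U}$), and it is never worse---and strictly cheaper when multiplicities are large---though both land at $\tO(z+|G\langle V_1,\dots,V_z\rangle|)$.
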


\subsection{Two-Party Communication Model}

In this setting, an input graph $G = (V,E)$ is edge-partitioned across two players, namely Alice has her private graph $G_A = (V, E_A)$, and Bob has his private graph $G_B = (V, E_B)$, with $E = E_A \cup E_B$. The players' objective is to collaboratively compute a joint task (in this case, a minimum $s$-$t$ cut) by exchanging a sequence of messages, known as a \emph{protocol}, while minimizing the total amount of communication.  Note that the only notion of cost here is the communication; thus, Alice and Bob can be assumed to have unlimited local computation power. 

Within this communication setting, the players can also ``deterministically'' compute a cut sparsifier and use it to approximate the max-flow/min-cut value with the same complexity.

\begin{restatable}{lem}{sparsifiercomm} For any $\eps \in (0,1)$, there is a deterministic protocol such that Alice and Bob jointly computes an $\eps$-cut sparsifier of $G$ with $\tO(n/\eps^2)$ edges and maximum weight $\tO(\eps^2 n)$. The protocol communicates $\tO(n/\eps^2)$ bits and always succeeds.
\label{lem:sparsifier_comm}
\end{restatable}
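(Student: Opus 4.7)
The plan is to use the \emph{composability} of cut sparsifiers together with the existential guarantee of \Cref{lem:sparsifier_small_weights}: since Alice and Bob have unlimited local computation, each of them can deterministically construct a small sparsifier of their own private graph and simply transmit it to the other party. The central observation is that because $E = E_A \sqcup E_B$ is an edge partition, if $H_A$ is an $\eps$-cut sparsifier of $G_A$ and $H_B$ is an $\eps$-cut sparsifier of $G_B$, then $H_A \cup H_B$ is an $\eps$-cut sparsifier of $G$. Indeed, for any cut $(S,T)$,
\[
w_{H_A \cup H_B}(S,T) = w_{H_A}(S,T) + w_{H_B}(S,T) \in \bigl[(1-\eps) w_G(S,T),\, (1+\eps) w_G(S,T) \bigr],
\]
using linearity over the disjoint edge sets.

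The protocol then proceeds as follows. First, Alice locally applies \Cref{lem:sparsifier_small_weights} to $G_A$ to obtain an $\eps$-cut sparsifier $H_A$ with at most $O(n \log^2 n / \eps^2)$ edges and all edge weights bounded by $O(\eps^2 n / \log^2 n)$; Bob does the same for $G_B$ to obtain $H_B$. Since \Cref{lem:sparsifier_small_weights} is existential, each player can find such a sparsifier deterministically via brute-force search over all candidate weighted subgraphs, which is permissible because we charge only communication and not local computation. Second, Alice sends $H_A$ to Bob and Bob sends $H_B$ to Alice; each edge is encoded by its two endpoints in $O(\log n)$ bits and its weight in $O(\log n)$ bits (since the max weight is $\tO(\eps^2 n)$), giving a total of $\tO(n/\eps^2)$ bits. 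Finally, both players output $H := H_A \cup H_B$.

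Correctness of the weight and edge-count bounds is immediate: the edge count is $|E(H_A)| + |E(H_B)| = \tO(n/\eps^2)$, and the maximum weight of any edge in $H$ is bounded by $\max(w_{H_A}, w_{H_B}) = \tO(\eps^2 n)$ since no edge is shared between $H_A$ and $H_B$ (the underlying edge sets are disjoint). Combined with the cut-preservation argument above, $H$ satisfies every requirement of the lemma, and the protocol is deterministic with total communication $\tO(n/\eps^2)$.

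The main thing to be careful about is the disjointness of $E_A$ and $E_B$: if the problem were instead presented with overlapping edge multisets, then summing $w_{H_A}$ and $w_{H_B}$ could double-count contributions and break either the sparsifier guarantee or the weight bound. In the edge-partition model of \Cref{thm:min_cut_comm}, however, this issue does not arise, so there is no genuine obstacle in the proof—the deterministic construction of each local sparsifier is delegated to \Cref{lem:sparsifier_small_weights} and the rest is a one-round exchange.
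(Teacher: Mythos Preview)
Your proposal is correct and essentially identical to the paper's proof: both have each player brute-force a sparsifier of their own subgraph via \Cref{lem:sparsifier_small_weights}, exchange the two sparsifiers using $\tO(n/\eps^2)$ bits, and output the union, with correctness following from the additivity of cut values over the disjoint edge partition $E_A \sqcup E_B$. The paper's argument matches yours line for line, including the same composability computation for $w_{H_A \cup H_B}(S,T)$.
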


\begin{corollary} For any $\eps \in (0,1)$, Alice and Bob can jointly and deterministically compute a value $f$ such that $(1-\eps) \nu(G) \leq f \leq (1+\eps) \nu(G)$ using $\tO(n/\eps^2)$ bits of communication. 
\label{cor:approx_nu_comm}
\end{corollary}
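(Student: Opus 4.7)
The plan is to obtain this corollary as an almost immediate consequence of the sparsifier construction in \Cref{lem:sparsifier_comm}, combined with the max-flow min-cut duality of \Cref{thm:maxflow_mincut}. Concretely, Alice and Bob first invoke the deterministic protocol of \Cref{lem:sparsifier_comm} with the given parameter $\eps$. This produces an $\eps$-cut sparsifier $H$ of $G$, with $\tO(n/\eps^2)$ edges and maximum weight $\tO(\eps^2 n)$, and costs $\tO(n/\eps^2)$ bits of communication.

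Next, since the sparsifier construction is a joint computation, we may assume without loss of generality that at the end of the protocol at least one player (say Alice) holds the full description of $H$; if not, the player possessing the sparsifier at the end of \Cref{lem:sparsifier_comm} can transmit all of $H$ in $\tO(n/\eps^2)$ additional bits, since each of the $\tO(n/\eps^2)$ edges can be encoded with $O(\log n)$ bits. Alice then locally computes $f := \nu(H) = \mincut(H) = \maxflow(H)$ using unlimited local computation, and sends the value $f$ to Bob using $O(\log n)$ bits (noting $f \leq w(H) \leq \mathrm{poly}(n)$).

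To see that $f$ satisfies the required approximation guarantee, let $(S^*, T^*)$ denote a minimum $s$-$t$ cut of $G$, and let $(S_H, T_H)$ denote a minimum $s$-$t$ cut of $H$. By the definition of an $\eps$-cut sparsifier, every $s$-$t$ cut $(S,T)$ satisfies $(1-\eps)\, w_G(S,T) \leq w_H(S,T) \leq (1+\eps)\, w_G(S,T)$. Applying this to $(S^*, T^*)$ yields $f = w_H(S_H, T_H) \leq w_H(S^*, T^*) \leq (1+\eps)\, \nu(G)$, and applying it to $(S_H, T_H)$ yields $\nu(G) \leq w_G(S_H, T_H) \leq \frac{1}{1-\eps}\, w_H(S_H, T_H) = \frac{f}{1-\eps}$, i.e.\ $f \geq (1-\eps)\nu(G)$. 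Together these give $(1-\eps)\,\nu(G) \leq f \leq (1+\eps)\,\nu(G)$.

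Since every step is deterministic and the protocol of \Cref{lem:sparsifier_comm} always succeeds, the overall protocol is deterministic and succeeds in every execution. The total communication is $\tO(n/\eps^2) + O(\log n) = \tO(n/\eps^2)$ bits, as claimed. There is no genuine obstacle here — the only point that requires minor care is making sure some player actually ends up with a usable copy of $H$ so that the local min-cut computation can be performed, which is handled by the single sparsifier-transmission step above (and absorbed into the $\tO(n/\eps^2)$ budget).
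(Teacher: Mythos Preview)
Your proposal is correct and follows essentially the same approach as the paper: the paper leaves this corollary unproved, but its proof of the analogous cut-query statement (\Cref{cor:approx_nu_cq}) computes an $\eps$-cut sparsifier $H$ and outputs $f = \nu(H)$, exactly as you do. Your extra step of ensuring a player holds $H$ is in fact already guaranteed by the protocol of \Cref{lem:sparsifier_comm} (both players exchange their local sparsifiers), so it is harmless but unnecessary.
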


% \newpage

\section{Overview}
\label{sec:setup}

The starting point of our algorithm builds on the work of \cite{RubinsteinSW18}. Their algorithm, in which we will call RSW algorithm, can be viewed as computing a partition of the vertex set whose contraction results in a sparse graph that preserves all $s$-$t$ minimum cuts. That work focuses exclusively on the cut-query model and yields a randomized algorithm with $\tO(n^{5/3})$ query complexity for the task. Upon closer inspection, we observe that the complexity can be parameterized in terms of $\nu(G)$ depending on the computation models: to $\tO(n^{4/3} \nu(G)^{1/3})$ cut queries and to $\tO(n \cdot \nu(G)^{2/3})$ bits of communication. At first glance, these parameterized improvements may seem unhelpful, as $\nu(G)$ can be as large as $\Omega(n)$. However, to state the obvious, when $\nu(G)$ is truly sublinear, these bounds yield a complexity of $n^{5/3 - \Omega(1)}$ in both models. This observation motivates a simple yet powerful scheme: apply a preprocessing step to $G$ that reduces the $s$-$t$ min-cut to $\Delta = o(n)$ while preserving all $s$-$t$ min-cuts, and then apply the RSW algorithm to the post-processed graph. If this preprocessing can be implemented efficiently, it opens the possibility of achieving overall sub-$n^{5/3}$ complexity. We summarize this framework in the table below.

\renewcommand{\tablename}{Algorithm}

\begin{table}[H]
    \centering
    \begin{tabular}{|p{15.5cm}|}
    \hline ~\\
   \multicolumn{1}{|c|}
     {} \\
    \vspace{-7.5mm} 
    \textbf{Algorithm 1:} A model-independent framework for computing minimum $s$-$t$ cuts. \\

    \textbf{Input:}  An undirected unweighted graph $G = (V,E)$ with $n$ vertices. \\
    \textbf{Output:} A minimum $s$-$t$ cut of $G$. \\~\\

    \textbf{Procedures:}
    \begin{enumerate}
        \item Compute a graph $H \preceq G$ such that $\nu(H) \geq \nu(G) - \Delta$ for some value $\Delta$ to be set later.
        \begin{itemize}
            \item[1a.] Compute a flow subgraph $F \preceq H$ of value $\nu(F) = \nu(H) \geq \nu(G) - \Delta$.
        \end{itemize} 
    \item Compute a minimum $s$-$t$ cut in the residual network $G_F$, and output it.
    \end{enumerate}
        \vspace{0mm}
    \\
    \hline
    \end{tabular}
    \caption{A model-independent framework for computing a minimum $s$-$t$ cut.}
    \label{alg:meta-alg}
\end{table}

Note here that if we assume a correct answer $H$ following Step 1, the flow subgraph $F$ in Step 1a can be computed locally and free of costs in both computation models.

We first need to argue that this is a correct framework. This is done via following claim which gives a perfect reduction between all minimum $s$-$t$ cuts of $G_F$ and all minimum $s$-$t$ cuts of $G$.

\begin{claim} Let $G$ be an undirected unweighted graph and $F$ be its flow subgraph. For any cut $(S,T)$, we have $|E_G(S,T)| = \nu(F) + w_{G_F}(S,T).$ 
\label{clm:reduce_cut_by_flow}
\end{claim}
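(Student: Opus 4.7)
The plan is to exploit the fact that $F$, being a flow graph of value $\nu(F)$, decomposes into $\nu(F)$ edge-disjoint directed $s$-$t$ paths $P_1, \dots, P_{\nu(F)}$. For any $s$-$t$ cut $(S,T)$, I would analyze each path's contribution to both $|E_G(S,T)|$ and $w_{G_F}(S,T)$ separately, then show that the difference contributed by each path is exactly $1$, summing to $\nu(F)$.

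Fix an $s$-$t$ cut $(S,T)$ with $s \in S$ and $t \in T$. For each path $P_i$, let $f_i$ denote the number of directed edges of $P_i$ going from $S$ to $T$, and let $b_i$ denote the number going from $T$ to $S$. Because $P_i$ starts at $s \in S$ and ends at $t \in T$, traversing $P_i$ requires one more forward crossing than backward crossing, so $f_i - b_i = 1$.

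Next, I would account for each type of edge. An undirected edge of $G \setminus F$ whose endpoints straddle the cut contributes $1$ to $|E_G(S,T)|$ and, since $G \setminus F$ appears unchanged in $G_F$ with unit weight, also contributes $1$ to $w_{G_F}(S,T)$; these contributions cancel when we take the difference. For an $F$-edge lying on path $P_i$: if it is a forward edge (from $S$ to $T$) it contributes $1$ to $|E_G(S,T)|$ but $0$ to $w_{G_F}(S,T)$, since its reversal in $G_F$ points from $T$ to $S$ and thus does not cross from $S$ to $T$; if it is a backward edge (from $T$ to $S$) it contributes $1$ to $|E_G(S,T)|$ and, since its reversal in $G_F$ is weight-$2$ and directed from $S$ to $T$, contributes $2$ to $w_{G_F}(S,T)$. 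Summing over $P_i$ gives a contribution of $f_i + b_i$ to $|E_G(S,T)|$ and $2 b_i$ to $w_{G_F}(S,T)$.

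Combining everything,
\[
|E_G(S,T)| - w_{G_F}(S,T) \;=\; \sum_{i=1}^{\nu(F)} \bigl((f_i+b_i) - 2b_i\bigr) \;=\; \sum_{i=1}^{\nu(F)} (f_i - b_i) \;=\; \nu(F),
\]
which rearranges to the claimed identity. There is no substantive obstacle here; the only bookkeeping subtleties are (i) the weight-$2$ convention on reversed edges of $F$ in the definition of $G_F$, which is exactly what makes the "backward" contribution $2b_i$ rather than $b_i$, and (ii) the need for $(S,T)$ to be an $s$-$t$ cut so that $f_i - b_i = 1$ for every path (the identity fails otherwise, e.g.\ if $s,t$ lie on the same side then $f_i = b_i$). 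Since the claim is only ever invoked to transfer minimum $s$-$t$ cuts between $G$ and $G_F$, the restriction to $s$-$t$ cuts is the relevant case.
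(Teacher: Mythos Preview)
Your proof is correct and follows essentially the same approach as the paper: both decompose $F$ into $\nu(F)$ edge-disjoint $s$--$t$ paths and use the observation that each such path crosses from $S$ to $T$ exactly one more time than from $T$ to $S$, then combine this with the weight-$2$ convention on reversed edges in $G_F$. The only cosmetic difference is that the paper first writes $w_{G_F}(S,T) = |E_{G\setminus F}(S,T)| + 2\,w_F(T,S)$ and $\nu(F) = w_F(S,T) - w_F(T,S)$ as aggregate identities before combining, whereas you do the bookkeeping edge-by-edge; the underlying argument is identical.
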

\begin{proof} First, observe that $w_{G_F}(S,T) = |E_{G \setminus F}(S,T)| + 2 \cdot w_F(T,S)$. This is because we can write $G_F$ as a union of two graphs: an undirected unweighted graph $G \setminus F$ (which contributes to the first term) and a directed graph which is a reversal of edges of $F$ with weights 2 (which contributes to the second term).

Observe further that $\nu(F) = w_F(S,T) - w_F(T,S)$. This is because the edges of $F$ can be partitioned into $\nu(F)$ disjoint $s$-$t$ directed paths. Each of those start at $s \in S$ and ends at $t \in T$. As such, it must cross from $S$ to $T$ exactly one more time than it crosses from $T$ to $S$. Thus, each of the $\nu(F)$ disjoint $s$-$t$ directed paths contribute a value of $1$ to $w_F(S,T) - w_F(T,S)$. 

Combining the two observations, we have: 
\begin{align*}
\nu(F) + |E_{G_F}(S,T)|  = w_F(S,T) + w_F(T,S) + |E_{G \setminus F}(S,T)| = |E_G(S,T)|
\end{align*}
as wished.
\end{proof}

We consider the implementation of Algorithm~\ref{alg:meta-alg} in two computational models: the \emph{cut-query model} and the \emph{two-player communication model}, where we aim to use the (modified) RSW algorithm as a baseline to implement Step~2 in both settings. However, applying the RSW algorithm to the residual graph $G_F$ introduces two key challenges. First, we only have cut-query access to the original graph $G$, not to the residual graph $G_F$. In fact, throughout our algorithm, we shall only need to make cut queries to the graph $G\setminus F$. Fortunately, since the flow graph $F$ is explicitly computed in Step~1a, we can simulate cut queries to $G\setminus F$ using a single query access to $G$ and the explicit description of $F$. Second, the residual graph $G_F$ is a \emph{mixed} graph, containing both directed and undirected edges. As the RSW algorithm was originally designed for undirected graphs, it does not directly apply to this setting. To address this, we show that the RSW algorithm can be carefully modified to operate on mixed graphs in both computational models without altering its correctness. These modifications also allow for a query and communication complexity to be parameterized by the min-cut/max-flow value of the residual graph $G_F$ as well.

For now, we summarize the results of applying the modified RSW algorithm in both computational models with the full implementation details deferred to \Cref{sec:RSW}.

\begin{restatable}[Modified RSW via cut query]{lem}{RSWcutquery}
 Let $\mathcal{G} = (G, F)$ be an input mixed graph where $G$ is undirected and unweighted, and $F$ be a directed bounded-weighted that is known explicitly. Then, there exists an algorithm that makes $\tO(n^{4/3} \nu(\mathcal{G})^{1/3})$ cut-queries to $G$ and w.h.p. computes a minimum $s$-$t$ cut of $\mathcal{G}$.
\label{lem:RSW_cut_query}
\end{restatable}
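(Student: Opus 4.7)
The approach adapts the RSW algorithm to the mixed-graph setting $\mathcal{G} = (G, F)$ by exploiting the fact that $F$ is given to the algorithm explicitly and therefore for free. Concretely, any cut query on $\mathcal{G}$ decomposes as $w_\mathcal{G}(S,T) = |E_G(S,T)| + w_F(S,T)$, where the second summand is computed locally without touching the oracle. The whole task therefore reduces to running an RSW-style procedure that invokes $|E_G(\cdot,\cdot)|$ through the cut oracle for $G$ while tracking the fixed contribution of $F$ throughout. Analogously, find-edge queries in $\mathcal{G}$ reduce to find-edge queries in $G$ (via Proposition \ref{prop:ISS_cq}) plus look-ups in the known $F$.

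The pipeline has three stages. First, I would build an $\eps$-cut sparsifier $H$ of $G$ using Lemma \ref{lem:sparsifier_cq}, at a cost of $\tO(n/\eps^2)$ cut queries; $H$ has $\tO(n/\eps^2)$ edges and maximum weight $\tO(\eps^2 n)$. The mixed graph $H \cup F$ then serves as an approximate oracle for every cut value of $\mathcal{G}$ within multiplicative $(1\pm\eps)$ error on the $G$-part. Second, I would follow the RSW recipe to produce a vertex partition $(V_1,\ldots,V_z)$ such that every minimum $s$-$t$ cut of $\mathcal{G}$ respects the partition (so contracting each $V_i$ preserves all minimum $s$-$t$ cuts of $\mathcal{G}$), while the contracted mixed graph $\mathcal{G}\langle V_1,\ldots, V_z\rangle$ has a controlled number of edges. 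Third, I would invoke Lemma \ref{lem:learn_graph_cq} to learn the contracted graph exactly, then compute its minimum $s$-$t$ cut locally and pull it back through the contraction.

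The critical bound is on the size of the contracted graph. The essential edges forming a maximum flow of $\mathcal{G}$ constitute a flow subgraph of value $\nu = \nu(\mathcal{G})$ whose sparsifier image has maximum edge weight $W = \tO(\eps^2 n)$; by Lemma \ref{lemma:flow_cover_weighted}, the corresponding weighted flow graph has total weight $\tO(n\sqrt{\nu W}) = \tO(n^{3/2}\eps\sqrt{\nu})$, which upper-bounds the number of unit edges of $G$ that must eventually be revealed after contraction. The remaining contracted edges, which carry no flow, can be bounded via Lemma \ref{lem:small_connectivity_edge_bounds} by arranging each $V_i$ to be $\Omega(\eps\nu)$-edge-connected internally. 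The total query complexity is thus $\tO(n/\eps^2 + n^{3/2}\eps\sqrt{\nu})$, and setting $\eps = (n\nu)^{-1/6}$ equalizes the two terms to yield $\tO(n^{4/3}\nu^{1/3})$, matching the claimed bound.

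The main obstacle will be verifying that the RSW contraction procedure remains sound on mixed graphs. The original analysis sets up random-sampling concentration inequalities on cut values in order to identify pairs of vertices that lie together in every minimum $s$-$t$ cut. Since the contribution of $F$ to every cut value is deterministic and explicitly computable, I expect these concentration arguments to go through with purely syntactic changes—namely by shifting every cut-value distribution by the exact quantity $w_F(S,T)$. What requires genuine care is ensuring that the subroutines which (i) exploit the sparsifier to certify that a contraction is safe, and (ii) solve the minimum $s$-$t$ cut problem on the final contracted mixed graph, both carry over cleanly to the directed weighted component introduced by $F$.
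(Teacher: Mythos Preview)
Your proposal is correct and follows essentially the same approach as the paper's Algorithm~\ref{alg:RSW+}: sparsify $G$, add the explicit $F$ to obtain a directed $\eps$-sparsifier $H$ of $\mathcal{G}$, compute a max flow in $H$ locally, contract along $\Theta(\eps\,\nu)$-connected components of the residual, learn the contracted graph via Lemma~\ref{lem:learn_graph_cq}, and balance with $\eps = (n\nu)^{-1/6}$. The only refinement relative to your sketch is that the maximum flow is computed in the fully known graph $H = G' \cup F$ rather than in $\mathcal{G}$; this is what makes that step local and what lets the weight bound $W = \tO(\eps^2 n)$ feed directly into Lemma~\ref{lemma:flow_cover_weighted}.
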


\begin{restatable}[Modified RSW via communication]{lem}{RSWcomm}
 Let $\mathcal{G} = (G, F)$ be an input mixed graph where $G$ is undirected and unweighted and is partitioned among two players, and $F$ be a directed bounded-weighted that is known explicitly. Then, there exists an $\tO(n \cdot \nu(\mathcal{G})^{2/3})$-bit deterministic protocol that Alice and Bob jointly outputs a minimum $s$-$t$ cut of $\mathcal{G}$.
\label{lem:RSW_comm}
\end{restatable}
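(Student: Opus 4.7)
The plan is to mirror the high-level structure of the cut-query algorithm from Lemma \ref{lem:RSW_cut_query}, replacing each cut-query primitive with its communication analogue. Since $F$ is known to both parties, every operation involving only $F$ is free in the communication model, and we only pay bits to handle the undirected unweighted part $G$. All primitives we use (the sparsifier from Lemma \ref{lem:sparsifier_comm}, the partition identification, and direct edge transmission) are deterministic, so the resulting protocol is deterministic.

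I would proceed in three phases. First, Alice and Bob jointly compute an $\eps$-cut sparsifier $H$ of $G$ via Lemma \ref{lem:sparsifier_comm} at a cost of $\tO(n/\eps^2)$ bits, for a parameter $\eps$ to be tuned later; since $F$ is public, both parties can then locally compute the minimum $s$-$t$ cut of the mixed graph $(H,F)$ to obtain a $(1\pm\eps)$-estimate of $\nu(\mathcal{G})$ within the same bit budget. Second, each party \emph{locally} identifies a maximal partition $(V_1, \dots, V_z)$ of $V$ (with $s$ and $t$ in separate blocks) such that every block $V_i$ is a $k$-connected component of $H$, for threshold $k = \Theta(\eps\, \nu(\mathcal{G}))$. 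By the standard safe-contraction argument underlying Lemma \ref{lem:RSW_cut_query}, combined with the fact that the directed edges of $F$ only add to cut values in $\mathcal{G}$ and are known exactly, no minimum $s$-$t$ cut of $\mathcal{G}$ splits any block, and contracting the blocks preserves every minimum $s$-$t$ cut of $\mathcal{G}$.

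Third, the parties jointly learn the contracted mixed graph $\bar{\mathcal{G}} = (G\langle V_1, \dots, V_z\rangle,\, F\langle V_1, \dots, V_z\rangle)$: the directed part is free to both, while for the undirected part each player simply transmits all cross-block edges of its private graph, at a cost of $\tO(|G\langle V_1, \dots, V_z\rangle|)$ bits. By Lemma \ref{lem:small_connectivity_edge_bounds}, the contracted sparsifier $\bar{H}$ has no $k$-connected component, so $w(\bar{H}) \leq k(z-1) = O(kn)$, and hence $|G\langle V_1, \dots, V_z\rangle| = O(w(\bar{H}) / (1-\eps)) = O(\eps\, \nu(\mathcal{G})\, n)$ via the sparsifier guarantee. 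Both parties then locally compute a minimum $s$-$t$ cut on the small $\bar{\mathcal{G}}$ and lift it back to $V$.

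Balancing the total cost $\tO(n/\eps^2 + \eps\, \nu(\mathcal{G})\, n)$ by setting $\eps = \nu(\mathcal{G})^{-1/3}$ yields the claimed $\tO(n\, \nu(\mathcal{G})^{2/3})$. The main obstacle I anticipate is verifying that the safe-contraction step carries over to the \emph{mixed} graph: the sparsifier approximates cuts only in $G$, whereas $F$ contributes exactly (and asymmetrically, through edge directions) to every cut value. Since $F$ is public, I would resolve this by performing the $k$-connectivity decomposition locally on an auxiliary graph obtained from $H$ enriched with the symmetric skeleton of $F$, which requires no additional communication.
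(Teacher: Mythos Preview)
Your safe-contraction step does not work as stated, and the proposed patch (enriching $H$ with the symmetric skeleton of $F$) does not rescue it. If $(S^*,T^*)$ is a minimum $s$--$t$ cut of $\mathcal{G}$, all you know about the undirected part is $|E_G(S^*,T^*)|\le \nu(\mathcal{G})$, hence $w_H(S^*,T^*)\le (1+\eps)\nu(\mathcal{G})$. If $(S^*,T^*)$ splits a block $V_i$ that is $k$-connected in $H$ with $k=\Theta(\eps\,\nu(\mathcal{G}))$, you only obtain $w_H(S^*,T^*)\ge k=\Theta(\eps\,\nu(\mathcal{G}))$, which is no contradiction. Taking $k=\Theta(\nu(\mathcal{G}))$ would restore correctness but inflate the edge count to $\Theta(n\,\nu(\mathcal{G}))$, killing the bound. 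The paper's Algorithm~\ref{alg:RSW+} avoids this by first computing a maximum $s$--$t$ flow $\mathcal{F}$ of the \emph{mixed} sparsifier $G'\cup F$ and partitioning based on $K=(G'\cup F)\setminus(\mathcal{F}\cup F)$. Then for any $s$--$t$ cut one has $w_{\mathcal{F}}(S,T)\ge (1-\eps)\nu(\mathcal{G})$, so an \emph{additional} $3\eps\,\nu$ from splitting a $K$-connected block does push the total above $\nu(\mathcal{G})$ (Claim~4.1). Your ``skeleton of $F$'' suggestion does not supply this $(1-\eps)\nu$ baseline; the flow subtraction is the missing idea.

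Switching to the correct $K$-based partition then breaks your clean edge-count argument: $K$ is not a sparsifier of $G$, so $w(K\langle V_1,\dots,V_z\rangle)\le kn$ no longer bounds $|G\langle V_1,\dots,V_z\rangle|$; you must additionally control $|\mathcal{F}|$. This is where the paper uses a second ingredient you omitted: each player first replaces her graph by an $f$-forest packing (\Cref{lem:packing_preserve_mc}), which preserves all minimum $s$--$t$ cuts of $\mathcal{G}$ while reducing the average degree to $O(\nu(\mathcal{G}))$. \Cref{lem:sparsifier_small_weights} then yields a sparsifier with maximum weight $\tO(\eps^2\nu)$ rather than the $\tO(\eps^2 n)$ you get from \Cref{lem:sparsifier_comm}, and \Cref{lemma:flow_cover_weighted} gives $|\mathcal{F}|\le n\sqrt{\nu\cdot\tO(\eps^2\nu)}=\tO(\eps n\nu)$. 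Without the forest-packing step the flow support is only bounded by $\tO(\eps n^{3/2}\nu^{1/2})$, which after balancing recovers the cut-query bound $\tO(n^{4/3}\nu^{1/3})$ but not the claimed $\tO(n\,\nu^{2/3})$.
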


We now restrict our attention to Step 1 as it is key algorithmic design of this work. We first state the results on both computation models.

\begin{restatable}[Finding a large flow via cut query]{lem}{largeflowcutquery} Let $G$ be an input graph, and let $\Delta \leq n$ be an additive error parameter. Then, there is a randomized algorithm that computes $H \preceq G$ with $\nu(H) \geq \nu(G) - \Delta$. The algorithm makes $\widetilde{O}\left(\frac{n^2}{\sqrt{\Delta}} + \frac{n^3}{\Delta^2}\right)$ cut queries to $G$ and succeeds w.h.p. 
\label{lem:large_flow_qc}
\end{restatable}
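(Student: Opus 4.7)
The plan is to incrementally construct a subgraph $H \preceq G$ together with a max-flow $F$ in $H$, stopping as soon as $\nu(F) \ge \nu(G) - \Delta$. The two terms in the query bound correspond to two separate ingredients: a one-shot cut sparsifier that certifies the stopping criterion (contributing the $\tilde O(n^3/\Delta^2)$ term), and a Dinic-style augmentation loop in the residual graph $G_F$ that exploits the fact that augmenting paths remain short so long as we are more than $\Delta$ away from optimum (contributing the $\tilde O(n^2/\sqrt{\Delta})$ term).

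First I would apply \Cref{cor:approx_nu_cq} with $\varepsilon = \Theta(\Delta/n)$ to compute, in $\tilde O(n/\varepsilon^2) = \tilde O(n^3/\Delta^2)$ cut queries, an estimate $\tilde\nu$ with $|\tilde\nu - \nu(G)| \le \Delta/3$. Setting $\nu^{*} := \lceil \tilde\nu - \Delta/3 \rceil$ yields an explicit integer target satisfying $\nu(G) - \Delta \le \nu^{*} \le \nu(G)$, so it suffices to grow $F$ until $\nu(F) \ge \nu^{*}$.

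The core step is then to grow $F$ up to value $\nu^{*}$ via successive blocking-flow phases in the residual graph $G_F$, which we can simulate by cut queries to $G$ together with the explicit description of $F$. While $\nu(F) < \nu^{*}$, we have $\nu(G_F) > \Delta$, and by the standard length-depth inequality for unit-capacity flows the distance from $s$ to $t$ in $G_F$ is at most $\ell := O(n/\sqrt{\Delta})$. Each phase (i) builds the layered graph of $G_F$ to depth $\ell$ using repeated $\FindEdge$ queries from \Cref{prop:ISS_cq} and (ii) augments a blocking flow along it; by the classical Dinic analysis the BFS depth strictly increases across phases, so there are at most $O(\ell) = O(n/\sqrt{\Delta})$ phases before the stopping condition is met.

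The main obstacle is bounding the per-phase query cost tightly enough. Two independent counts are needed. First, the successful $\FindEdge$ responses: every such response adds a new edge to $H$, and the discovered edges ultimately support a flow graph of value at most $\nu(G) \le n$, which by \Cref{lem:flow_cover} contains $O(n^{3/2})$ edges, giving at most $\tilde O(n^{3/2})$ successful queries throughout the whole algorithm. Second, the ``NONE'' responses: within a single phase a layered BFS issues at most $O(n)$ terminating probes (one per frontier-vertex attempt against the set of unvisited vertices, accumulated over all $\ell$ layers), and summed over the $O(n/\sqrt{\Delta})$ phases this totals $\tilde O(n^2/\sqrt{\Delta})$. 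Adding the sparsifier and augmentation costs yields the claimed $\tilde O(n^2/\sqrt{\Delta} + n^3/\Delta^2)$ bound, and the high-probability guarantee is inherited from \Cref{cor:approx_nu_cq}. The delicate part of the accounting will be to ensure that across phases the algorithm never re-probes a $\FindEdge$ pair whose ``NONE'' verdict is still valid, which I plan to maintain by tracking, for each vertex, the collection of confirmed non-neighbourhoods and extending the BFS only against pairs that genuinely could host an unexplored edge of $G \setminus F$.
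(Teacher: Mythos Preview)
Your high-level route—run Dinic-style blocking-flow phases in $G_F$ and stop once $\nu(F)\ge\nu^*$—is genuinely different from the paper's proof. The paper never touches blocking flows here: it proves \Cref{lem:large_flow_qc} via a potential (``witness'') argument, defining for each candidate subgraph a set $\mathcal{W}(H)$ with $|\mathcal{W}(H)|\le n^{O(n)}$, and then using a custom $\mathsf{FindLongEdge}$ primitive (driven by locally computable average-cut values $\hat{z}_v$) to locate, in $O(\log n)$ cut queries, a single edge of $G\setminus H$ that kills an $\Omega(\sqrt{k}/n)$ fraction of the remaining witnesses. Your approach is essentially an early-stopped variant of the Anand--Saranurak--Wang blocking-flow implementation; both routes can reach the same $\widetilde{O}(n^2/\sqrt{\Delta}+n^3/\Delta^2)$ bound, but the witness argument is the paper's new contribution and is what gets reused (in a sharpened form) for the communication result.

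That said, your accounting of the successful $\FindEdge$ responses has a real gap. You invoke \Cref{lem:flow_cover} to cap the discovered edges at $O(n^{3/2})$, but that lemma bounds the edge count of a \emph{flow graph} (a disjoint union of $s$--$t$ paths), not of an arbitrary explored subgraph. The edges you reveal during layered BFS and blocking-flow DFS include dead-end edges that never carry any flow; they do not form a flow graph, and nothing prevents their number from exceeding $O(n^{3/2})$. The correct charge is per phase: in phase $i$ (distance $i\le L=O(n/\sqrt{\Delta})$), a DFS blocking flow with the standard ``delete dead-end vertices'' optimization makes at most $b_i\cdot i$ advances along augmenting-path edges plus at most $n$ dead-end advances (one per deleted vertex), together with $O(n)$ NONE probes. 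Summing over phases gives $\sum_i(b_i\cdot i+O(n))\le L\cdot\nu(G)+L\cdot O(n)=O(n^2/\sqrt{\Delta})$. Once you replace the \Cref{lem:flow_cover} appeal with this phase-wise count the argument goes through; the ``never re-probe a stale NONE'' bookkeeping you flag as delicate is then unnecessary, since the per-phase $O(n)$ NONE bound already suffices.
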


\begin{restatable}[Finding a large flow via communication]{lem}{largeflowcomm} Let $G$ be an input graph, and let $\Delta \leq n$ be an additive error parameter. Then, there is a deterministic protocol that computes $H \preceq G$ with $\nu(H) \geq \nu(G) - \Delta$. The protocol uses $\widetilde{O}\left(\frac{n \cdot \nu(G)}{\sqrt{\Delta}} + \frac{n \cdot \nu(G)^2}{\Delta^2}\right)$ bits of communication and always succeeds.
\label{lem:large_flow_comm}
\end{restatable}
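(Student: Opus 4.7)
The plan is to follow the algorithmic scaffolding of the cut-query analog (\Cref{lem:large_flow_qc}) while replacing each of its primitives with a deterministic, communication-efficient counterpart. The substitutions $n^2 \to n\nu(G)$ in the first term and $n^3 \to n\nu(G)^2$ in the second suggest that the improvement comes uniformly from bounding the edge count of intermediate flow subgraphs by $O(n\sqrt{\nu})$ (via \Cref{lem:flow_cover}) rather than by $O(n^{3/2})$, and from setting sparsifier error relative to $\nu$ instead of $n$.

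As a preprocessing step, I would invoke \Cref{cor:approx_nu_comm} with a constant $\eps$ to obtain a $2$-approximation $\tilde\nu$ of $\nu(G)$ using $\tO(n)$ bits; thereafter the protocol may treat $\nu := \nu(G)$ as known up to a constant factor. I would then mirror the cut-query algorithm step by step, substituting each primitive with its communication analog. Every call to \Cref{lem:sparsifier_cq} becomes a call to \Cref{lem:sparsifier_comm} at the same asymptotic cost, and the preprocessing sparsifier is instantiated at accuracy $\eps = \Theta(\Delta/\nu)$ so as to contribute $\tO(n/\eps^2) = \tO(n\nu^2/\Delta^2)$ bits to the total, matching the second term of the claimed bound. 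Each find-edge query $\FindEdge(A,B)$ is simulated in $O(\log n)$ bits via a symmetric adaptation of \Cref{prop:ISS_cq}: Alice and Bob alternate binary searches on their private edge sets restricted to $A \times B$, confirming a ``NONE'' output in $O(1)$ bits. Procedures that learn a contracted graph $G\langle V_1,\ldots,V_z\rangle$ cost $\tO(z + |G\langle V_1,\ldots,V_z\rangle|)$ bits, exactly mirroring \Cref{lem:learn_graph_cq}.

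To bound the first term, I would argue that every graph the algorithm actually learns is either a flow subgraph of value at most $\nu$ or is covered by one, so \Cref{lem:flow_cover} caps its edge count by $O(n\sqrt{\nu})$. Aggregated over the $\tO(\sqrt{\nu/\Delta})$ Dinitz-style phases implicit in the cut-query algorithm, this contributes $\tO(n\sqrt{\nu}) \cdot \tO(\sqrt{\nu/\Delta}) = \tO(n\nu/\sqrt{\Delta})$ bits. Summing with the sparsifier cost yields the advertised bound $\tO(n\nu/\sqrt{\Delta} + n\nu^2/\Delta^2)$.

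The main obstacle I anticipate is the bookkeeping: I must verify that at \emph{every} step where the cut-query algorithm pays $\tO(n)$ per unit of work, the communication protocol actually pays only $\tO(\sqrt{\nu})$ per unit of work, which hinges on each learned edge set being (contained in) a flow subgraph of value at most $\nu$ so that \Cref{lem:flow_cover} applies. A secondary concern is derandomization: while the cut-query algorithm is randomized, \Cref{lem:sparsifier_comm} is deterministic and the find-edge simulation above is deterministic, so the full pipeline can be kept deterministic provided the cut-query algorithm's use of randomness is confined to sparsifier construction. Once both are checked, the per-phase cost accounting gives the claimed complexity.
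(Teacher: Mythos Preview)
Your proposal has a genuine gap rooted in a misreading of what the cut-query algorithm behind \Cref{lem:large_flow_qc} actually does. That algorithm (Section~\ref{sec:large_flow_cq}) is \emph{not} Dinitz-style and does not learn flow subgraphs or contracted graphs; it is the witness-based greedy of \Cref{sec:setup}. Each iteration adds a single edge found via one $\FLE$ call costing $O(\log n)$ queries, and the total iteration count $\tO(n^2/\sqrt{k})$ arises because the witness set has size $n^{O(n)}$ and shrinks by a $(1-\Omega(\sqrt{k}/n))$ factor per edge. There is no step that ``pays $\tO(n)$ per unit of work'' to be replaced by $\tO(\sqrt{\nu})$ via \Cref{lem:flow_cover}; the references to contracted-graph learning and Dinitz phases belong to the RSW modification (\Cref{sec:RSW}), which is the \emph{other} step of the framework. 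Simulating the actual cut-query algorithm bit-for-bit would therefore give you $\tO(n^2/\sqrt{\Delta}+n^3/\Delta^2)$ bits, not the claimed bound.

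The paper obtains the $n\nu/\sqrt{\Delta}$ term by a new idea you are missing: it replaces the witness notion by a \emph{residual witness} defined with respect to a fixed flow $F\preceq H$ of value $f-k$, and then proves that the number of residual witnesses of $H$ is at most $n^{O(f)}$ rather than $n^{O(n)}$ (\Cref{claim:witnesses_ub}). This count requires a preprocessing step making every vertex reachable from $s$ in $H$, followed by a careful labeling argument along the $f-k$ flow paths of $F$. With the smaller starting count, the same greedy needs only $\tO(nf\cdot\sqrt{k}/k')$ iterations (\Cref{claim:main_claim_comm}), and the recursion over halving $k$ gives $\tO(nf/\sqrt{\Delta})$ total bits. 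Your treatment of the second term via a sparsifier at accuracy $\Theta(\Delta/\nu)$ is correct and matches the paper, but the first term cannot be recovered without the residual-witness machinery.
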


Our algorithms for both lemmas share the same principle. To best understand it, we study a simpified question in a model-independent setting: suppose we are given three parameters $k' < k \leq f$ with a promise that $\nu(G) \geq f$, and also given for free is a graph $H \preceq G$ such that $\nu(H) \geq f-k$. Our goal is to sequentially add (a small number of) edges from $G \setminus H$ to $H$ so that it eventually becomes $H^{\text{after}}$ with $\nu(H^{\text{after}}) \geq f-k'$. In later sections, we show that an efficient algorithm for this toy question leads to both \Cref{lem:large_flow_qc} and \Cref{lem:large_flow_comm}.

At a high level, our algorithm proceeds in a greedy-like manner. For any graph $\mathcal{G}$, we define a set of \emph{witnesses}, denoted $\mathcal{W}(\mathcal{G})$, with an invariant that $\nu(\mathcal{G}) \geq f - k'$ if and only if $\mathcal{W}(\mathcal{G}) = \emptyset$. Our procedure starts with $H_1 \leftarrow H$. In each round $i = 1, 2, \ldots$, we determine an edge $e_i \in G \setminus H_i$ that reduces the number of witnesses by a non-trivial fraction; that is, $e_i$ is such that $|\mathcal{W}(H_i \cup \{e_i\})| \leq (1-\gamma) \cdot |\mathcal{W}(H_i)|$ for some fixed value $\gamma $ that may depend on $n,k,$ or $k'$. We call such edges with that property \emph{$\gamma$-good.} We then set $H_{i+1} \leftarrow H_i \cup \{e_i\}$, and repeat. The procedure terminates at the earliest round $T$ where $\mathcal{W}(H_T) = \emptyset$, ensuring that the final graph $H^{\text{after}} := H_T$ has $\nu(H_T) \geq f - k'$ via our invariant.

Following this greedy-like approach, there are two main questions that need to be addressed. First, what is the correct definition of a \emph{witness}? Recall that we want to maintain the invariant that $\nu(\mathcal{G}) \geq f - k'$ if and only if $\mathcal{W}(\mathcal{G}) = \emptyset$. A strong candidate for a witness is a cut of size at most $f - k' - 1$ in $\mathcal{G}$, thanks to the Max-Flow Min-Cut Theorem. This, however, is only partially correct. We define a witness as a \emph{redundant} representation of a small cut on $\mathcal{G}$, consisting of a cut $(S,T)$ of small size $|E_\mathcal{G}(S,T)| \leq f - k' - 1$, the set $E_\mathcal{G}(S,T)$ of edges crossing such cut, and a set $X$ of \emph{non-existing edges} that cross this cut, with $|X| = f - k' - 1 - |E_\mathcal{G}(S,T)| \geq 0$. With this notion, we still maintain the invariant, while the added redundancy (i.e., the non-existing edges) allows us to pre-allocate a set of edges that could potentially be added to $\mathcal{G}$ in the future, ensuring that the cut value of $(S, T)$ remains at most $f - k' - 1$.

\begin{restatable}[Witness]{defi}{witness}
Let $\mathcal{G} = (V, E)$ be an unweighted undirected graph with two designated terminals $s$ and $t$. A tuple $\mathsf{W} = (S, T, Y)$ is called a \emph{witness} of $\mathcal{G}$ if the following conditions hold:  

\begin{enumerate}  
    \item $(S, T)$ is an $s$-$t$ cut of $\mathcal{G}$ with value $|E_\mathcal{G}(S, T)| \leq f - k' - 1$.  
    \item $Y \subseteq S \times T$ such that $|Y| = f-k'-1$ and $E_\mathcal{G}(S, T) \subseteq Y$.
\end{enumerate}
Finally, let $\mathcal{W}(\mathcal{G})$ denote the set of all witnesses of $\mathcal{G}$.
\label{witness}
\end{restatable}

Notice that the notion of $Y$ here is consistent with $E_\mathcal{G}(S, T) \cup X$ as explained in the previous paragraph. For clarity, we may relax the notation and write a witness as $(S, T, E_\mathcal{G}(S, T) \cup X)$ if needed. However, it is crucial to view $Y$ as a union of $E_\mathcal{G}(S, T)$ and $X$ rather than two individual sets $E_\mathcal{G}(S, T)$ and $X$.

\begin{restatable}[Witness Invariant]{inv}{invariant} For any graph $\mathcal{G}$, we have $\nu(\mathcal{G}) \geq f-k'$ if and only if $\mathcal{W}(\mathcal{G}) = \emptyset$.
\label{invariant}
\end{restatable}

The second question is how quickly we can decrease the number of witnesses. Following the above definition of witnesses, we can naturally describe how a witness is removed upon an insertion of $(a,b)$: say a pair of vertices $(a, b) \notin E$ \emph{kills} $\mathsf{W}$ iff $\mathsf{W}$ is \emph{not} a witness of $\mathcal{G} \cup (a,b)$. \footnote{Notice that this is defined with respect to a \emph{pair} of vertices $(a,b)$ rather than an \emph{edge} $(a,b)$. This is due to a technical reason that in the cut-query model, we (as an algorithm designer) do not know if $(a,b)$ exists in the underlying graph until we somehow ``learn'' it. For the ease of understanding, it is helpful to think of $(a,b)$ as an actual edge in $G$ that is not presented in $H$ yet.}

Following this description, observe that if $a \notin S$ or $b \notin T$,  then $(a,b)$ can never kill $\mathsf{W}$. Furthermore, even if $a \in S$ and $b \in T$ but $(a,b) \in Y$, then $\mathsf{W}$ remains a witness of $\mathcal{G} \cup (a,b)$. This leads us to the following equivalent interpretation of invalidating a witness.

\begin{restatable}[Killing a witness]{defi}{kill}Let $\mathcal{G} = (V, E)$ be an unweighted and undirected graph with terminals $s$ and $t$, and let $\mathsf{W} = (S, T, Y)$ be one of its witnesses. We say a pair of vertices $(a,b)$ \emph{kills} witness $\mathsf{W} = (S,T,Y)$ if and only if $a \in S$, $b \in T$, and $(a, b) \notin Y$. 
\label{def:kill}
\end{restatable}
In other words, these conditions together mean that $(a,b)$ are on the different side of $(S,T)$ and does \emph{not} belong to the pre-allocated cut $Y$ of $(S, T)$, thereby invalidating the witness.

With this definition, we can establish an upper bound $n^{O(n)}$ on the initial number of witnesses. To see this, there are up to $2^n$ possible choices of $(S,T)$, and for each fixed value of $(S,T)$, there are up to $n^{2(f-k'-1)} \leq n^{2n}$ choices of $Y$. Due to our greedy-like approach, we assert that each iteration reduces the number of witnesses by a fraction of $\gamma$. Therefore, we only need $\widetilde{O}(n/\gamma)$ iterations to eliminate all witnesses, thereby terminating the algorithm.

Finally, for this argument to go through, we need to determine an appropriate value of $\gamma$ such that a $\gamma$-good edge always exists in $G \setminus H_i$ in every iteration $i$. Depending on the computation model, we also need to be able to find such edge efficiently. Without these guarantees, our algorithm may spend too much resource in finding a $\gamma$-good edge, or even worse may fail horribly if such edge did not exists at all. This is where the ``promise'' that $\nu(G) \geq f$ and the ``redundancy'' of witness's definition comes into rescue. Under these assumptions, by setting $\gamma = \Omega\left(\frac{k'}{n\sqrt{k}}\right)$, a $\gamma$-good edge always exists.

\begin{claim} For any iteration $i$ such that $\mathcal{W}(H_i) \ne \emptyset$, there exists an edge $e_i \in G \setminus H_i$ such that 
$$|\mathcal{W}(H_i \cup  e_i)| \leq \left(1-\Omega\left(\frac{k'}{n\sqrt{k}}\right)\right) \cdot |\mathcal{W}(H_i)|.$$ 
\label{clm:good_edge_exists}
\end{claim}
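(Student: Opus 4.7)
The plan is a double-counting (averaging) argument: I will exhibit a set $\mathcal{E} \subseteq E(G) \setminus E(H_i)$ of size $O(n\sqrt{k})$ such that every witness $\mathsf{W} \in \mathcal{W}(H_i)$ is killed by at least $k' + 1$ edges of $\mathcal{E}$. A pigeonhole over $\mathcal{E}$ then yields an edge $e$ that kills an $\Omega(k'/(n\sqrt{k}))$ fraction of all witnesses, which is exactly the bound required.

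To set up the flows, note that $\mathcal{W}(H_i) \neq \emptyset$ together with \Cref{invariant} gives $\nu(H_i) \leq f - k' - 1$, while $H \preceq H_i$ (maintained throughout the greedy process) yields $\nu(H_i) \geq \nu(H) \geq f - k$. Let $F_i$ be a non-circular max flow of $H_i$, so $\nu(F_i) = \nu(H_i) \in [f - k, f - k' - 1]$. By \Cref{prop:sum_flows} the residual graph $G_{F_i}$ satisfies $\nu(G_{F_i}) = \nu(G) - \nu(F_i) \geq k' + 1$, so I can extract a non-circular augmenting flow $\Phi \preceq G_{F_i}$ with $\nu(\Phi) = f - \nu(F_i) \leq k$. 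Since the only weighted edges of $G_{F_i}$ are the reversed-$F_i$ edges of weight $W = 2$, \Cref{lemma:flow_cover_weighted} gives $w(\Phi) = O(n\sqrt{\nu(\Phi) \cdot W}) = O(n\sqrt{k})$.

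I then define $\mathcal{E}$ to be the set of forward edges of $\Phi$ (those in $G \setminus F_i$, each carrying unit weight in the residual), so $|\mathcal{E}| \leq w(\Phi) = O(n\sqrt{k})$. Augmenting $F_i$ along $\Phi$ produces a flow $F_G$ in $G$ with $\nu(F_G) = f$ and $F_G \setminus F_i \subseteq \mathcal{E}$. For any witness $\mathsf{W} = (S, T, Y)$, flow conservation forces at least $\nu(F_G) = f$ forward crossings of $F_G$ over $(S, T)$; at most $|Y| = f - k' - 1$ of them can lie in $Y$, so at least $k' + 1$ lie in $E_G(S, T) \setminus Y$. Each such edge kills $\mathsf{W}$ by \Cref{def:kill}, and since it crosses $(S, T)$ but is not in $Y \supseteq E_{H_i}(S, T)$, it does not lie in $E_{H_i}$ at all; hence it sits in $F_G \setminus H_i \subseteq F_G \setminus F_i = \mathcal{E}$, so these killers are automatically in $\mathcal{E} \cap (E(G) \setminus E(H_i))$. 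Double-counting the $\geq (k' + 1)\,|\mathcal{W}(H_i)|$ witness-killer incidences across the $|\mathcal{E}| = O(n\sqrt{k})$ candidate edges delivers the desired $e$.

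The key idea—and the step I expect to think most carefully about—is the decomposition $F_G = F_i + \Phi$. Using any max flow in $G$ of value $f$ would give only $O(n\sqrt{f})$ candidates via the unweighted \Cref{lem:flow_cover}, leading to the weaker bound $\Omega(k'/(n\sqrt{f}))$, which is insufficient whenever $k \ll f$. Applying the weighted \Cref{lemma:flow_cover_weighted} to the mixed residual $G_{F_i}$ (whose weight is bounded by $W = 2$), together with picking $\Phi$ to be non-circular through standard flow decomposition into edge-disjoint $s$-$t$ paths, is exactly what converts the $\sqrt{f}$ factor into the target $\sqrt{k}$.
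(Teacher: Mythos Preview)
Your argument is correct and follows essentially the same averaging scheme as the paper's proof sketch: build a small augmenting set $Q$ (your $\mathcal{E}$) of size $O(n\sqrt{k})$ by routing a flow of value $\le k$ through the residual of a large flow in $H_i$, observe that every witness is killed by at least $k'+1$ edges of this set, and average. The only cosmetic differences are that the paper fixes a flow of value exactly $f-k$ (rather than a max flow of $H_i$) and invokes the unweighted edge-count bound \Cref{lem:flow_cover} directly rather than the weighted \Cref{lemma:flow_cover_weighted}; either choice yields the same $O(n\sqrt{k})$ bound since the residual weights are bounded by~$2$.
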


For now, we will only sketch the proof of \Cref{clm:good_edge_exists}. Later down the line, we will give a constructive variant of this claim which allows us to efficiently recover such edge $e_i$ via cut-queries.
\\

\proofsketch{Let $Q \subseteq G \setminus H_i$ be such that $|Q| = O(n\sqrt{k})$ and $\nu(H_i \cup Q) \geq f$. Such set $Q$ exists by construction: take a flow subgraph $F \preceq H_i$ of size $f-k$ and let $Q$ be the set of all edges in a non-circular flow of size $k$ of $G_F$ that has not appeared in $H_i$. The bound of $|Q| = O(n\sqrt{k})$ is obtained via \Cref{lem:flow_cover}.

Suppose we were to add $Q$ to $H_i$. For any witness $\mathsf{W} = (S,T,Y)$ of $H_i$, there will be at least $f$ edges among $H_i \cup Q$ that crosses the cut $(S,T)$. However, we only pre-allocate $f-k'-1$ edges to $Y$ and those already include $E_{H_i}(S,T)$. Thus, there are at least $k'+1$ edges among $Q$ that kills $\mathsf{W}$. Since this holds for any witness $\mathsf{W}$, via an averaging argument, some edge(s) among $Q$ must kill at least a  $\frac{k'+1}{|Q|} \geq \Omega\left(\frac{k'}{n \sqrt k}\right)$ fraction of $\W(H_i)$.}

\paragraph{\Cref{clm:good_edge_exists} is good (enough) for two-player communication.} When edges of $G$ are partitioned across Alice and Bob, each player can locally evaluate, via brute-force, whether each edge is $\gamma$-good w.r.t. $\gamma = \Omega\left(\frac{k'}{n \sqrt{k}}\right)$. \Cref{clm:good_edge_exists} guarantees that such good edge always exists among $G \setminus H$; thus must be found by one of the players. Such player then spends $\widetilde{O}(1)$ bits of communication to send over that edge, adds it to $H$, and then proceeds to the next iteration. 
As a result, the total communication cost is $\tO(n/\gamma) = \tO(n^2\cdot \frac{\sqrt{k}}{k'}).$ Then, following a line of reduction, we can recover a weaker version of \Cref{lem:large_flow_comm} with communication cost replaced by $\tO\left(\frac{n^2}{\sqrt{\Delta}} + \frac{n^3}{\Delta^2}\right)$. For all intents and purposes, we can use this weaker result to recover an $\tO(n^{11/7})$ communication protocol by combining it with \Cref{lem:RSW_comm}. However, to get a parameterized  complexity of $\tO(n\cdot \nu(G)^{4/7})$ as promised in \Cref{lem:large_flow_comm}, we need to strengthen our strategy. We defer this improvement to \Cref{sec:large_flow_comm}.

\paragraph{It is a bit more complicated for cut-query.} In the communication model, our search for a $\gamma$-good edge leverages a crucial feature: the input graph is \emph{explicitly} given, although being split between players. This guarantees that any $\gamma$-good edge must already be visible to at least one of them. In contrast, the cut-query model lacks this advantage, as we can only make queries to a \emph{hidden} graph.  Given the somewhat-artificial definition of a $\gamma$-good edge, it is not immediately clear whether such an edge can be efficiently identified. Fortunately, we develop an alternative proof of \Cref{clm:good_edge_exists} that is constructive! This new approach—combined with a few additional subroutines—allows us to locate an $\Omega(\gamma)$-good edge (with a slightly worse constant factor) using only $O(\log n)$ cut queries to $G$. We explore this approach in detail in \Cref{sec:large_flow_cq}.

\subsection{Proofs of Main Theorems}

We wrap up this overview by combining the main lemmas to prove our main results.

\mincutcq*

\begin{proof} We follow the framework of Algorithm \ref{alg:meta-alg} with $\Delta = n^{4/5}$.  In addition, we use \Cref{lem:RSW_cut_query} for Step 1 and \Cref{lem:large_flow_qc} for Step 2. Assume that all steps succeed which occurs w.h.p. Note that given $H$ from Step 1, we can locally compute $F$ at no extra costs. Plus, we have $\nu(G_F) = \nu(G) - \nu(F) \leq \Delta$ via \Cref{prop:sum_flows}. The total number of queries then becomes 
$$\tO\left(\frac{n^2}{\sqrt{\Delta}} + \frac{n^3}{\Delta^2} + n^{4/3} \Delta^{1/3}\right) = \tO\left(n^{8/5}\right)$$ as $\Delta = n^{4/5}$.
\end{proof}

\mincutcomm*

\begin{proof} We first compute $\hat{\nu}$ which is a constant approximation of $\nu$ using $\tO(n)$ communication via \Cref{cor:approx_nu_comm}. Then, we follow the framework of Algorithm \ref{alg:meta-alg} using $\Delta = \hat{\nu}^{6/7}$.  In addition, we use \Cref{lem:large_flow_comm} for Step 1 and \Cref{lem:RSW_comm} for Step 2, both of which are deterministic. Note that given $H$ from Step 1, both players can locally compute $F$ at no extra costs. Plus, we have $\nu(G_F) = \nu(G) - \nu(F) \leq \Delta$ via \Cref{prop:sum_flows}. The total number of queries then becomes 
$$\tO\left(n + \frac{n \cdot \nu(G)}{\sqrt{\Delta}} + \frac{n \cdot \nu(G)^2}{\Delta^2} + n \Delta^{2/3}\right) = \tO\left(n \hspace{.15mm} \nu^{4/7}\right)$$ as $\Delta = \hat{\nu}^{6/7} = \Theta\left(\nu^{6/7}\right)$.
\end{proof}

% \newpage

\section{Modifications to RSW Algorithm}
\label{sec:RSW}

In this section, we discuss Step 2 of our meta algorithm. To this end, we propose a baseline procedure—independent of the computational model—that is adapted from the work of \cite{RubinsteinSW18} to handle a mixed input graph $\mathcal{G} = (G, F)$. As a reminder, in our setting the directed weighted graph $F$ is provided explicitly and at no cost; hence, the ``true'' input is an undirected, unweighted graph $G$.

% \footnote{Recall that $K[X]$ means induced subgraph $K$ on $X$, and $K\langle V_1,..,V_z\rangle$ means $K$ with each $V_i$ contracted to a single super vertex.}

\renewcommand{\tablename}{Algorithm}

\begin{table}[H]
    \centering
    \begin{tabular}{|p{15.5cm}|}
    \hline ~\\
    \multicolumn{1}{|c|}
     {} \\
    \vspace{-7.5mm} 
    \textbf{Algorithm 2:} Minimum $s$-$t$ cut algorithm for mixed graphs. \\

    \textbf{Input:}  An $n$-vertex mixed graph $\mathcal{G} = (G,F)$ with $F$ being given explicitly. \\
    \textbf{Output:} A minimum $s$-$t$ cut of $\mathcal{G}$. \\~\\

    \textbf{Procedures:}

    \begin{enumerate}
    \item Fix a value of $\varepsilon \in (0,\frac{1}{5})$.
    \item Compute $G'$ which is a $\eps$-cut sparsifier of $G$.
    \item Denote $H = G' \cup F$ and compute $\nu(H)$.
    \item Compute a maximum $s$-$t$ flow of $H$. Denote such flow graph by $\mathcal{F}$.
    \item Obtain $K = H \setminus (\mathcal{F} \cup F)$. 
    \item Compute a partition $V = V_1 \cup \ldots \cup V_z$ with the following properties.
    \begin{itemize}
        \item Each $K[V_i]$ has connectivity $\geq 3 \eps \cdot \nu(H)$.
        \item $K\langle V_1,..,V_z\rangle$ has connectivity $\leq 3\eps \cdot \nu(H)$.
    \end{itemize}
    \item Learn all edges of $G\langle V_1,..,V_z\rangle$.
    \item Output a minimum $s$-$t$ cut in $\mathcal{G}\langle V_1,..,V_z\rangle = G\langle V_1,..,V_z\rangle \cup F\langle V_1,..,V_z\rangle.$
    \end{enumerate}
    \vspace{0mm}
    \\
    \hline
    \end{tabular}
    \caption{A baseline algorithm for computing a minimum $s$-$t$ cut of a mixed graph $\mathcal{G} = (G,F)$.}
    \label{alg:RSW+}
\end{table}

There are several key observations about the algorithm. First, we observe that $H$ is an $\varepsilon$-\emph{directed} cut sparsifier of the mixed graph $\mathcal{G} = (G, F)$. To see this, consider any cut $(S, T)$. Then,
\begin{align*}
    w_H(S,T) = |E_{G'}(S,T)| + w_F(S,T) & \leq (1+\varepsilon) \cdot |E_G(S,T)| + w_F(S,T) \\ & \leq (1+\varepsilon) \cdot \left(|E_G(S,T)| + w_F(S,T)\right) \\
    & = (1+\varepsilon) \cdot w_{\mathcal{G}}(S,T).
\end{align*}
A symmetric argument also gives $w_H(S,T) \geq (1 - \varepsilon) \cdot w_{\mathcal{G}}(S,T)$. Thus, $H$ is indeed an $\varepsilon$-directed cut sparsifier of $\mathcal{G}$. As a direct corollary, we have that $\nu(H)$ approximates $\nu(\mathcal{G})$ within a $(1 \pm \varepsilon)$ factor, following the same reasoning as in \Cref{cor:approx_nu_cq}.

Next, we observe that $K$ consists solely of undirected edges. Hence, the partition computed in Step~6 can be obtained via a recursive procedure: as long as the graph contains a connected component with edge-connectivity at least $3\varepsilon \cdot \nu(H)$, identify such a component, contract it into a single super-vertex, and recurse on the resulting graph. Upon termination, the remaining super-vertices define the desired partition $(V_1, \ldots, V_z)$. Moreover, since the contracted graph $K\langle V_1,\ldots,V_z\rangle$ is undirected and has edge-connectivity at most $3\varepsilon \cdot \nu(H)$, we can invoke \Cref{lem:small_connectivity_edge_bounds} to conclude that $|K\langle V_1, \ldots, V_z\rangle| \leq 3\varepsilon n \cdot \nu(H).$ 

We now proceed to show that the baseline algorithm correctly outputs a minimum $s$-$t$ cut of the mixed graph $\mathcal{G} = (G, F)$.

\begin{claim}  Algorithm \ref{alg:RSW+} correctly computes a minimum $s$-$t$ cut of $\mathcal{G} = (G, F)$.
\end{claim}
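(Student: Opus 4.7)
The plan is to establish that the partition $(V_1, \ldots, V_z)$ produced in Step~6 \emph{preserves} every minimum $s$-$t$ cut of $\mathcal{G}$: no $s$-$t$ cut of $\mathcal{G}$ with value at most $\nu(\mathcal{G})$ can split two vertices sharing a part $V_i$. Once this preservation holds, Step~8 is correct because every $s$-$t$ cut in the contracted graph $\mathcal{G}\langle V_1,\ldots,V_z\rangle$ lifts to an $s$-$t$ cut of $\mathcal{G}$ of the same value, and, conversely, any minimum $s$-$t$ cut of $\mathcal{G}$ is consistent with $(V_1,\ldots,V_z)$ and hence corresponds to a cut of the contracted graph with identical value; the two min-cut values therefore coincide, and the cut returned in Step~8 is optimal in $\mathcal{G}$.

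To prove the preservation property, I would argue by contradiction. Suppose an $s$-$t$ cut $(S^*, T^*)$ satisfies $w_\mathcal{G}(S^*, T^*) \leq \nu(\mathcal{G})$ and yet strictly separates some $V_i$. By the first bullet of Step~6, $K[V_i]$ has edge-connectivity at least $3\varepsilon \cdot \nu(H)$, so the induced split $(S^* \cap V_i, T^* \cap V_i)$ of $K[V_i]$ carries weight at least $3\varepsilon \cdot \nu(H)$, and hence $w_K(S^*, T^*) \geq 3\varepsilon \cdot \nu(H)$. Independently, $\mathcal{F}$ is a flow graph of value $\nu(H)$ in $H$, so it decomposes into $\nu(H)$ directed $s$-$t$ paths, each of which starts in $S^*$ and ends in $T^*$ and therefore contributes at least one crossing edge from $S^*$ to $T^*$; this yields $w_\mathcal{F}(S^*, T^*) \geq \nu(H)$. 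Since $K = H \setminus (\mathcal{F} \cup F)$, the edge sets of $K$ and $\mathcal{F}$ are disjoint subsets of $H$, and thus
\[
w_H(S^*, T^*) \;\geq\; w_\mathcal{F}(S^*, T^*) + w_K(S^*, T^*) \;\geq\; (1+3\varepsilon)\,\nu(H).
\]

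Next I would invoke the $\varepsilon$-directed cut sparsifier property of $H$ with respect to $\mathcal{G}$ already recorded just after the pseudocode, namely $w_H(S^*, T^*) \leq (1+\varepsilon)\, w_\mathcal{G}(S^*, T^*)$ and $\nu(H) \geq (1-\varepsilon)\,\nu(\mathcal{G})$. Chaining these with the previous display yields
\[
(1+\varepsilon)\,w_\mathcal{G}(S^*, T^*) \;\geq\; (1+3\varepsilon)(1-\varepsilon)\,\nu(\mathcal{G}),
\]
and an elementary computation shows $(1+3\varepsilon)(1-\varepsilon) > 1+\varepsilon$ for every $\varepsilon < 1/3$, which is satisfied since $\varepsilon \in (0, 1/5)$. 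Hence $w_\mathcal{G}(S^*, T^*) > \nu(\mathcal{G})$, a contradiction. The main subtlety I expect is the bookkeeping in the mixed setting: one must verify that the directed unit flows composing $\mathcal{F}$ each cross $(S^*, T^*)$ in the forward direction at least once (handled by the flow-graph definition), and that the edge-disjointness $K \cap \mathcal{F} = \emptyset$ inside $H$ survives in the weighted cut accounting (immediate from $K = H \setminus (\mathcal{F} \cup F)$). These two structural observations are the only places where the directed/undirected nature of $\mathcal{G}$ enters; the remainder of the argument is a clean $(1+\varepsilon)$ versus $(1+3\varepsilon)(1-\varepsilon)$ comparison.
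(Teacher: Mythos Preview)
Your proposal is correct and follows essentially the same route as the paper: assume a minimum $s$-$t$ cut of $\mathcal{G}$ splits some $V_i$, use the $3\varepsilon\nu(H)$ connectivity of $K[V_i]$ together with $w_\mathcal{F}(S^*,T^*)\ge\nu(H)$ and the disjointness $K\cap\mathcal{F}=\emptyset$ to lower-bound $w_H(S^*,T^*)$, and then push this through the sparsifier inequality to contradict optimality. The only difference is cosmetic algebra: the paper applies the sparsifier bound as $w_\mathcal{G}\ge(1-\varepsilon)w_H$ and checks $(1-\varepsilon)^2(1+3\varepsilon)>1$, whereas you apply it as $w_H\le(1+\varepsilon)w_\mathcal{G}$ and check $(1+3\varepsilon)(1-\varepsilon)>1+\varepsilon$; both reduce to the same contradiction.
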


\begin{proof} Let $(S,T)$ be a minimum $s$-$t$ cut of $\mathcal{G}$ with value $\nu(\mathcal{G})$. It suffices to show that $(S,T)$ survives the contraction of $V_1,...,V_z$. Assume, for contradiction, that it did not. This means there exists $V_i$ such that $V_i \cap S \ne \emptyset$ and $V_i \cap T \ne \emptyset$.

Since $H$ is a directed cut sparsifier of $\mathcal{G}$, we have $\nu(\mathcal{F}) = \nu(H) \geq (1-\eps) \cdot \nu(\mathcal{G})$. Therefore $w_\mathcal{F}(S,T) \geq \nu(\mathcal{F}) \geq (1-\eps) \cdot \nu(\mathcal{G})$. Since $(S,T)$ cuts through $V_i$, we have 
$$w_K(S,T) \geq w_{K[V_i]}(S,T) \geq 3\eps \cdot \nu(H) \geq 3\eps(1-\eps) \cdot \nu(\mathcal{G})$$ where the second inequality uses the fact that $K[V_i]$ has connectivity (i.e. global min-cut value) at least $3\eps \cdot \nu(H)$. Furthermore, $K$ and $\mathcal{F}$ are disjoint and $K \cup \mathcal{F} \preceq H$. We then have:
\begin{align*}
    w_H(S,T) \geq  w_\mathcal{F}(S,T) +  w_K(S,T)
& \geq  (1-\eps) \cdot \nu(\mathcal{G}) + 3\eps(1-\eps) \cdot \nu(\mathcal{G})  \\
& = (1-\eps)(1+3\eps) \cdot \nu(\mathcal{G}).
\end{align*}
Finally, since $\mathcal{G}$ and $H$ are $(1\pm \eps)$-close in directed cut size, we then have
$$w_{\mathcal{G}}(S,T) \geq (1-\eps) \cdot w_H(S,T) \geq  (1-\eps)^2(1+3\eps) \cdot \nu(\mathcal{G}) > \nu(\mathcal{G})$$
where the final inequality follows $\eps \in (0,\frac{1}{5})$. This gives rise to a contradiction as $(S,T)$ is a minimum $s$-$t$ cut of $\mathcal{G}$ with value $w_{\mathcal{G}}(S,T) = \nu(\mathcal{G})$.
\end{proof}

\subsection{Implementation in Cut-Query}

We describe the implementation of Algorithm \ref{alg:meta-alg} in the cut-query model. As a reminder, in this setting,  we do not have direct access to the full input graph $G$; instead, we are allowed to make \emph{cut queries} to $G$. The complexity of our algorithm is measured by the total number of such queries made during its execution.

\RSWcutquery*

\begin{proof} We specify such implementation as follows.

\begin{itemize}
    \item In Step 2, we use \Cref{lem:sparsifier_cq} to w.h.p. construct $G'$ which is an $\eps$-cut sparsifier of $G$ with $\tO(n/\eps^2)$ edges and edge weights bounded by $\tO(\eps^2 n)$. 
    
    \item Once obtained $G'$, we can do Step 3,4,5,6 locally.
    \item Step 7 can be done via \Cref{lem:learn_graph_cq}.
    \item Once Step 7 is completed, we can do Step 8 locally.
\end{itemize}

It remains to bound the query complexity. Following our implementation, Step 2 consumes $\tO(n/\eps^2)$ cut queries. The number of queries in Step 7 is, up to a polylogarithmic factor, bounded by $\left|G\langle V_1,..,V_z\rangle\right|$ via \Cref{lem:learn_graph_cq}. This is also upper bounded by $(1+\eps) \cdot |G'\langle V_1,..,V_z\rangle|$ as $G'$ is a $\eps$-cut sparsifier of $G$. Since $G' \preceq K \cup \mathcal{F}$, we then have
\begin{align*}
    |G'\langle V_1,..,V_z\rangle| & \leq |K\langle V_1,..,V_z\rangle| + |\mathcal{F}| \\
    & \leq 3\eps n \cdot \nu(H) + n \cdot \sqrt{\nu(H) \cdot \tO(\eps^2 n)} \tag{\Cref{lem:small_connectivity_edge_bounds} and \Cref{lem:flow_cover}}\\
    & \leq \tO\left( \eps n^{3/2} \nu(G)^{1/2}\right) \tag{$\nu(H) \leq (1+\eps) \nu(\mathcal{G})$}
\end{align*}

Therefore, the total communication is bounded by $\tO\left(\frac{n}{\eps^2} + \eps n^{3/2} \nu(G)^{1/2}\right)$. By setting $\eps = \left(n\cdot \nu(\mathcal{G})\right)^{-1/6}$, the number of cut queries become $\tO\left(n^{4/3} \nu(\mathcal{G})^{1/3}\right)$ as wished.

As a final piece, observe that we cannot set $\eps = \left(n\cdot \nu(\mathcal{G})\right)^{-1/6}$ exactly, as we do not know the true value of $\nu(\mathcal{G})$ at the onset of our algorithm. To fix this, we can w.h.p. compute a constant-approximation of $\nu(\mathcal{G})$ within $\tO(n)$ cut queries\footnote{To do this, we first compute an $G^*$ to be a $\frac{1}{100}$-cut sparsifier of $G$ via \Cref{cor:approx_nu_cq}, and use $\nu(G^* \cup F)$ as a proxy to $\nu(\mathcal{G})$.}, and use it as a proxy to $\nu(\mathcal{G})$. This does not affect the integrity of our proofs, nor the asymptotic number of queries.
\end{proof}

\subsection{Implementation in Two-Player Communication}

We describe the implementation of Algorithm \ref{alg:meta-alg} in the two-player communication model. As a reminder, in this setting,  Alice holds a subgraph $G_A = (V, E_A)$ and Bob holds $G_B = (V, E_B)$, where the true input graph is the union $G = G_A \cup G_B$. The complexity of the algorithm is measured by the total number of bits communicated between the two players throughout the execution.

We begin by studying a useful combinatorial structure: a \emph{forest packing}. Let $G$ be an undirected, unweighted graph. For any integer $k \geq 1$, we define a \emph{$k$-forest packing} of $G$ to be $P = F_1 \cup \ldots \cup F_k$ where, for each $i \in [k]$, the subgraph $F_i$ is a spanning forest of the graph $G \setminus (F_1 \cup \ldots \cup F_{i-1})$. A crucial insight is that any $k$-forest packing yields a relatively sparse subgraph of $G$ that simultaneously preserves all small cuts and ensures that large cuts remain large. These properties are formally captured in the following lemma.

\begin{lemma} Let $P$ be $k$-forest packing of a graph $G$. Then, the following statements are true.
\begin{enumerate}
    \item For a cut $(S,T)$, if $|E_G(S,T)| \leq k$, then $E_P(S,T) = E_G(S,T)$. 
    \item For a cut $(S,T)$, if $|E_G(S,T)| \geq k+1$, then $|E_P(S,T)| \geq k$. 
    \item $P$ has average degree $O(k)$.
\end{enumerate}
\label{lem:k-cut-preserved}
\end{lemma}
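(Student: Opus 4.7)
The plan is to isolate a single key property of forest packings from which all three items follow easily. I will first prove the helper claim: \emph{if there remains a cut edge of $(S,T)$ in $G\setminus(F_1\cup\cdots\cup F_{i-1})$, then $F_i$ must contribute at least one edge to the cut $(S,T)$.} The proof is immediate: suppose $(u,v)\in E_G(S,T)$ survives into $G\setminus(F_1\cup\cdots\cup F_{i-1})$, so $u$ and $v$ lie in the same connected component of that graph. Since $F_i$ is a spanning forest of $G\setminus(F_1\cup\cdots\cup F_{i-1})$, it preserves this connectivity, so it contains a $u$–$v$ path. As $u\in S$ and $v\in T$, this path must traverse some edge crossing $(S,T)$, which by construction lies in $F_i$.

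For Part 1, assume for contradiction that $|E_G(S,T)|\le k$ but some cut edge $e$ fails to appear in $P = F_1\cup\cdots\cup F_k$. Then $e$ is never removed during the packing, so the helper claim applies at every step $i\in[k]$: each $F_i$ contributes at least one cut edge. Summing, $|E_P(S,T)|\ge k$. Together with $e\notin P$, this gives $|E_G(S,T)|\ge k+1$, a contradiction. Hence every cut edge of $G$ appears in $P$, i.e.\ $E_P(S,T)=E_G(S,T)$.

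For Part 2, let $j$ be the smallest index with $E_G(S,T)\subseteq F_1\cup\cdots\cup F_j$ (setting $j=\infty$ if this never occurs). If $j>k$, then at every step $i\in[k]$ at least one cut edge is still unassigned, so the helper gives $|E_{F_i}(S,T)|\ge 1$ for each such $i$, hence $|E_P(S,T)|\ge k$. If instead $j\le k$, then all of $E_G(S,T)$ is contained in $P$, so $|E_P(S,T)|=|E_G(S,T)|\ge k+1\ge k$. Either way $|E_P(S,T)|\ge k$, as required.

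Part 3 is the routine count: each $F_i$ is a spanning forest on $n$ vertices and therefore has at most $n-1$ edges, so $|E(P)|\le k(n-1)$ and the average degree is at most $2k(n-1)/n = O(k)$. The only conceptually nontrivial step is the helper observation, and it rests entirely on the definitional property that a spanning forest preserves the connectivity of the graph it spans; I do not anticipate a real obstacle here.
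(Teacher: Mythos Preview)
Your proof is correct and follows essentially the same approach as the paper: both isolate the observation that if a cut edge of $(S,T)$ survives into $G\setminus(F_1\cup\cdots\cup F_{i-1})$ then $F_i$ must pick up at least one crossing edge, and then deduce all three parts from it. The only cosmetic differences are that you prove Part~1 by contradiction whereas the paper argues it directly, and you justify the helper claim via a $u$--$v$ path in the spanning forest rather than simply asserting that a spanning forest must contain a crossing edge; neither changes the substance.
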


\begin{proof} We first prove (1). Let $P = F_1 \cup \ldots \cup F_k$ be a forest packing of $G$. Let $(S,T)$ be a cut such that $|E_G(S,T)| \leq k$. Consider any $i \leq k$. If there is an edge among $E_G(S,T)$ remained in $G \setminus (F_1 \cup \ldots \cup F_{i-1})$, the spanning forest $F_i$ must include at least one of those remaining edges. In other words, upon packing $k \geq |E_G(S,T)|$ spanning forest into $P$, all edges of $E_G(S,T)$ must be included. This means $E_P(S,T) = E_G(S,T)$.

For (2), we have two cases. First, if $E_{F_i}(S,T) = \emptyset$ for some $i \in [k]$, it means all edges in $E_G(S,T)$ must have already been included in $F_1 \cup \ldots \cup F_{i-1}$. This then implies $|E_P(S,T)| = |E_G(S,T)| \geq k+1$. The other case is $|E_{F_i}(S,T)| \geq 1$ for each $i \in [k]$. In this case, we can write $|E_P(S,T)| = \sum_{i \in [k]} |E_{F_i}(S,T)| \geq k$ as wished.

Finally, we prove (3). Notice that each $F_i$ has at most $n-1$ edges. Thus, the number of edges in $P$ is at most $k(n-1)$, implying that the average degree is at most $\frac{2k(n-1)}{n} = O(k)$.
\end{proof}

As an implication, we obtain the following lemma.

\begin{restatable}{lem}{packingpreservesmc} Let $\mathcal{G} = (G_A \cup G_B, F)$ be a mixed graph in two-player communication setting, and let $k$ be a parameter such that $k \geq \nu(\mathcal{G})+1$. Let $P_A$ and $P_B$ be a $k$-forest packing of $G_A$ and $G_B$ respectively, and denote a mixed graph $\mathcal{P} = (P_A \cup P_B, F)$. Then, we have:
\begin{enumerate}
    \item Both $P_A$ and $P_B$ have average degree $O(k)$.
    \item $\nu(\mathcal{G}) = \nu(\mathcal{P})$.
    \item $(S,T)$ is a minimum $s$-$t$ cut of $\mathcal{G}$ if and only if it is a minimum $s$-$t$ cut of $\mathcal{P}$.  
\end{enumerate}
\label{lem:packing_preserve_mc}
\end{restatable}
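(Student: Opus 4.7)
The plan is to obtain all three statements as direct consequences of \Cref{lem:k-cut-preserved}, with the main step being a two-case analysis of each $s$--$t$ cut. Part (1) is immediate from \Cref{lem:k-cut-preserved}(3) applied separately to $P_A$ and $P_B$, so I would spend no effort on it.

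For parts (2) and (3), I would begin with the easy direction: since $P_A \preceq G_A$ and $P_B \preceq G_B$, we have $\mathcal{P} \preceq \mathcal{G}$, and hence $\nu(\mathcal{P}) \leq \nu(\mathcal{G})$ by monotonicity of max-flow under subgraph containment. The core work is the reverse direction, driven by the following claim: for every $s$--$t$ cut $(S,T)$,
$$ w_\mathcal{P}(S,T) \;\geq\; \min\{\, w_\mathcal{G}(S,T),\; k \,\}. $$
Since the directed part $F$ contributes $w_F(S,T)$ identically to both $w_\mathcal{P}(S,T)$ and $w_\mathcal{G}(S,T)$, establishing this claim reduces to comparing the undirected portions. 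If both $|E_{G_A}(S,T)| \leq k$ and $|E_{G_B}(S,T)| \leq k$, then \Cref{lem:k-cut-preserved}(1) yields $E_{P_A}(S,T) = E_{G_A}(S,T)$ and $E_{P_B}(S,T) = E_{G_B}(S,T)$, so $w_\mathcal{P}(S,T) = w_\mathcal{G}(S,T)$. Otherwise---say $|E_{G_A}(S,T)| \geq k+1$---\Cref{lem:k-cut-preserved}(2) forces $|E_{P_A}(S,T)| \geq k$, and so $w_\mathcal{P}(S,T) \geq k$.

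Minimizing the claim over all $s$--$t$ cuts gives $\nu(\mathcal{P}) \geq \min\{\nu(\mathcal{G}), k\}$, which equals $\nu(\mathcal{G})$ by the assumption $k \geq \nu(\mathcal{G}) + 1$; this establishes (2). For (3), every minimum $s$--$t$ cut of $\mathcal{G}$ satisfies $w_\mathcal{G}(S,T) = \nu(\mathcal{G}) < k$, so in particular $|E_{G_A}(S,T)|, |E_{G_B}(S,T)| < k$, placing it in the first case above and preserving its value verbatim in $\mathcal{P}$; hence it is a minimum cut of $\mathcal{P}$ as well. Conversely, a minimum cut of $\mathcal{P}$ cannot lie in the second case, since that would yield $w_\mathcal{P}(S,T) \geq k > \nu(\mathcal{G}) = \nu(\mathcal{P})$, contradicting minimality; so it must lie in the first case, where $w_\mathcal{G}(S,T) = w_\mathcal{P}(S,T) = \nu(\mathcal{G})$, making it a minimum cut of $\mathcal{G}$.

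I do not anticipate any serious technical obstacle; the proof is essentially bookkeeping on top of \Cref{lem:k-cut-preserved}. The only mild subtlety worth highlighting is that the forest packings are carried out separately inside $G_A$ and $G_B$ rather than jointly inside $G_A \cup G_B$, but the case analysis decouples along this split cleanly enough that no additional argument is needed.
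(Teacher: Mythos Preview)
Your proposal is correct and follows essentially the same approach as the paper: both proofs reduce all three parts to \Cref{lem:k-cut-preserved} via the same two-case split on whether $|E_{G_A}(S,T)|$ and $|E_{G_B}(S,T)|$ are each at most $k$. Your packaging of this split into the single inequality $w_\mathcal{P}(S,T) \geq \min\{w_\mathcal{G}(S,T), k\}$ is arguably tidier than the paper's slightly more ad hoc contradiction arguments, but the content is identical.
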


\begin{proof} The average degree (1) is clear due to \Cref{lem:k-cut-preserved}. We will now show (2). Observe that $\mathcal{P} \preceq \mathcal{G}$ immediately gives $\nu(\mathcal{P}) \leq \nu(\mathcal{G})$. Let $(S,T)$ be an arbitrary minimum $s$-$t$ cut of $\mathcal{P}$ with value 
\begin{equation}
|E_{P_A}(S,T)| + |E_{P_B}(S,T)| + w_F(S,T) = w_{\mathcal{P}}(S,T)  = \nu(\mathcal{P}) \leq \nu(\mathcal{G}) \leq k-1
\label{eq2}
\end{equation}
We claim that $|E_{G_A}(S,T)| \leq k$. To see this, assume for CTD that $|E_{G_A}(S,T)| \geq k+1$. By \Cref{lem:k-cut-preserved}, we then have $|E_{P_A}(S,T)| \geq k$ which immediately contradicts (\ref{eq2}). Thus, we must have $|E_{G_A}(S,T)| \leq k$ and analogously $|E_{G_B}(S,T)| \leq k$. Then, by \Cref{lem:k-cut-preserved}, we have $E_{G_A}(S,T) = E_{P_A}(S,T)$ and $E_{G_B}(S,T) = E_{P_B}(S,T)$. This then implies
\begin{align*}
    \nu(\mathcal{G}) \leq w_\mathcal{G}(S,T) & = |E_{G_A}(S,T)| + |E_{G_B}(S,T)| + w_F(S,T) \\
    & = |E_{P_A}(S,T)| + |E_{P_B}(S,T)| + w_F(S,T) \\
    & = w_{\mathcal{P}}(S,T) \\
    & = \nu(\mathcal{P}) \tag{$(S,T)$ be a min cut of $\mathcal{P}$.}
\end{align*}
Combining this with an earlier inequality $\nu(\mathcal{P}) \leq \nu(\mathcal{G})$, we have $\nu(\mathcal{G}) = \nu(\mathcal{P})$.

Last but not lease, we prove (3). We will first prove $(\Leftarrow)$  by showing that no cuts $(S,T)$ of $\mathcal{G}$ with value $w_{\mathcal{G}}(S,T) \geq \nu(G) + 1$ becomes a min-cut of $\mathcal{P}$. Assume for contradiction that it occurred. Then, we have
\begin{align*}
    \nu(\mathcal{G}) + 1 \leq w_{\mathcal{G}}(S,T) & = |E_{G_A}(S,T)| + |E_{G_B}(S,T)| + w_F(S,T)
\end{align*}
and 
\begin{equation}
    \nu(\mathcal{G}) = \nu(\mathcal{P}) = |E_{P_A}(S,T)| + |E_{P_B}(S,T)| + w_F(S,T)   
\label{eq1}
\end{equation}
This implies that either $|E_{P_A}(S,T)| < |E_{G_A}(S,T)|$ or $|E_{P_B}(S,T)| < |E_{G_B}(S,T)|$. Assume WLOG that it is the earlier case. Then, if we had $|E_{G_A}(S,T)| \leq k$, we would have $E_{G_A}(S,T) = E_{P_A}(S,T)$ via \Cref{lem:k-cut-preserved} which is a contradiction. Thus, we must have $|E_{G_A}(S,T)| \geq k+1$. Via \Cref{lem:k-cut-preserved}, we then have $|E_{P_A}(S,T)| \geq k$. Continuing (\ref{eq1}), we then have $\nu(\mathcal{G}) \geq |E_{P_A}(S,T)| \geq k.$ This contradicts the premise that $k \geq \nu(\mathcal{G})+1$. Hence, the $(\Leftarrow)$ direction is proved.

Finally, we prove the $(\Rightarrow)$ direction. Let $(S,T)$ be an arbitrary minimum $s$-$t$ cut of $\mathcal{G}$. Then, we have
$$|E_{G_A}(S,T)| + |E_{G_B}(S,T)| + w_F(S,T) = w_{\mathcal{G}}(S,T) = \nu(\mathcal{G}) \leq k-1.$$
This implies $|E_{G_A}(S,T)| \leq k-1$. Then, via \Cref{lem:k-cut-preserved}, we have $E_{P_A}(S,T) = E_{G_A}(S,T)$ and analogously $E_{P_B}(S,T)| = E_{G_B}(S,T)$. We then have
\begin{align*}
    \nu(\mathcal{P}) = \nu(\mathcal{G}) = w_\mathcal{G}(S,T) & = |E_{G_A}(S,T)| + |E_{G_B}(S,T)| + w_F(S,T) \\
    & = |E_{P_A}(S,T)| + |E_{P_B}(S,T)| + w_F(S,T) \\
    & = w_{\mathcal{P}}(S,T)
\end{align*}
In other words, the cut $(S,T)$ attains a minimum cut value in $\mathcal{P}$; thus is a min cut of $\mathcal{P}$.
\end{proof}

This lemma proves to be extremely useful. It implies that, in order to determine a minimum $s$-$t$ cut of the graph $\mathcal{G} = (G_A \cup G_B, F)$, it suffices to determine a minimum $s$-$t$ cut of its $k$-forest packing analogue $\mathcal{P} = (P_A \cup P_B, F)$, if we set $k$ such that $k \geq \nu(\mathcal{G}) + 1$. Furthermore, Alice can locally compute $P_A$ from her subgraph $G_A$ without communicating with Bob, and vice versa. We put these advantages into action by giving a communication-efficient protocol for \Cref{lem:RSW_comm}.

\RSWcomm*

\begin{proof} We assert the following pre-processing step. 

\begin{enumerate}
    \item[(i)] Players compute $f$ be such that of $2\nu(\mathcal{G}) \leq f \leq 100 \nu(\mathcal{G})$ using $\tO(n)$ communications.\footnote{Alice computes an $G^*_A$ to be a $\frac{1}{100}$-cut sparsifier of $G_A$ using \Cref{cor:approx_nu_comm}, and Bob does the same for  $G^*_B$. Players exchange these sparsifiers and set $f := 99 \cdot \nu(G^*_A \cup G^*_B \cup F)$. }
    \item[(ii)] Alice locally computes $P_A \leftarrow f$-forest packing of $G_A$, and Bob locally computes $P_B \leftarrow f$-forest packing of $G_B$,
\end{enumerate}

According to \Cref{lem:packing_preserve_mc}, both $P_A$ and $P_B$ have average degree $O(f)$. Moreover, it suffices to determine a minimum $s$-$t$ cut of $\mathcal{P} = (P_A \cup P_B, F)$. We shall do this by implementing Algorithm \ref{alg:RSW+} with respect to $\mathcal{G} \leftarrow \mathcal{P}$.\footnote{Notationally, $G_A,G_B,G,G'$ are replaced by  $P_A,P_B,P,P'$ respectively.} We specify such implementation as follows.

\begin{itemize}
    \item In Step 2, the players proceed as follows. Alice locally construct $P'_A$ which is an $\eps$-cut sparsifier with $\tO(n/\eps^2)$ edges and maximum weight $\tO(\eps^2 f)$, and then sends it to Bob. To do that, Alice can employ a brute-force search for $P'_A$, as its existence is guaranteed due to \Cref{lem:sparsifier_small_weights}. Bob also construct $P'_B$ analogously, and then sends it to Alice. The players then set $P' = P'_A \cup P'_B $ to be an $\eps$-cut sparsifier of $P$.
    
    \item Once obtained $P'$, each player can do Step 3,4,5,6 locally.
    \item Step 7 can be done by having each player sends over all edges crossing the partition $(V_1,\ldots,V_z)$.
    \item Once Step 7 is completed, each player can do Step 8 locally.
\end{itemize}

It remains to bound the amount of communication. Following our implementation, Step 2 consumes $\tO(n/\eps^2)$ bits. The communication cost of Step 7 is, up to polylogarithmic factor, bounded by $\left|P\langle V_1,..,V_z\rangle\right|$ which is also bounded by $(1+\eps) \cdot |P'\langle V_1,..,V_z\rangle|$ as $P'$ is a $\eps$-cut sparsifier of $P$. Since $P' \preceq K \cup \mathcal{F}$, we then have
\begin{align*}
    |P'\langle V_1,..,V_z\rangle| & \leq |K\langle V_1,..,V_z\rangle| + |\mathcal{F}| \\
    & \leq 3\eps n \cdot \nu(H) + n \cdot \sqrt{\nu(H) \cdot \tO(\eps^2 f)} \tag{\Cref{lem:small_connectivity_edge_bounds} and \Cref{lem:flow_cover}}\\
    & \leq \tO\left( \eps n f\right) \tag{$\nu(H) \approx \nu(\mathcal{P}) = \nu(\mathcal{G}) \leq f$}
\end{align*}

Therefore, the total communication is bounded by $\tO\left(\frac{n}{\eps^2} + \eps n f\right)$. By setting $\eps = f^{-1/3} = \Theta\left(\nu(\mathcal{G})^{-1/3}\right)$, the communication cost becomes $\tO\left(n \cdot \nu(\mathcal{G})^{2/3}\right)$ as wished.
\end{proof}

% \newpage

\section{Finding a Large Flow via Cut-Query}

\label{sec:large_flow_cq}

The goal of this section is to prove \Cref{lem:large_flow_qc}, thereby concluding our minimum $s$-$t$ cut algorithm in the cut-query model. We first state our main claim.

\begin{restatable}{clm}{mainclaimcutquery} Let $f,k$ be known parameters with a promise that $\nu(G) \geq f$, and also given explicitly is a subgraph $H \preceq G$ with $\nu(H) \geq f-k$. Set $k' = \frac{k}{2}$. Then, there is a deterministic algorithm which makes $\tO\left(\frac{n^2}{\sqrt{k}}\right)$ cut queries to $G$ and outputs $H^{\textnormal{after}} \preceq G$ with $\nu(H^{\textnormal{after}}) \geq f-k' = f-\frac{k}{2}$.
\label{claim:main_claim_cq}
\end{restatable}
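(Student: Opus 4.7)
The plan is to realize the greedy witness-killing scheme of Section~\ref{sec:setup} with a cut-query-efficient implementation. Initialize $H_1 := H$; in iteration $i$, identify an edge $e_i \in G \setminus H_i$ that is $\gamma$-good for $\gamma := \Omega(k'/(n\sqrt{k})) = \Omega(\sqrt{k}/n)$, and set $H_{i+1} := H_i \cup \{e_i\}$. Terminate once $\mathcal{W}(H_i) = \emptyset$; Invariant~\ref{invariant} then certifies $\nu(H_i) \geq f - k' = f - k/2$ as required. Since the initial witness count is bounded by $n^{O(n)}$ and each iteration multiplies it by $(1 - \gamma)$, termination occurs within $T = O((n \log n)/\gamma) = \widetilde{O}(n^2/\sqrt{k})$ iterations, so absorbing $O(\log n)$ cut queries per iteration into the $\widetilde{O}(\cdot)$ notation matches the target budget of $\widetilde{O}(n^2/\sqrt{k})$. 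The entire problem therefore reduces to implementing a single iteration in $\widetilde{O}(1)$ cut queries.

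The core subroutine is a constructive version of Claim~\ref{clm:good_edge_exists}. Since $H_i$ is known explicitly, I would first locally compute a maximum $s$-$t$ flow $F_i \preceq H_i$ and the residual graph $(H_i)_{F_i}$; its $s$-reachable set $S^*$ gives a min cut $(S^*, T^*)$ of $H_i$ with $|E_{H_i}(S^*, T^*)| = \nu(H_i) \leq f - k' - 1$. Because $|E_G(S^*, T^*)| \geq \nu(G) \geq f$, there are at least $k/2 + 1$ edges of $G \setminus H_i$ crossing $(S^*, T^*)$. Using the find-edge primitive (Proposition~\ref{prop:ISS_cq}) together with a binary search over bipartitions of $S^*$ --- each raw cut-query to $G$ corrected by the count from $H_i$, which is known for free --- we surface one such edge $(a, b) \in G \setminus H_i$ in $O(\log n)$ cut queries, and append it to $H_i$.

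The main obstacle will be arguing that this particular edge $(a, b)$ kills an $\Omega(\gamma)$ fraction of \emph{all} witnesses, not merely those supported on $(S^*, T^*)$. The averaging argument behind Claim~\ref{clm:good_edge_exists} only guarantees that \emph{some} edge in a size-$k$ augmenting flow $Q \subseteq G \setminus H_i$ of $G_{F_i}$ --- whose size is $O(n\sqrt{k})$ by Lemma~\ref{lem:flow_cover} --- is $\gamma$-good, rather than the specific edge that our binary search returns. To close this gap I would exploit the freedom in the search cut: run the $O(\log n)$-query subroutine on each of a polylog-sized family of candidate cuts of $H_i$ (for instance, min cuts induced by alternative residual reachability sets, or cuts obtained from a hierarchical decomposition of $H_i$) and retain the most fruitful edge; since every $Q$-edge must cross at least one cut in this family, a pigeonhole on the averaging bound should ensure that some surfaced edge inherits the $\Omega(\gamma)$ kill rate. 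A fallback, should this enrichment prove inadequate, is to redefine the witness potential so that \emph{any} edge crossing the computed min cut automatically reduces it by an $\Omega(\gamma)$-factor, making the single-cut version already sufficient. Either route yields $\widetilde{O}(1)$ queries per iteration, and multiplying by the $\widetilde{O}(n^2/\sqrt{k})$ iteration count gives the total cut-query complexity claimed.
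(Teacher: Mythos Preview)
Your high-level framework matches the paper exactly: iterate the greedy witness-killing scheme, with termination in $\widetilde{O}(n^2/\sqrt{k})$ rounds and $O(\log n)$ cut queries per round. The gap is entirely in your implementation of the per-round step.

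Your proposed subroutine finds an edge of $G \setminus H_i$ crossing a \emph{single} min cut $(S^*, T^*)$ of $H_i$. As you correctly flag, there is no reason this particular edge is $\gamma$-good for the full witness set --- the averaging argument of Claim~\ref{clm:good_edge_exists} only promises that \emph{some} edge among the augmenting-flow edges $Q$ is $\gamma$-good, and your binary search has no mechanism to steer toward that edge. Your proposed fixes do not close the gap either. For the ``polylog family of cuts'' idea: even if every $Q$-edge crosses some cut in your family, the edge that $\FindEdge$ returns on a given cut need not be a $Q$-edge at all, so pigeonholing over $Q$ gives no control over what you actually find. For the ``redefine the potential'' fallback: this is a restatement of the difficulty rather than a solution, and no such redefinition is supplied.

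The paper's resolution is quite different. It computes, locally and for free from $H_i$, the \emph{average cut indicator} $\hat z_v := \Pr_{(S,T,Y)\sim\mathcal{W}(H_i)}[v \in T]$ for every vertex $v$. Two facts make this the right potential: (i) any edge $(u,v)\in G\setminus H_i$ kills at least a $\tfrac{k'+1}{k}\cdot|\hat z_u-\hat z_v|$ fraction of witnesses (Claim~\ref{clm:frac_witness_killed_cq}); and (ii) some edge of $G\setminus H_i$ has $|\hat z_u-\hat z_v|\ge\Omega(\sqrt{k}/n)$ (Claim~\ref{claim:far_edge_exists}, proved via the same $Q$-argument you sketch, but applied to the real-valued $\hat z$'s rather than to one integral cut). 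The per-round task thus reduces to the clean problem $\FLE$: given a known potential $\phi:V\to[0,1]$ and threshold $\delta$, find an edge of $G$ with $|\phi(u)-\phi(v)|\ge\delta/2$, promised one with gap $\ge\delta$ exists. This is solved in $O(\log n)$ cut queries by bucketing $[0,1]$ into width-$\delta/2$ intervals and using $O(\log\ell)$ deterministic $3\times 2$ colorings of the buckets so that every pair of non-adjacent buckets is separated by at least one coloring (Claim~\ref{clm:far_matched}); each coloring costs $O(1)$ $\FindEdge$ calls. The key idea you are missing is that the ``right cut'' to search against is not any single min cut of $H_i$ but the \emph{fractional average} over all witness cuts --- and that this average is locally computable from $H_i$ alone.
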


Such task, in fact, can be accomplished trivially with $\tO\left(n \cdot (k-k')\right) = \tO(nk)$ queries via $k-k'$ iterations of the augmenting path method. Thus, our claim yields substantial savings compared to the augmenting path approach when $k = \omega(n^{2/3})$. In addition, by recursively applying the claim, we can obtain a ``promise'' version of \Cref{lem:large_flow_qc}.

\begin{corollary} Let $f$ and $\Delta \leq f$ be parameters with a promise that $\nu(G) \geq f$. Then, there is a determinsic algorithm which makes $\tO\left(\frac{n^2}{\sqrt{\Delta}}\right)$ cut queries to $G$ and outputs $H \preceq G$ with $\nu(H) \geq f-\Delta$.
\label{cor:main_lem_cut_query}
\end{corollary}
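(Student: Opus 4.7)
\medskip
\noindent\textbf{Proof proposal.} My plan is to prove the corollary by iteratively invoking \Cref{claim:main_claim_cq} in a doubling-halving fashion, starting from the trivial subgraph $H_0 := \emptyset$ (which vacuously satisfies $H_0 \preceq G$ and $\nu(H_0) = 0 \geq f - f$). I would then maintain a sequence of parameters $k_0 = f, k_1 = f/2, k_2 = f/4, \ldots$ and, at step $i$, apply \Cref{claim:main_claim_cq} with the current subgraph $H_i$ and parameter $k_i$ to produce $H_{i+1} \preceq G$ satisfying $\nu(H_{i+1}) \geq f - k_i/2 = f - k_{i+1}$. I stop at the first index $i^\star$ with $k_{i^\star} \leq \Delta$ and return $H_{i^\star}$, which then satisfies $\nu(H_{i^\star}) \geq f - k_{i^\star} \geq f - \Delta$ as required. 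Since $k_i = f/2^i$, we have $i^\star = \lceil \log_2(f/\Delta)\rceil$, so the loop terminates after $O(\log n)$ iterations (absorbed by the $\tO$).

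\medskip

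The first key step is verifying the inductive invariant: entering iteration $i$, we have $H_i \preceq G$ and $\nu(H_i) \geq f - k_i$. The base case holds by construction, and the inductive step is exactly the guarantee provided by \Cref{claim:main_claim_cq} applied with $k \leftarrow k_i$ (the promise $\nu(G) \geq f$ is preserved across all iterations since $G$ is fixed). The second key step is the query accounting. Each invocation of \Cref{claim:main_claim_cq} at stage $i$ costs $\tO(n^2/\sqrt{k_i})$ cut queries, so the total cost is
\begin{equation*}
\sum_{i=0}^{i^\star-1} \tO\!\left(\frac{n^2}{\sqrt{k_i}}\right) \;=\; \tO\!\left(\frac{n^2}{\sqrt{f}}\right)\sum_{i=0}^{i^\star-1} (\sqrt{2})^{\,i} \;=\; \tO\!\left(\frac{n^2}{\sqrt{f}}\right)\cdot O\!\left(\sqrt{f/\Delta}\right) \;=\; \tO\!\left(\frac{n^2}{\sqrt{\Delta}}\right),
\end{equation*}
where the middle equality uses the closed form of a geometric sum whose largest term dominates (since $k_{i^\star-1} \approx \Delta$, so $1/\sqrt{k_{i^\star-1}} \approx 1/\sqrt{\Delta}$).

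\medskip

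I do not foresee any serious obstacle: the procedure is purely a recursive application of \Cref{claim:main_claim_cq}, and the geometric-series collapse ensures that the final bound matches the cost of the last call. The only minor bookkeeping items are (i) ensuring $k_i$ stays integral (we may take $k_i = \lceil f/2^i \rceil$ without affecting the asymptotics), (ii) verifying that the $H^{\text{after}}$ produced by each invocation is indeed a subgraph of $G$ and can be fed back into the next invocation as an explicit input (which is exactly the output guarantee of \Cref{claim:main_claim_cq}), and (iii) confirming that the promise $\nu(G) \geq f$ used by each invocation is the same $f$ throughout (i.e.\ we never lower $f$, only $k$). With these sanity checks in place the corollary follows immediately.
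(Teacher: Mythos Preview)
Your proposal is correct and matches the paper's proof essentially verbatim: the paper also iterates \Cref{claim:main_claim_cq} with $(k,k') = (f,f/2), (f/2,f/4), \ldots, (2\Delta,\Delta)$, starting from $H=\emptyset$, and bounds the total query cost by the geometric sum dominated by the last term $\tO(n^2/\sqrt{\Delta})$.
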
 

\begin{proof} We use \Cref{claim:main_claim_cq} recursively with $(k,k') = (f, f/2),\ldots,(2\Delta, \Delta)$ with $H$ being $H^{\text{after}}$ from the previous iteration (for the first iteration, set $H = \emptyset$). Inductively, each iteration is successful to obtain $\nu(H^{\text{after}}) \geq f-k'$. Thus, upon the completion we obtain $H$ with $\nu(H) \geq f-\Delta$. In total, the number of cut-queries made to $G$ is $\tO\left(\frac{n^2}{\sqrt{\Delta}}\right)$ due to the geometric sum.
\end{proof}

Let us now to use \Cref{cor:main_lem_cut_query} to recover \Cref{lem:large_flow_qc}.

\largeflowcutquery*

\begin{proof}
 Set $\lambda = \frac{\Delta}{10n} \leq \frac{1}{10}$. Consider the following algorithm.

\begin{enumerate}
    \item Compute $f$ such that $(1-\lambda)\nu(G) \leq f \leq (1+\lambda)\nu(G)$.
    \item[2a.] If $f \leq (1-\lambda)\Delta$, output $H := \emptyset$.
    \item[2b.] If $f \geq (1-\lambda)\Delta$, apply \Cref{cor:main_lem_cut_query} with $(f, \Delta) \leftarrow \left(\frac{f}{1+\lambda}, \frac{\Delta}{10} \right)$ and output the resulting $H$.
\end{enumerate}

Assume that Step 1 computes $f$ correctly which occurs w.h.p. We will argue that the output $H$ following step 2 is correct. If $f \leq (1-\lambda)\Delta$, we then have $\nu(G) \leq \Delta$. Hence, outputting $H := \emptyset$ with $\nu(H) = 0 \geq \nu(G) - \Delta$ is correct.

Otherwise, we have $f \geq (1-\lambda)\Delta$. In this case, we see that $\frac{f}{1+\lambda} \geq \frac{1-\lambda}{1+\lambda} \cdot \Delta > \frac{\Delta}{10}$. Therefore, the premises of \Cref{cor:main_lem_cut_query} 
 in line 2b are met. As a result, the output $H$ must satisfy:

    \begin{align*}
         \nu(H) \geq \frac{f}{1+\lambda}  -\frac{\Delta}{10} & = \nu(G) - \left[\nu(G) - \left(\frac{f}{1+\lambda} - \frac{\Delta}{10}\right) \right] \\
         & \geq  \nu(G) - \left(\frac{1}{1-\lambda} - \frac{1}{1+\lambda}\right) \cdot f - \frac{\Delta}{10} \tag{$\nu(G) \leq \frac{f}{1-\lambda}$} \\
         & \geq \nu(G) - 4 \lambda f - \frac{\Delta}{10} \tag{$\lambda \leq \frac{1}{10}$} \\
         & \geq \nu(G) - \Delta \tag{$\lambda f \leq \lambda n = \frac{\Delta }{10}$.}
    \end{align*}
Thus, the algorithm computes $H$ with a guarantee that $\nu(H) \geq \nu(G) - \Delta$.

Finally, the algorithm makes $\tO\left(\frac{n}{\lambda^2}\right)$ cut queries from step 1 and $\widetilde{O}\left(\frac{n^2}{\sqrt{\Delta}}\right)$ cut queries from step 2b. In total, these simplify to $\widetilde{O}\left(\frac{n^2}{\sqrt{\Delta}} + \frac{n^3}{\Delta^2}\right)$ queries.
\end{proof}

\paragraph{Organization.} For the remainder of this section, we will prove \Cref{claim:main_claim_cq}. We assume the its setting throughout the section: let $f,k,$ and $k' = \frac{k}{2}$ be known parameters with a promise that $\nu(G) \geq f$, and also given for free is $H \preceq G$ with $\nu(H) \geq f-k$. We wish to obtain $H^{\text{after}} \preceq G$ with $\nu(H^{\text{after}}) \geq f-k' = f-\frac{k}{2}$. We first begin with a quick reminder of our witness argument in \Cref{subsec:witness_recall}. Then, \Cref{subsec:witness_cut} presents a fresh perspective of the algorithm. \Cref{subsec:subroutine} discusses the arisen subproblem and addresses how to solve it with low cut-query complexity. Finally, \Cref{subsec:pf_main_claim} puts together all the ingredients and proves \Cref{claim:main_claim_cq}.

\subsection{Quick Recollection from \Cref{sec:setup}}
\label{subsec:witness_recall}

We now recall the definitions of witnesses and formally prove its invariant given in \Cref{sec:setup}.

\witness*

\invariant*

\begin{proof}
First, suppose that $\nu(\mathcal{G}) \geq f - k'$. Assume for the sake of contradiction that a witness exists and let $(S, T)$ be a cut with respect to that witness. By definition of a witness and $\nu(\mathcal{G})$ being the value of $s$-$t$ min-cut, we have $f-k'-1 \geq |E_\mathcal{G}(S,T)| \geq \nu(G) \geq f - k'$ which is a contradiction. Thus, $\mathcal{W}(\mathcal{G}) = \emptyset$.

Conversely, suppose that $\nu(\mathcal{G}) \leq f - k' - 1$. Again, by the definition of $\nu(G)$, we know that the min $s$-$t$ cut of $\mathcal{G}$ is of value $\nu(\mathcal{G}) \leq f - k' - 1$. Let $(S, T)$ be such cut. Let $X$ be an arbitrary subset of $(S \times T) \setminus E_{\mathcal{G}}(S,T)$ of size $f-k'-1 - |E_{\mathcal{G}}(S,T)|$. This set in fact exists because $f-k'-1 \leq n-1 \leq |S \times T|$. Then by definition, $(S, T, E_{\mathcal{G}}(S,T) \cup X)$ is a witness of $\mathcal{G}$, implying $\mathcal{W}(\mathcal{G}) \neq \emptyset$.
\end{proof}

\kill*

It is also useful to keep in mind a general description of our algorithm. Let $\gamma > 0$ be a suitable parameter. At any point that we have an explicit graph $H \preceq G$, we wish to find a $\gamma$-good edge $e \in G \setminus H$ such that  $|\mathcal{W}(H \cup \{e\})| \leq (1-\gamma) \cdot |\mathcal{W}(H)|$. Then, we set $H \leftarrow H \cup \{e\}$ and recurse until we reach the termination condition $\mathcal{W}(H) = \emptyset$ which, by our invariant, we firmly conclude $\nu(H) \geq f-k'$.

\subsection{Witness as a Cut Indicator}
\label{subsec:witness_cut}

Here, we tweak the definition of witness of $H$ by a bit: for a witness $\mathsf{W} = (S,T,Y = E_H(S,T) \cup X)$, we instead represent a cut $(S,T)$ by a binary vector $Z \in \{0,1\}^V$ which $z_v = \mathbbm{1}(v \in T)$ indicates whether $v$ belong to $t$-side of the cut. From time to time, we vary the cut representation by $(S,T)$ and $Z$ depending on the context. We always have $z_s = 0$ and  $z_t = 1$ as $s$ always appears in $S$-side of the cut, and $t$ always appears in $T$-side of the cut. With this notion, the cut-size condition of witness is equivalent to:
\begin{equation}
\sum_{(u,v) \in \mathcal{G}} |z_u - z_v| \leq f-k'-1
\label{ineq:witness_cutsize}
\end{equation}

Let us now try to reproduce \Cref{clm:good_edge_exists} in a language of $Z$. Denote $(\mathcal{Z}, \mathcal{Y})$ to be a uniform distribution over all witnesses of $H$. To draw a uniformly random witness $(Z,Y) \sim (\mathcal{Z}, \mathcal{Y})$, we can (1) draw $Z \sim \mathcal{Z}$, and (2) draw $Y \sim \mathcal{Y}\mid Z$. We also define a \emph{real-valued} vector $\hat{Z} := \{\hat{z}_v\}_{v \in V} \in [0,1]^V$ such that for any vertex $v \in V$, we let:
$$\hat{z}_v := \mathop{\E}_{Z \sim \mathcal{Z}} z_v.$$ Since $z_s$ is always $0$ and $z_t$ is always $1$, it follows that $\hat{z}_s = 0$ and $\hat{z}_t = 1$. Intuitively, the vector $\hat{Z}$ can be viewed as an \emph{average cut} over all remaining witnesses in $\mathcal{W}(H)$. In particular, $\hat{z}_v$ admits a natural probabilistic interpretation: it corresponds to the probability that $v \in T$ under a uniformly random witness $(S, T, Y)$. Importantly, we remark that the explicit description of the $H$ alone is sufficient to enumerate all remaining witnesses in $\mathcal{W}(H)$, and thus to compute $\hat{Z}$.

\begin{claim} For any pair of vertices $(u,v) \notin H$, we have $ \mathop{\Pr}_{Z \sim \mathcal{Z}}(z_u \ne z_v) \geq |\hat{z}_u - \hat{z}_v|$.
\label{clm:diff_side}
\end{claim}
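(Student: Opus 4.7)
The plan is to observe that, because $z_u, z_v \in \{0,1\}$ for every witness, the claim reduces to an elementary identity about indicator random variables, and in fact the assumption $(u,v) \notin H$ is not even needed for the probabilistic inequality itself (it is context for how the claim will later be applied).

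First, I would rewrite $\hat{z}_u - \hat{z}_v$ probabilistically. Since $\hat{z}_v = \E_{Z \sim \mathcal{Z}} z_v = \Pr_{Z \sim \mathcal{Z}}(z_v = 1)$ and similarly for $u$, a case split on the joint values $(z_u, z_v) \in \{0,1\}^2$ gives
\[
\hat{z}_u - \hat{z}_v \;=\; \Pr(z_u = 1) - \Pr(z_v = 1) \;=\; \Pr(z_u = 1, z_v = 0) \;-\; \Pr(z_u = 0, z_v = 1),
\]
because the terms with $z_u = z_v$ cancel exactly.

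Second, I would apply the triangle inequality to this difference:
\[
|\hat{z}_u - \hat{z}_v| \;\le\; \Pr(z_u = 1, z_v = 0) + \Pr(z_u = 0, z_v = 1) \;=\; \Pr(z_u \neq z_v),
\]
where the last equality is just the partition of the event $\{z_u \neq z_v\}$ into its two disjoint sub-events. This is precisely the stated inequality.

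There is no real obstacle here; the main thing to verify carefully is that $z_u$ and $z_v$ are genuinely $\{0,1\}$-valued under $\mathcal{Z}$ (which follows from the binary encoding of cuts introduced just before the claim). The hypothesis $(u,v) \notin H$ plays no role in this elementary probabilistic bound, and will only matter when we later use the claim to reason about the probability that an un-revealed pair $(u,v)$ splits a random witness.
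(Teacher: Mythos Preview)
Your proof is correct and is essentially the same as the paper's: the paper writes $\Pr(z_u \ne z_v) = \E|z_u - z_v| \ge |\E z_u - \E z_v| = |\hat z_u - \hat z_v|$ via the triangle inequality, which is exactly your case-split argument in more compressed form. Your observation that the hypothesis $(u,v)\notin H$ is not used in the inequality itself is also accurate.
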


\begin{proof} Such probability is  $$ \mathop{\E}_{Z \sim \mathcal{Z}} |z_u - z_v| \geq \left|  \mathop{\E}_{Z \sim \mathcal{Z}} z_u -   \mathop{\E}_{Z \sim \mathcal{Z}} z_v\right| = |\hat{z}_u - \hat{z}_v|$$ via the triangle inequality.
\end{proof}

\begin{claim} Fix a pair of vertices $(u,v) \notin H$. Let $Z \in supp(\mathcal{Z} \mid z_u \ne z_v)$. Then,
$$ \mathop{\Pr}_{Y \sim \mathcal{Y} \mid Z}\left[(u,v) \notin Y \setminus E_H(S,T) \mid Z\right] \geq \frac{k'+1}{k}.$$
\label{clm:kill_condition_Z}
\end{claim}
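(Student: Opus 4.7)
The proof is a short hypergeometric calculation combined with two structural observations. I would first unpack the conditional distribution: fix $Z$ with $z_u \ne z_v$ and, swapping $u,v$ if needed, assume $u \in S$ and $v \in T$. Let $e := |E_H(S,T)|$. By the definition of a witness, conditioned on $Z$ the set $Y$ is uniformly distributed over subsets of $S \times T$ of size $f-k'-1$ that contain $E_H(S,T)$; equivalently, $X := Y \setminus E_H(S,T)$ is a uniformly random $(f-k'-1-e)$-element subset of $(S \times T) \setminus E_H(S,T)$. Since $(u,v) \notin H$ — so in particular $(u,v) \notin E_H(S,T)$ — the event $(u,v) \notin Y \setminus E_H(S,T)$ coincides with $(u,v) \notin X$, and straightforward counting gives
\[
\Pr_{Y \sim \mathcal{Y}\mid Z}\bigl[(u,v) \notin Y \setminus E_H(S,T) \mid Z\bigr] \;=\; \frac{|S|\cdot|T| - (f-k'-1)}{|S|\cdot|T| - e}.
\]

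Next, I would extract the two structural bounds that drive the inequality. First, because $(S,T)$ is an $s$-$t$ cut of $H$, max-flow/min-cut duality plus the hypothesis $\nu(H) \ge f-k$ gives $e \ge \nu(H) \ge f-k$. Second, since $s \in S$, $t \in T$, and $|S|+|T|=n$, we always have $|S|\cdot|T| \ge 1 \cdot (n-1) = n-1$; and because $G$ is a simple undirected graph, $f \le \nu(G) \le \min(\deg(s),\deg(t)) \le n-1$, so $|S|\cdot|T| \ge f$.

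Finally, I would verify the claimed inequality by a brief algebraic manipulation. Both sides of $\tfrac{|S|\cdot|T|-(f-k'-1)}{|S|\cdot|T|-e} \ge \tfrac{k'+1}{k}$ are positive (a witness with this $Z$ exists, so $|S|\cdot|T| > f-k'-1 \ge e$), and cross-multiplying turns the target into
\[
(k-k'-1)\cdot |S|\cdot|T| \;\ge\; k(f-k'-1) - (k'+1)\, e.
\]
Substituting $e \ge f-k$ collapses the right-hand side to $(k-k'-1)\cdot f$, so the inequality reduces to $|S|\cdot|T| \ge f$, already established. The whole argument is routine, so the main ``obstacle'' is really bookkeeping: recognizing that $|E_H(S,T)| \ge \nu(H) \ge f-k$ (so one must lower-bound $e$, not upper-bound it) and that $|S|\cdot|T| \ge n-1 \ge f$ are exactly the two inputs that make the algebra work out.
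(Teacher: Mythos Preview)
Your proof is correct and essentially mirrors the paper's argument: both compute the hypergeometric probability that $(u,v)$ lands in $X$, then bound it using the same three facts $|S|\cdot|T|\ge n-1$, $|E_H(S,T)|\ge \nu(H)\ge f-k$, and $f\le n-1$. The only cosmetic difference is that the paper bounds the complement probability $\Pr[(u,v)\in X]\le \tfrac{k-k'-1}{k}$ via a chain of monotonicity steps, whereas you compute $\Pr[(u,v)\notin X]$ directly and cross-multiply; the substance is identical.
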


\begin{proof}  Observe that $\mathcal{Y}\mid Z$ is equivalent to $E_H(S,T) \cup (\mathcal{X} \mid Z)$ where $\mathcal{X} \mid Z$ is a uniform distribution over $\binom{(S \times T) \setminus E_H(S,T)}{f-k'-1-E_H(S,T)}$. Note that $(u,v)$ is one of those pair of vertices among $(S \times T) \setminus E_H(S,T)$. Thus, the probability that $(u,v)$ is included in $X$ is 

\begin{align*}
    \frac{f-k'-1 - |E_H(S,T)|}{|S \times T| - |E_H(S,T)|} & \leq \frac{f-k'-1 - |E_H(S,T)|}{n-1 - |E_H(S,T)|} \tag{$|S \times T| \geq n-1$} \\
    & \leq \frac{f-k'-1 - (f-k)}{n-1 - (f-k)} \tag{$|E_H(S,T)| \geq \nu(H) \geq f-k$} \\
    & \leq \frac{k-k'-1}{k} \tag{$f \leq n-1$}
\end{align*}
Subtracting both sides from 1 concludes the proof.   
\end{proof}

These two claims have important implications. The left-hand side of \Cref{clm:diff_side} can be interpreted as the probability that $u$ and $v$ lie on opposite sides of a uniformly random witness cut $Z$. Additionally, the left-hand side of \Cref{clm:kill_condition_Z} can be viewed as the probability that, conditioned on a fixed choice of $Z$ with $u$ and $v$ on the different sides, a random witness is \emph{killed} by the edge $(u,v)$. By combining these two interpretations, we arrive the following bound.

\begin{claim} For any edge $(u,v) \in G \setminus H$, we have $$ \mathop{\Pr}_{(Z,Y) \sim (\mathcal{Z},\mathcal{Y})}\left[(u,v) \text{ kills } (Z,Y) \right] \geq \frac{k'+1}{k} \cdot \left|\hat{z}_u - \hat{z}_v\right|.$$
\label{clm:frac_witness_killed_cq}
\end{claim}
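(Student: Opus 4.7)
The plan is to decompose the probability by conditioning on the cut $Z$ and then directly chain the two preceding claims. First, I would unpack what it means for $(u,v)$ to kill a witness $(Z,Y)$ in terms of the $Z$-representation: by \Cref{def:kill}, the edge $(u,v)$ kills the witness iff $u,v$ lie on opposite sides of the cut (i.e.\ $z_u \ne z_v$) and $(u,v) \notin Y$. Since $(u,v) \in G \setminus H$, in particular $(u,v) \notin E_H(S,T)$, so the condition ``$(u,v) \notin Y$'' is equivalent to ``$(u,v) \notin Y \setminus E_H(S,T)$.'' This reformulation is what makes \Cref{clm:kill_condition_Z} directly applicable.

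Next, I would write the target probability by the law of total probability over $Z \sim \mathcal{Z}$:
\begin{align*}
\Pr_{(Z,Y) \sim (\mathcal{Z},\mathcal{Y})}\bigl[(u,v) \text{ kills } (Z,Y)\bigr]
&= \sum_{Z} \Pr[Z] \cdot \Pr_{Y \sim \mathcal{Y}\mid Z}\bigl[(u,v) \text{ kills } (Z,Y) \mid Z\bigr].
\end{align*}
For any $Z$ with $z_u = z_v$, the contribution is $0$ since the killing condition fails deterministically. For any $Z$ with $z_u \ne z_v$, I invoke \Cref{clm:kill_condition_Z} to lower bound the conditional probability by $\frac{k'+1}{k}$.

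Putting these together yields
\[
\Pr\bigl[(u,v) \text{ kills } (Z,Y)\bigr] \;\ge\; \frac{k'+1}{k} \cdot \Pr_{Z \sim \mathcal{Z}}[z_u \ne z_v] \;\ge\; \frac{k'+1}{k} \cdot |\hat{z}_u - \hat{z}_v|,
\]
where the last step applies \Cref{clm:diff_side}. This completes the proof.

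There is no real obstacle here: the claim is essentially a two-line corollary that bundles \Cref{clm:diff_side} (which controls the probability of a random cut separating $u$ and $v$) and \Cref{clm:kill_condition_Z} (which, conditioned on such a separation, controls the probability that $(u,v)$ is not already pre-allocated to $Y$). The only subtle point worth spelling out is the equivalence between ``$(u,v) \notin Y$'' and ``$(u,v) \notin Y \setminus E_H(S,T)$'' for edges outside $H$; once that is observed, the chain of inequalities follows mechanically.
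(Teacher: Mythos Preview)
Your proposal is correct and matches the paper's own proof essentially line for line: both decompose the killing event as ``$z_u \ne z_v$ and $(u,v) \notin Y$,'' condition on $Z$, apply \Cref{clm:kill_condition_Z} for the inner conditional probability, and finish with \Cref{clm:diff_side}. Your explicit remark that $(u,v) \notin H$ makes ``$(u,v) \notin Y$'' equivalent to ``$(u,v) \notin Y \setminus E_H(S,T)$'' is exactly the small observation the paper implicitly uses.
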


\begin{proof} Recall that, by definition, an edge $(u, v)$ \emph{kills} a witness $(S, T, Y)$ if $z_u \neq z_v$—i.e., $u$ and $v$ lie on opposite sides of the cut—and $(u, v) \notin Y$. Under this characterization, the LHS quantity becomes:
\begin{align*}
    \mathop{\Pr}_{Z \sim \mathcal{Z}}(z_u \ne z_v) \cdot  \mathop{\Pr}_{(Z,Y) \sim(\mathcal{Z},\mathcal{Y} \mid z_u \ne z_v)} \left[(u,v) \text{ kills } (Z,Y)  \mid z_u \ne z_v\right].
\end{align*}
By \Cref{clm:diff_side}, we have $ \mathop{\Pr}_{Z \sim \mathcal{Z}}(z_u \ne z_v) 
 \geq \left|\hat{z}_u - \hat{z}_v\right|$. Moreover, we can bound the second term by:
\begin{align*}
    &  \mathop{\Pr}_{(Z,Y) \sim(\mathcal{Z},\mathcal{Y} \mid z_u \ne z_v)} \left[(u,v) \text{ kills } (Z,Y)  \mid z_u \ne z_v\right] \\
    & = \sum_{Z \in \text{supp}(\mathcal{Z} \mid z_u \ne z_v)} \Pr \left(\mathcal{Z} = Z \mid z_u \ne z_v\right) \cdot  \mathop{\Pr}_{Y \sim \mathcal{Y} \mid Z} \left[(u,v) \notin Y \setminus E_H(S, T) \mid Z\right]) \\
    & \geq \sum_{Z \in \text{supp}(\mathcal{Z} \mid z_u \ne z_v)} \Pr \left(\mathcal{Z} = Z \mid z_u \ne z_v\right) \cdot \frac{k'+1}{k} \tag{\Cref{clm:kill_condition_Z}} \\
    & = \frac{k'+1}{k}  \cdot \sum_{Z \in \text{supp}(\mathcal{Z} \mid z_u \ne z_v)} \Pr \left(\mathcal{Z} = Z \mid z_u \ne z_v\right) \\
    & = \frac{k'+1}{k}.
\end{align*}
Multiplying the two bounds completes the proof.
\end{proof}

We emphasize that \Cref{clm:frac_witness_killed_cq} is vastly useful in our analysis, as it applies to every edge $(u,v) \in G \setminus H$. Recall that the distribution $(\mathcal{Z}, \mathcal{Y})$ is a uniform among all remaining witnesses. Hence, the claim then implies that any edge $(u,v) \in G \setminus H$ kills an $\Omega\left(|\hat{z}_u - \hat{z}_v|\right)$ fraction of the current witnesses. Furthermore, given the current subgraph $H$, the values $\hat{z}_v$ for all $v \in V$ can be computed locally with no extra costs. As a result, our task is significantly simplified: it suffices to identify an edge $(u,v) \in G \setminus H$ such that $|\hat{z}_u - \hat{z}_v| \geq \gamma$ for some suitable threshold $\gamma > 0$. The following claim ensures the existence of such an edge, with $\gamma = \Omega\left(\frac{\sqrt{k}}{n}\right)$.

\begin{claim} There exists an edge $(u,v) \in G \setminus H$ such that $|\hat{z}_u - \hat{z}_v| \geq \alpha \cdot \frac{k'}{n\sqrt{k}}$, for some universal constant $\alpha > 0$.  
\label{claim:far_edge_exists} 
\end{claim}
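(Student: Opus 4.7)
The plan is to establish the existence of such an edge through a flow-decomposition argument combined with an averaging step, leveraging the flow-cover bound of \Cref{lem:flow_cover}. Since $\nu(H) \geq f - k$, fix a flow subgraph $F_H \preceq H$ of value $f - k$. By \Cref{prop:sum_flows}, $\nu(G_{F_H}) = \nu(G) - (f-k) \geq k$, so one can choose a flow $\Psi$ of value $k$ in the residual graph $G_{F_H}$. Let $Q$ be the set of non-reversed edges used by $\Psi$, i.e., the undirected edges of $G \setminus F_H$ that appear in $\Psi$. Applying \Cref{lem:flow_cover} to the non-circular flow $\Psi$ gives $|Q| \leq O(n\sqrt{k})$. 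Augmenting $F_H$ by $\Psi$ produces a flow $\Phi$ of value $f$ in $G$ whose edges lie in $F_H \cup Q \subseteq E(H) \cup Q$; in particular, $\Phi \setminus E(H) \subseteq Q$, so $|\Phi \setminus E(H)| \leq O(n\sqrt{k})$.

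Next, decompose $\Phi$ into $f$ edge-disjoint $s$--$t$ paths $P_1, \ldots, P_f$. On each path, the triangle inequality applied to the potentials $\hat{z}$, together with $\hat{z}_s = 0$ and $\hat{z}_t = 1$, yields $\sum_{(u,v) \in P_i} |\hat{z}_u - \hat{z}_v| \geq |\hat{z}_s - \hat{z}_t| = 1$. Summing over all $f$ paths and using edge-disjointness gives
\[
\sum_{(u,v) \in \Phi} |\hat{z}_u - \hat{z}_v| \geq f.
\]

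On the other hand, for every witness $(Z, Y) \in \W(H)$, inequality \eqref{ineq:witness_cutsize} instantiated on $H$ says $\sum_{(u,v) \in E(H)} |z_u - z_v| \leq f - k' - 1$. Taking expectation over the uniform distribution $\mathcal{Z}$ of witness cuts and using the pointwise bound $|\hat{z}_u - \hat{z}_v| \leq \E |z_u - z_v|$ (a direct consequence of the triangle inequality, echoing \Cref{clm:diff_side}), one obtains
\[
\sum_{(u,v) \in E(H)} |\hat{z}_u - \hat{z}_v| \leq f - k' - 1.
\]
Subtracting these two inequalities, and bounding $\sum_{\Phi \cap E(H)} |\hat{z}_u - \hat{z}_v|$ by the above sum over all of $E(H)$, gives
\[
\sum_{(u,v) \in \Phi \setminus E(H)} |\hat{z}_u - \hat{z}_v| \geq f - (f - k' - 1) = k' + 1.
\]
Since $\Phi \setminus E(H) \subseteq G \setminus H$ and $|\Phi \setminus E(H)| \leq O(n\sqrt{k})$, averaging yields an edge $(u,v) \in G \setminus H$ with $|\hat{z}_u - \hat{z}_v| \geq \Omega\!\bigl(k' / (n\sqrt{k})\bigr)$, as required.

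The main technical point is to ensure that $\Phi$ differs from $H$ on only $O(n\sqrt{k})$ edges, not $O(n\sqrt{f})$; this is precisely where it is essential to obtain $\Phi$ by augmenting the large flow $F_H$ already contained in $H$ by a modest residual flow of value $k$, so that \Cref{lem:flow_cover} applies to the residual piece rather than to a full max-flow of $G$. Once this flow-cover bound is in place, the remainder is routine triangle-inequality bookkeeping against the witness cut-size constraint.
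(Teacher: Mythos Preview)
Your proposal is correct and follows essentially the same argument as the paper: both bound $\sum_{(u,v)\in E(H)} |\hat z_u-\hat z_v|\le f-k'-1$ via the witness cut-size constraint, construct a flow of value $f$ that uses only $O(n\sqrt{k})$ edges outside $H$ (by augmenting a value-$(f-k)$ flow in $H$ with a residual flow of value $k$ and invoking \Cref{lem:flow_cover}), lower-bound the flow's total potential difference by $f$ via the triangle inequality along each of its $f$ paths, and then subtract and average. The only cosmetic difference is that the paper phrases the flow step as $\sum_{H\cup Q}\ge f$ with $Q\subseteq G\setminus H$, whereas you phrase it as $\sum_{\Phi}\ge f$ and then bound $\sum_{\Phi\cap E(H)}$ by $\sum_{E(H)}$; the arithmetic is identical.
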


\begin{proof} For any support $Z \in \text{supp}(\mathcal{Z})$ (meaning that $Z = (S,T)$ appears in some witness), we have $\sum_{(u,v) \in H} |z_u - z_v| \leq f-k'-1$ following the equivalent definition of witness in (\ref{ineq:witness_cutsize}). Taking expectation over $Z \sim \mathcal{Z}$, we have:
\begin{align*} f-k'-1 & \geq  \mathop{\E}_{Z \sim \mathcal{Z}} \sum_{(u,v) \in H} |z_u - z_v| \\
& = \sum_{(u,v) \in H}  \mathop{\E}_{Z \sim \mathcal{Z}}  |z_u - z_v| \tag{linearlity of expectation} \\
& \geq \sum_{(u,v) \in H} \left|  \mathop{\E}_{Z \sim \mathcal{Z}} z_u -   \mathop{\E}_{Z \sim \mathcal{Z}} z_v\right| \tag{triangle inequality} \\
& = \sum_{(u,v) \in H} |\hat{z}_u - \hat{z}_v|.
\end{align*}

On the other hand, let $Q \subseteq G \setminus H$ be such that $|Q| = O(n\sqrt{k})$ and $\nu(H \cup Q) = f$. The existence of $Q$ is guaranteed via construction. Let $F \preceq H$ be a flow subgraph of $H$ with $\nu(F) = \nu(H) \geq f-k$. We then have $\nu(G_F) = \nu(G) - \nu(F) \geq f - \nu(F)$. We then let $F'$ be a non-circular flow of $G_F$ with value $f - \nu(F)$ and set $Q = F' \setminus H$. We then see that $\nu(H \cup Q) \geq \nu(F') + \nu(F) = f$ and $|Q| \leq |F'| \leq O(n\sqrt{f - \nu(F)}\ ) \leq O(n\sqrt{k})$ due to \Cref{lem:flow_cover}. As a result, we derive:
\begin{align*}
    \sum_{(u,v) \in H \cup Q} |\hat{z}_u - \hat{z}_v| & \geq \sum_{i \in [f]} \sum_{\text{edges $(u,v)$ in $i^{\text{th}}$ $s$-$t$ flow}} |\hat{z}_u - \hat{z}_v| \\
    &  \geq \sum_{i \in [f]} |\hat{z}_t - \hat{z}_s| \tag{triangle inequality} \\
    & = f \tag{$\hat{z}_s = 0$ and $\hat{z}_t = 1$.}
\end{align*}
This implies
$$\sum_{(u,v) \in Q} |\hat{z}_u - \hat{z}_v| \geq k'+1$$
which means some edge $(u,v) \in Q \subseteq G \setminus H$ must have $|\hat{z}_u - \hat{z}_v| \geq \frac{k'+1}{|Q|} \geq  \Omega\left(\frac{k'}{n\sqrt{k}}\right)$.
\end{proof}

By combining \Cref{clm:frac_witness_killed_cq} and \Cref{claim:far_edge_exists}, there must exist of an edge $(u,v) \in G \setminus H$ that kills a $\frac{k'+1}{k} \cdot \left(\alpha \cdot \frac{k'}{n\sqrt{k}}\right) \approx \frac{\alpha}{4} \cdot \frac{\sqrt{k}}{n}$ fraction of the remaining witnesses, via the fact that $k' = \frac{k}{2}$. In the next section, we introduce a subroutine that, using only $O(\log n)$ cut queries, enables us to find an edge in $G \setminus H$ that is (asymptotically) as effective as such edge $(u,v)$.

\subsection{Useful Subroutines}
\label{subsec:subroutine}
\newcommand{\FLE}{\mathsf{FindLongEdge}}

Consider the following problem called $\FLE$. We are given the following items as an input. 
\begin{itemize}
    \item An underlying graph $G = (V,E)$ to which we are able to make cut-query.
    \item A potential function $\phi: V \rightarrow [0,1]$ that maps $v \in V$ to its position between $0$ and $1$.
    \item A threshold $\delta \in (0,1)$.
\end{itemize}

There is also a hidden parameter $\delta_{\text{max}} := \max_{(u,v) \in E} |\phi(u) - \phi(v)|$ not known to us. We are asked to identify a \emph{long} edge—if one exists—whose endpoints are mapped far apart, by making cut queries to $G$.  Formally, we must output either
\begin{itemize}
    \item $(u,v) \in G$ with $|\phi(u) - \phi(v)| \geq \frac{\delta}{2}$, or
    \item ``FAIL'' which means $\delta > \delta_{\text{max}}$. 
\end{itemize}
Another helpful interpretation is as follows. If $\delta \leq \delta_{\text{max}}$, then we must output $(u,v) \in G$ with $|\phi(u) - \phi(v)| \geq \delta/2$. On the other hand, if $\delta > \delta_{\text{max}}$, it is acceptable to output either such $(u,v)$, if existed, or ``FAIL''.

We first describe a useful object that will shortly help us design an algorithm for $\FLE$. Let $\ell$ be an arbitrary positive integer. We call $A = \left((p_1,q_1),...,(p_\ell, q_\ell)\right)$ an \emph{assignment} iff for each $i$ we have $(p_i, q_i) \in \{1,2,3\} \times \{1,2\}$ and $p_i \ne p_{i+1}$. We say a pair $(i, j) \in [\ell]^2$ is \emph{matched} by $A$ iff $p_i = p_j$ and $q_i \ne q_j$. By the description of assignments, we notice that any pairs $(i,j)$ with $|i-j| \leq 1$ is never matched. The following claim shows that all remaining pairs can be matched at least once if we are allowed to have $O(\log \ell)$ assignments.

\begin{claim} There exists a set of $t = O(\log \ell)$ assignments $\mathcal{A} = \{A^{(1)},...,A^{(t)}\}$ such that every pair $(i,j) \in [\ell]^2$ with $|i-j| \geq 2$ is matched by at least one assignment among $\mathcal{A}.$ 

    \label{clm:far_matched}
\end{claim}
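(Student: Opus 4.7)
The plan is to prove the claim via the probabilistic method: I will define a distribution over assignments such that any fixed pair at distance $\geq 2$ is matched with at least constant probability, and then take $O(\log \ell)$ independent samples and union bound.

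First I would construct a random assignment $A = ((p_1, q_1), \ldots, (p_\ell, q_\ell))$ as follows. Draw $p_1$ uniformly from $\{1,2,3\}$, and for each $i \geq 1$ draw $p_{i+1}$ uniformly from $\{1,2,3\} \setminus \{p_i\}$; this is a uniform random walk on the complete graph $K_3$ and automatically enforces $p_i \neq p_{i+1}$. Independently, draw each $q_i$ uniformly from $\{1, 2\}$. Since $p$ and $q$ are independent, the probability that a fixed pair $(i,j)$ is matched is $\Pr[p_i = p_j] \cdot \Pr[q_i \neq q_j] = \tfrac{1}{2}\Pr[p_i = p_j]$, so everything reduces to lower-bounding $\Pr[p_i = p_j]$ when $|i - j| \geq 2$.

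Next I would prove the bound $\Pr[p_i = p_j] \geq 1/4$ whenever $|i-j| = d \geq 2$. Setting $a_d := \Pr[p_{i+d} = p_i]$ and conditioning on $p_{i+d}$: if $p_{i+d} = p_i$ (probability $a_d$) then $p_{i+d+1} \neq p_i$ deterministically; if $p_{i+d} \neq p_i$ (probability $1 - a_d$), then $p_{i+d+1}$ is uniform on the two values other than $p_{i+d}$, one of which is $p_i$. This gives $a_{d+1} = (1 - a_d)/2$ with base case $a_0 = 1$, whose closed form is $a_d = \tfrac{1}{3} + \tfrac{2}{3}(-\tfrac{1}{2})^d$. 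Direct inspection shows the minimum over $d \geq 2$ is $a_3 = 1/4$, yielding the desired bound. Combining, any fixed pair with $|i-j| \geq 2$ is matched by a single random assignment with probability at least $1/8$.

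Finally, I would sample $t = C \log \ell$ independent random assignments for a sufficiently large absolute constant $C$. For each of the at most $\ell^2$ candidate pairs with $|i-j| \geq 2$, the probability that no sampled assignment matches it is at most $(7/8)^t$. Union-bounding, the expected number of unmatched pairs is at most $\ell^2 \cdot (7/8)^t$, which is $< 1$ once $C$ is large enough. Hence with positive probability the $t$ sampled assignments jointly match every such pair, which establishes the existence claimed in the statement.

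The main subtlety is the probability computation for $\Pr[p_i = p_j]$, since the $p_i$'s are dependent across indices. The elementary recurrence above handles this cleanly; spectrally, the lower bound reflects the fact that the non-trivial eigenvalues of the $K_3$ random-walk transition matrix equal $-1/2$, so correlations decay geometrically. Everything else is a routine application of the probabilistic method.
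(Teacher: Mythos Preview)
Your proposal is correct and takes essentially the same approach as the paper: both use the probabilistic method with the same random walk on $K_3$ for the $p_i$'s and independent uniform $q_i$'s, establish $\Pr[p_i=p_j]\ge 1/4$ for $|i-j|\ge 2$, and finish with $O(\log \ell)$ independent samples plus a union bound. The only cosmetic differences are that the paper fixes $p_1=1$ rather than drawing it uniformly, and derives the $1/4$ bound via a two-step conditioning argument rather than your explicit recurrence $a_{d+1}=(1-a_d)/2$ with closed form $a_d=\tfrac{1}{3}+\tfrac{2}{3}(-\tfrac{1}{2})^d$; your computation is slightly cleaner.
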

\begin{proof} We will prove the existence via a probabilistic argument. Consider the following random process for constructing an assignment $A = ((p_1,q_1),...,(p_\ell,q_\ell))$.
\begin{enumerate}
    \item Assign $p_1 = 1$. 
    \item For any $2 \leq i \leq \ell$, sequentially assign $p_i$ uniformly at random among $\{1,2,3\} \setminus \{p_{i-1}\}$
    \item For each $j$, assign $q_j \in \{1,2\}$ independently.
\end{enumerate}

We make the following miniclaim: for any $(i,j)$ that $|i-j| \geq 2$, we have $(i,j)$ being matched by $A$ with probability at least $\frac{1}{8}$. To see this, assume WLOG that $i \geq j+2$. Then, we have
\begin{align*}
    \Pr(p_i = p_j) & = \Pr(p_{i-1} = p_j) \cdot \Pr(p_i = p_j \mid p_{i-1} = p_j) +  \Pr(p_{i-1} \ne p_j) \cdot \Pr(p_i = p_j \mid p_{i-1} \ne p_j) \\
    & = \frac{1}{2} \cdot \Pr(p_{i-1} \ne p_j)
\end{align*}
since $ \Pr(p_i = p_j \mid p_{i-1} = p_j)  = 0$ and $\Pr(p_i = p_j \mid p_{i-1} \ne p_j) = \frac{1}{2}$. Moreover, we have
\begin{align*}
    & \Pr(p_{i-1} \ne p_j) \\
    & = \Pr(p_{i-2} = p_j) \cdot \Pr(p_{i-1} \ne p_j \mid p_{i-2} = p_j) +  \Pr(p_{i-2} \ne p_j) \cdot \Pr(p_{i-1} \ne p_j\mid p_{i-1} \ne p_j) \\
    & = \Pr(p_{i-2} = p_j) + \frac{1}{2} \cdot \Pr(p_{i-2} \ne p_j) \\
    & = \frac{1}{2} + \frac{1}{2}\cdot \Pr(p_{i-2} = p_j).
\end{align*}
where the second equality is true via the construction: we have $\Pr(p_{i-1} \ne p_j \mid p_{i-2} = p_j) = 1$ and $\Pr(p_{i-1} \ne p_j\mid p_{i-1} \ne p_j) = \frac{1}{2}$. It is worth noting that these chain of equations subtly relies on the assumption that $i \geq j + 2$. This condition ensures that our random process assigns a value to $p_j, \ldots, p_{i-2},p_{i-1},$ and $p_i$ in that particular order.

Combining the two equations, we have $\Pr(p_i = p_j) =\frac{1}{4} + \frac{1}{4} \cdot \Pr(p_{i-2} = p_j) \geq \frac{1}{4}$. Since $q_i$ and $q_j$ are assigned uniformly and independently among $\{1,2\}$, and independent of $p_i$ and $p_j$, we have
$$\Pr(p_i = p_j \land q_i \ne q_j) = \Pr(p_i = p_j) \cdot \Pr(q_i \ne q_j) \geq \frac{1}{8}.$$ In other words, $(i,j)$ are matched by $A$ with probability at least $\frac{1}{8}$, thereby proving our miniclaim.

Now consider drawing $t = 24 \log n$ independently random assignments from the above process. For any pair $(i,j) \in [\ell]^2$ with $|i-j| \geq 2$, the probability that $(i,j)$ is not matched by any of the $t$ assignments is at most $$\left(\frac{7}{8}\right)^{24 \log n} \leq n^{-3}.$$ Applying the union bound over all $O(n^2)$ possible pairs of $(i,j)$'s, w.h.p. every such pairs is matched by at least one among the $t$ assignments. As a consequence, there must be a deterministic choice of those $t$ assignments that the statement remains true, thereby proving our claim.
\end{proof}

We are now ready to give an efficient cut-query algorithm for $\FLE$.

\begin{lemma}
    $\FLE$ can be solved deterministically within $O\left(\log{(n/\delta)}\right)$ cut queries to $G$. 
\label{lem:subroutine}
\end{lemma}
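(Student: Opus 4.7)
The plan is to discretize the potentials by partitioning $V$ into buckets and then leverage the assignment family from \Cref{clm:far_matched} to reduce the search for a long edge to a small number of $\FindEdge$ queries. Concretely, set $w = \delta/2$ and form $\ell = \lceil 1/w \rceil = O(1/\delta)$ buckets $V_1, \ldots, V_\ell$ where $V_i = \{v \in V : \phi(v) \in [(i-1)w, iw)\}$. Computing this partition requires no cut queries since $\phi$ is given explicitly. This bucketing has two calibrated properties: (i) any edge $(u,v) \in E$ with $|\phi(u) - \phi(v)| \geq \delta$ must have endpoints in buckets at index distance at least $2$, since adjacent or identical buckets only witness gaps strictly below $2w = \delta$; and conversely (ii) any edge whose endpoints lie in buckets at index distance $\geq 2$ must span a potential gap greater than $w = \delta/2$.

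With this bucketing in hand, invoke \Cref{clm:far_matched} to obtain a family $\mathcal{A} = \{A^{(1)}, \ldots, A^{(t)}\}$ of $t = O(\log \ell) = O(\log(1/\delta))$ assignments such that every pair of bucket indices at distance $\geq 2$ is matched by at least one assignment. For each $A^{(k)} = ((p_1, q_1), \ldots, (p_\ell, q_\ell))$ and each $p \in \{1,2,3\}$, form the super-sets $U_p^{(q)} := \bigcup_{i : (p_i, q_i) = (p, q)} V_i$ for $q \in \{1, 2\}$, and issue one $\FindEdge(U_p^{(1)}, U_p^{(2)})$ call, simulated via cut queries using \Cref{prop:ISS_cq}. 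The algorithm halts and returns an edge as soon as any such call succeeds; if all $3t$ calls return NONE, it returns ``FAIL''.

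Correctness is a direct translation of the two properties above. Any edge $(u,v)$ returned by $\FindEdge(U_p^{(1)}, U_p^{(2)})$ has endpoints in buckets $V_i, V_j$ with $p_i = p_j$ and $q_i \ne q_j$; by the definition of an assignment ($p_i \ne p_{i \pm 1}$, and $i = j$ would force $q_i = q_j$), this forces $|i - j| \geq 2$, and by property (ii) we get $|\phi(u) - \phi(v)| > \delta/2$ as required. Conversely, if $\delta \leq \delta_{\max}$, then there exists an edge $(u^*, v^*) \in E$ with $|\phi(u^*) - \phi(v^*)| \geq \delta$, so by property (i) its endpoint buckets lie at distance $\geq 2$, and by \Cref{clm:far_matched} some assignment $A^{(k)}$ places them into $(U_p^{(1)}, U_p^{(2)})$ for some $p$, forcing the corresponding $\FindEdge$ call to return some (valid, though possibly different) long edge.

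Finally, for the query complexity, \Cref{prop:ISS_cq} gives $O(1)$ cut queries per $\FindEdge$ call that returns NONE and $O(\log n)$ queries for a call that returns an edge. Since the algorithm halts on the first successful call, the total cost is $O(3t) + O(\log n) = O(\log(1/\delta) + \log n) = O(\log(n/\delta))$ cut queries to $G$. The plan is essentially bookkeeping once the bucketing and \Cref{clm:far_matched} are in place; the only conceptual step is choosing $w = \delta/2$ so that the factor-$2$ separation between buckets simultaneously matches the assignment condition $|i - j| \geq 2$ and delivers the $\delta/2$ output guarantee, which I expect to be the main (modest) obstacle.
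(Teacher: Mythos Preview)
Your proposal is correct and follows essentially the same approach as the paper's proof: both bucket $[0,1]$ into $\lceil 2/\delta\rceil$ intervals of width $\delta/2$, invoke \Cref{clm:far_matched} to obtain $O(\log(1/\delta))$ assignments, issue three $\FindEdge$ queries per assignment on the induced $\{1,2,3\}\times\{1,2\}$ partition, and argue correctness via the two calibrated properties (i) and (ii) together with the $O(1)$-vs-$O(\log n)$ dichotomy of \Cref{prop:ISS_cq}. The only differences are cosmetic (your $U_p^{(q)}$ versus the paper's $V_{r,s}$, and your upfront statement of properties (i)/(ii) versus the paper's inline miniclaims).
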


\begin{proof} Partition $[0,1]$ into $\ell = \lceil 2/\delta\rceil $ intervals of length $\delta/2$ (besides the last interval which may be shorter). Denote the $j^{\text{th}}$ interval by $I_j := \left[\frac{(j-1)\delta}{2}, \frac{j\delta}{2}\right)$. Let $\mathcal{A}$ be a set of $O(\log \ell)$ assignments that follows \Cref{clm:far_matched} with respect to $\ell$. Our algorithm proceeds as follows.

\begin{enumerate}
    \item[(1)] For each assignment $A = ((p_1,q_1),...,(p_\ell, q_\ell)) \in \mathcal{A}$, we do:
    \begin{enumerate}
        \item [(1a)] Partition $V$ into 6 sets. Namely, for each $(r,s) \in \{1,2,3\} \times \{1,2\}$, we denote:
    $$V_{r,s} := \{v \in V \mid \phi(v) \in I_j \land (r,s) = (p_j,q_j)\}.$$
    In words, for each $v \in V$, we first identify the block $I_j$ that contains $\phi(v)$. Then we look up the $j^{th}$ pair of the assignment, namely $(p_j,q_j)$, and that becomes the index of the partition which $v$ belongs to.
    \item [(1b)] Make three find-edge queries: $\FindEdge(V_{1,1}, V_{1,2}), \FindEdge(V_{2,1}, V_{2,2}), $ and $\FindEdge(V_{3,1}, V_{3,2})$.
    \item [(1c)] If any of the three find-edge queries ever return an edge $(u,v)$, output it and terminate immediately. Otherwise, proceed to the next assignment.
    \end{enumerate}
    \item[(2)] If all find-edge queries never return an edge, output ``FAIL''.
    
\end{enumerate}
We now justify the correctness of our algorithm, which follows from these two miniclaims.
\begin{enumerate}

    \item If $\delta \leq \delta_{\text{max}}$, then the algorithm never outputs ``FAIL'', i.e. it must output some edge $(u,v)$.
    
    Let $(u^*,v^*) \in E$ be such that $|\phi(u^*) - \phi(v^*)| = \delta_{\text{max}}$. Let $\phi(u^*) \in I_a$ and $\phi(v^*) \in I_b$ for some $a,b \in [\ell]$. WLOG, assume that $\phi(u^*) \geq  \phi(v^*)$ so that $\phi(u^*) - \phi(v^*) = \delta_{\text{max}}$. Then, we have
    \begin{align*}
        \delta \leq \delta_{\text{max}} = \phi(u^*) - \phi(v^*) < \frac{\delta}{2} \cdot (a+1) - \frac{\delta}{2} \cdot b.
    \end{align*}
    This then implies that $a-b \geq 2$. By the construction of our assignments as guaranteed by \Cref{clm:far_matched}, $(a,b)$ must be matched by some assignment $((p_1,q_1),...,(p_\ell,q_\ell)) \in \mathcal{A}$. Within this assignment, we must have $p_a = p_b := p$ and $q_a \ne q_b$. When we make the query $\FindEdge(V_{p,1}, V_{p,2})$ in (1b), the vertices $u^*$ and $v^*$ belong to the different side of the query. As a result, the find-edge query must return some edge $(u, v) \in V_{p,1} \times V_{p,2}$ which may or may not be the same edges as $(u^*,v^*)$. Thus, the algorithm never outputs ``FAIL''.
    
    \item If the algorithm outputs some edge $(u,v)$, then we have $|\phi(u) - \phi(v)| \geq \frac{\delta}{2}$.
    
    To see this, assume that $(u,v)$ is found upon a find-edge query $\FindEdge(V_{i,1}, V_{i,2})$ of an assignment $A \in \mathcal{A}$. This implies $u \in V_{i,1}$ and $v \in V_{i,2}$. Let $a,b \in [\ell]$ such that $\phi(u) \in I_a$ and $\phi(v) \in I_b$. By definition, we have $(p_a,q_a) = (i,1)$ and $(p_b,q_b) = (i,2)$ which implies $a \ne b$. Moreover, since adjacent interval cannot have the same $p$ values, we must have $|a-b| \geq 2$. WLOG we assume that $a-b \geq 2$. As a result we have 
    $$\phi(u) - \phi(v) \geq \frac{\delta}{2}\cdot (a-1) - \frac{\delta}{2} \cdot b \geq \frac{\delta}{2}$$
    as wished.
\end{enumerate}

Finally, we analyze the query complexity of our algorithm. We make up to $O(\log{\ell}) = O(\log{(1/\delta)})$ find-edge queries, out of which up to one query outputs an edge. Since all find-edge queries that answer ``NONE'' spends $O(1)$ cut queries, and the one that actually outputs an edge spends $O(\log n)$ cut queries, in total our algorithm only makes $O(\log{(n/\delta)})$ cut queries to $G$.
\end{proof}

\subsection{Putting All Together: Proof of \Cref{claim:main_claim_cq}}
\label{subsec:pf_main_claim}
We now combine all the pieces to prove our main claim of this section. 

\mainclaimcutquery*

\begin{proof} We do as follows.
\begin{enumerate}
    \item If $\W(H) = \emptyset$, output the current $H$ and terminate. Otherwise, proceed. 
    \item Compute $\hat{z}_v$ for all $v \in V$ w.r.t. $H$.
    \item Use the algorithm of \Cref{lem:subroutine} to compute $\FLE$ with respect to the graph $G \setminus H$, potential $\phi(v) := \hat{z}_v$ and $\delta := \frac{\alpha}{2} \cdot \frac{\sqrt{k}}{n}$. Let $(u,v) \in G \setminus H$ be the edge that is outputted.
    \item Set $H \leftarrow H \cup (u,v)$ and go back to Step 1.
\end{enumerate}

If our algorithm ever terminates, its correctness follows from \Cref{invariant} that $\W(H) = \emptyset$ implies $\nu(H) \geq f-k'$. Thus, it suffice to justify the termination and query complexity.

Assume that at the start of an iteration, we have $\W(H) \ne \emptyset$. \Cref{claim:far_edge_exists} tells us that some $(u^*,v^*) \in G \setminus H$ has $|\hat{z}_{u^*} - \hat{z}_{v^*}| \geq \frac{\alpha}{2} \cdot \frac{\sqrt{k}}{n}$; i.e. the graph $G \setminus H$ has $\delta_\text{max} \geq \frac{\alpha}{2} \cdot \frac{\sqrt{k}}{n} = \delta$. This means Step 2 successfully finds $(u,v) \in G \setminus H$ such that $|\hat{z}_u - \hat{z}_v| \geq  \frac{\delta}{2} = \frac{\alpha}{4} \cdot \frac{\sqrt{k}}{n}$ per the guarantee of \Cref{lem:subroutine}. Thus, via \Cref{clm:frac_witness_killed_cq}, the fraction of $\W(H)$ that $(u,v)$ kills is at least $ \frac{k'+1}{k} \cdot |\hat{z}_u - \hat{z}_v| \geq \frac{\alpha}{8}\cdot \frac{\sqrt{k}}{n}$. As a result, upon inserting $(u,v)$ to $H$, the number of remaining witnesses is reduced to $$\W(H \cup (u,v)) \leq \left(1-\frac{\alpha}{8}\cdot \frac{\sqrt{k}}{n}\right) \cdot \W(H) \leq \exp\left(- \frac{\alpha}{8}\cdot \frac{\sqrt{k}}{n}\right) \cdot \W(H).$$

A trivial bound shows that in the beginning, we have at most $n^{O(n)}$ witnesses. Therefore, we only need 
$$\left(\frac{8}{\alpha} \cdot \frac{n}{\sqrt{k}} \right)\cdot \log{\left(n^{O(n)}\right)} = \tO\left(\frac{n^2}{\sqrt{k}}\right)$$ iterations before all witnesses are killed. Then, in the next iteration, the algorithm will surely terminate.

Finally, we need to bound the number of cut queries made to $G$. Consider an arbitrary iteration which $H$ has been given explicitly. Only Line 3  requires us to make $O(\log n)$ cut queries to $G \setminus H$. Each of which can be simulated by a single query to $G$, along with an explicit description of $H$. Since the algorithm only needs $\tO\left(\frac{n^2}{\sqrt{k}}\right)$ iterations, we only make $\tO\left(\frac{n^2}{\sqrt{k}}\right)$ cut queries to $G$.
\end{proof}

% \newpage

\section{Open Problems}
\label{sec:open_problems}

We conclude with a brief discussion of open problems and future directions motivated by our results.

%\paragraph{Submodular Function Minimization (SFM).} 
\paragraph{SFM Lower Bounds via Graph Cuts.}
%Understanding SFM is one of the ultimate goals of cut query studies. 
%
The current best deterministic lower bound for submodular function minimization (SFM) is due to Chakrabarty, Graur, Jiang, and Sidford~\cite{ChakrabartyGJS22}, who showed that any algorithm must make at least $\Omega(n \log n)$ value queries to solve SFM in the value oracle model. Their construction involves a carefully designed submodular function tailored to their lower bound framework. This raises the intriguing open question of whether any variants of graph cut functions can be used to establish a matching $\Omega(n \log n)$ lower bound. In particular, obtaining an $\Omega(n^{1.01})$ lower bound for SFM, whether via graph cuts or not, would be considered a major breakthrough.

\paragraph{SFM Upper Bound and Directed/Weighted Graphs.} 
%One the upper bound side, w
On the upper bound side of SFM, the best-known algorithm by Jiang \cite{Jiang23} makes $\tO(n^2)$ value queries. An obvious major open problem is to achieve, e.g., $O(n^{2-\epsilon})$ bound for some constant $\epsilon>0$. Towards this goal, computing the minimum $s$--$t$ cuts on directed/weighted graphs presents a major barrier as {\em no} $o(n^2)$ algorithm was known for {\em either} the case of directed unweighted graphs and case of undirected weighted graphs (despite many  progress in the sequential and other settings \cite{BrandGJV25,Brand0KLMGS24, Brand0PKLGSS23,ChenKLPGS22,BrandGJLLPS22,BrandLLSS0W21,BrandLNPSS0W20}). We believe that presenting an $O(n^{2-\epsilon})$ bound for one of these cases will be a major progress. In particular, for the directed case, even a more apparent barrier is to solving the following simple {\em reachability} problem \cite{Nanongkai24}: Can we determine if there is a {\em directed} $s$--$t$ path in an unweighted directed graph in $O(n^{2-\epsilon})$ cut queries for some constant $\epsilon>0$? It should be a breakthrough already to solve this reachability problem.

For the {\em global} min-cut problem, the cut query complexity is essentially resolved for weighted undirected graphs~\cite{MukhopadhyayN20}. While there has been progress in the standard sequential setting~\cite{Cen0NPSQ21}, we are not aware of any new developments in the cut query model. We leave this as another important open problem.

%For {\em global} mincut, the cut query complexity was essentially solved for weighted undirected graphs \cite{MukhopadhyayN20}. While there were progress in the standard sequential setting~\cite{Cen0NPSQ21}, we are not aware of any progress in the cut query setting. We leave this as another important open problem. 

\paragraph{Finding a Maximum Flow.}
The duality between the minimum $s$--$t$ cut and the maximum $s$--$t$ flow is one of the most celebrated results in combinatorial optimization. While our algorithm computes (and enumerates all) minimum $s$--$t$ cuts, it does not immediately yield a maximum $s$--$t$ flow. This limitation is also shared by the algorithm of Rubinstein, Schramm, and Weinberg~\cite{RubinsteinSW18}. On the other hand, the deterministic algorithm of Anand, Saranurak, and Wang~\cite{AnandSW25} does output a maximum $s$--$t$ flow using $\widetilde{O}(n^{5/3})$ cut queries and communication. Interestingly, a closer examination of our approach reveals that it too can be extended to recover a maximum flow with complexity $\tO(n^{5/3})$: by applying either \Cref{lem:large_flow_qc} or \Cref{lem:large_flow_comm} with threshold $\Delta = n^{2/3}$, and then augmenting to a full maximum flow using $O(n^{2/3})$ additional iterations of the standard augmenting path method. Nevertheless, it remains an intriguing open problem whether one can compute a maximum $s$--$t$ flow using fewer than $o(n^{5/3})$ cut queries or bits of communication. Doing so with {\em combinatorial algorithms}, even for the special case of bipartite matching \cite{BlikstadBEMN22} will also be interesting.

\paragraph{Improving Query Complexity.}
Our $\widetilde{O}(n^{8/5})$-query algorithm (\Cref{thm:min_cut_cq}) represents the first progress beyond the $\widetilde{O}(n^{5/3})$ barrier established by ~\cite{RubinsteinSW18} and ~\cite{AnandSW25}, demonstrating that this bound is not optimal. A natural next milestone is to design an algorithm with complexity $\widetilde{O}(n^{3/2})$. Although the support size of a minimum $s$--$t$ cut is at most $O(n)$, it is reasonable to conjecture a lower bound of $\widetilde{\Omega}(n^{3/2})$ for the problem, as this matches the support size of its dual—the maximum $s$--$t$ flow of value $\Omega(n)$. In fact, one can show that any algorithm which outputs an exact maximum flow—or even a multiplicative $\Omega(1)$-approximation—must incur a communication complexity of $\Omega(n^{3/2})$ bits, and therefore a cut-query complexity of $\widetilde{\Omega}(n^{3/2})$. From the lower bound perspective, showing that any deterministic two-player communication protocol for exactly computing the \emph{value} of the minimum $s$--$t$ cut requires $\widetilde{\Omega}(n^{3/2})$ bits of communication would be an exciting result.

% \paragraph{Improving the Query Complexity.}
% Our $\widetilde{O}(n^{8/5})$-query algorithm and $\widetilde{O}(n^{11/7})$-bit communication protocol represent the first progress beyond the $\widetilde{O}(n^{5/3})$ barrier established by ~\cite{RubinsteinSW18} and ~\cite{AnandSW25}, demonstrating that this bound is not optimal. A natural next milestone is to design an algorithm with complexity $\widetilde{O}(n^{3/2})$. Although the support size of a minimum $s$--$t$ cut is at most $O(n)$, it is reasonable to conjecture a lower bound of $\widetilde{\Omega}(n^{3/2})$ for the problem, as this matches the support size of its dual—the maximum $s$--$t$ flow of value $\Omega(n)$. In fact, one can show that any algorithm which outputs an exact maximum flow—or even a multiplicative $\Omega(1)$-approximation—must incur a communication complexity of $\Omega(n^{3/2})$ bits, and therefore a cut-query complexity of $\widetilde{\Omega}(n^{3/2})$. 

\paragraph{Improving Combinatorial Protocols.} Similarly, it is natural to ask whether a $\widetilde{O}(n^{3/2})$-bit protocol for computing minimum $s$--$t$ cuts can be achieved using combinatorial techniques. The work of \cite{HY25} confirms that such an upper bound is indeed possible, but their protocol relies on non-combinatorial continuous optimization approaches. We leave it as an open question whether this $\tO(n^{3/2})$ bound can be matched by a fully combinatorial protocol. Our $\widetilde{O}(n^{11/7})$-bit deterministic protocol (\Cref{thm:min_cut_comm}) represents progress in this direction.

\paragraph{Deterministic Algorithms.} While our $\widetilde{O}(n^{11/7})$ protocol for the two-party communication model is fully deterministic, our $\widetilde{O}(n^{8/5})$ cut-query algorithm is randomized. To date, the only known deterministic cut-query algorithm for computing a minimum $s$--$t$ cut is the $\widetilde{O}(n^{5/3})$-query algorithm of ~\cite{AnandSW25}, which is based on a careful implementation of Dinitz's blocking flow method. It remains an open question whether this $\widetilde{O}(n^{5/3})$ bound can be improved in the deterministic setting.

Notably, our $\widetilde{O}(n^{8/5})$ algorithm is deterministic except for one component: the computation of an $\varepsilon$-cut sparsifier. Hence, a deterministic algorithm for constructing such a sparsifier using $\widetilde{O}(n / \varepsilon^2)$ cut queries would immediately yield a fully deterministic $\widetilde{O}(n^{8/5})$-query algorithm for the minimum $s$--$t$ cut problem under our framework. Moreover, via the reduction of~\cite{AnandSW25}, such an improvement would also extend to computing global minimum cuts deterministically within the same query complexity, up to polylogarithmic factors.

\paragraph{All-Pairs Min-Cut.} Another natural problem in the cut-query model is to compute the \emph{all-pairs} min-cut of a graph. Simultaneous to our work, Kenneth-Mordoch and Krauthgamer \cite{KK_personal} presented the first non-trivial algorithm for this problem, requiring $\tO(n^{7/4})$ randomized cut queries. Given the recent progress in the sequential setting, where the complexity of all-pairs min-cut has been nearly matched to that of an $s$–$t$ min-cut \cite{Abboud0PS23, AbboudKLPPSYY25, GutenbergKYY25}, it is natural to ask whether a similar equivalence can be achieved in the cut-query model. A weaker, yet still interesting, open question is whether one can improve upon the complexity of $\tO(n^{7/4})$ in either the cut-query or two-player communication setting, which is already an interesting problem in its own right.

% \paragraph{SFM Lower Bounds via Graph Cuts.}
% The current best deterministic lower bound for submodular function minimization (SFM) is due to Chakrabarty, Graur, Jiang, and Sidford~\cite{ChakrabartyGJS22}, who showed that any algorithm must make at least $\Omega(n \log n)$ value queries to solve SFM in the value oracle model. Their construction involves a carefully designed submodular function tailored to their lower bound framework. This raises the intriguing open question of whether any variants of graph cut functions can be used to establish a matching $\Omega(n \log n)$ lower bound. In particular, obtaining an $\Omega(n^{1.01})$ lower bound for SFM, whether via graph cuts or not, would be considered a groundbreaking result.
\printbibliography

\appendix 

\section{Missing Proofs from \Cref{sec:prelim}}
\label{appendix:flow_cover_proof}

\flowcover*

\begin{proof} 
Since $F$ contains no cycles, there exists an topological ordering $\phi: V \rightarrow [n]$ such that for any directed edge $(a,b) \in E$, we have $\phi(a) < \phi(b)$. For any edge $e = (a,b) \in E$, denote $\phi(e) = \phi(b) - \phi(a)$. We shall also write $e$ and $(a,b)$ interchangeably.

We will compute $X = \sum_{e \in E} w_F(e) \cdot \phi(e)$ in two different ways:

\begin{enumerate}
    \item Let the $f$ units of flow of $F$ be $\mathcal{F}_1, \ldots, \mathcal{F}_f$ where each flow is a (disjoint) directed path from $s$ to $t$. Consider any flow $\mathcal{F}_i = (s = v_0, v_1, \ldots, v_d = t)$. Then, we have:

    $$\sum_{e \in \mathcal{F}_i} \phi(e) = \sum_{j = 0}^{d-1} \phi(v_{j+1}) - \phi(v_j) = \phi(v_d) - \phi(v_0) \leq n.$$
    Since $F = \bigcup_{i=1}^f \mathcal{F}_i$, we can compute $X$ by summing over the flows:
\begin{equation}X = \sum_{e \in E} w_F(e) \cdot \phi(e) = \sum_{i = 1}^{f} \sum_{e \in \mathcal{F}_i} \phi(e) \leq \sum_{i = 1}^{f} n = nf.
\label{eq:flowcover1}
\end{equation}

    \item For any vertex $v \in V$, let $N(v)$ denote the set of out-neighbors of $v$. Then, we can compute $X$ by summing over the vertices:
    \begin{align*}
         X = \sum_{e \in E} w_F(e) \cdot \phi(e) & = \sum_{v \in V} \sum_{u \in N(v)} w_F(v,u) \cdot \phi(v,u) \\
         &\geq \sum_{v \in V} \sum_{u \in N(v)} \phi(v,u) \tag{weights are at least 1} \\
         &= \sum_{v \in V} \sum_{u \in N(v)} \left(\phi(u) - \phi(v)\right).
    \end{align*}
    Since $\phi$ is an injection and is a topological ordering, the values $\phi(u) - \phi(v)$ among $u \in N(v)$ are all distinct positive integers. Therefore, we have:
    $$\sum_{u \in N(v)} \left(\phi(u) - \phi(v)\right) \geq \sum_{j = 1}^{|N(v)|} j \geq \frac{1}{2} \cdot |N(v)|^2.$$
    Continuing the bound for $X$, we get:
    \begin{equation}
        X \geq \sum_{v \in V} \frac{1}{2} \cdot |N(v)|^2 \geq \frac{1}{2n} \cdot |E|^2
    \label{eq:flowcover2}
    \end{equation}
    where the last inequality follows from Cauchy-Schwarz and the fact that $|E| = \sum_{v \in V} |N(v)|$.
\end{enumerate}
Combining (\ref{eq:flowcover1}) and (\ref{eq:flowcover2}), we obtain: $$nf \geq X \geq \frac{1}{2n} \cdot |E|^2.$$ Rearranging it completes the proof.
\end{proof}

\approxnucq*

\begin{proof} Let $H$ be an $\eps$-cut sparsifier computed by \Cref{lem:sparsifier_cq}. Then, we have $$\nu(H) = \min_{\text{ $s$-$t$ cut $(S,T)$}} |w_H(S,T)| \geq \min_{\text{ $s$-$t$ cut $(S,T)$}}  \left(1-\eps\right) \cdot |E_G(S,T)| = \left(1-\eps\right) \cdot \nu(G)$$ and
$$\nu(G) = \min_{\text{ $s$-$t$ cut $(S,T)$}}  |E_G(S,T)| \leq \min_{\text{ $s$-$t$ cut $(S,T)$}}  \left(1+\eps\right) \cdot |w_H(S,T)| = \left(1+\eps\right) \cdot \nu(H).$$ 
Hence, $f = \nu(H)$, which can be computed free of costs given $H$, satisfies the desired bounds.
\end{proof}

\ISScq*

\begin{proof} A useful intermediate tool for our analysis is an \emph{Independent-Set} (IS) Query which takes in as inputs disjoint sets $A,B \subseteq V$, and produces an output $\IS(A,B)$ which is ``YES'' or ``NO'' whether there exists an edge of $G$ among $A \times B$; i.e. whether or not $E \cap (A \times B) \ne \emptyset$. Such query can be simulated via 3 cut queries. To see this, we observe that $$|E \cap (A \times B)| = \frac{1}{2} \cdot \left(|E_G(A, V \setminus A)| + |E_G(B, V \setminus B)| - |E_G(A \cup B, V \setminus (A \cup B)) |\right).$$ Thus, $\IS(A,B)$ is equivalent to determining whether or not the RHS is 0. Each cut size can be determined via a single query (via setting $S = A,B,$ and $A \cup B$ respectively.) Therefore, we can obtain $\IS(A,B)$ via only three cut queries.

Our algorithm for answering $\FindEdge(A,B)$ proceeds as follows. We first compute $\IS(A,B)$. If the answer is ``NO'', then output ``NONE'' immediately. This only takes a single IS query to confirm that $E \cap (A \times B) = \emptyset$. Otherwise, we are assured that $E \cap (A \times B) \ne \emptyset$. In this case, we perform the binary search to find an edge $(a,b) \in E \cap (A \times B)$. Specifically, we partition $A$ into $A_1 \cup A_2$ of equal size, and $B$ into $B_1 \cup B_2$ of equal size. We then call $\IS(A_i, B_j)$ for $(i,j) \in \{1,2\}^2$, and recurse on a pair whose answer is ``YES'' until we can identify such edge. This recursion has depth $O(\log{n})$, thus we only need to make $O(\log n)$ IS queries. 
\end{proof}

\learngraphcq*

\begin{proof} Our algorithm proceeds in iteration $i = 1,2,...,z$. In iteration $i$, we wish to recover all edges in $V_i \times V_{> i}$. To do so, we first make a query $\FindEdge(V_i, V_{> i})$. If the answer comes back ``NONE'', we remove $V_i$ from the graph and proceeds to the next iteration. Otherwise, we have learned one edge in $V_i \times V_{> i}$. Then, we remove such edge from the graph, and repeat the same query again until the answer comes back ``NONE'', then proceed to the next iteration. Upon finishing iteration $i = z$, we have learned all edges of $G\langle V_1,...,V_z \rangle$.

As a result, in each iteration $i$, we only require $1 + |E(V_i, V_{>i})|$ find-edge queries. In total, this sums up to $z + |G\langle V_1,...,V_z\rangle|$ find-edge queries which can be simulated via $\tO\left(z + |G\langle V_1,...,V_z\rangle|\right)$ cut queries to $G$. We note here that each find-edge query, although not made with respect to the original graph $G$, can be simulated via $\tO(1)$ cut queries to $G$, as we have an explicit knowledge of the already-deleted edges and vertices. 
\end{proof}

\sparsifiercomm*

\begin{proof} To do so, Alice locally computes $H_A$ which is an $\eps$-cut sparsifier of $G_A$ with $\tO(n/\eps^2)$ edges and maximum weight $\tO(\eps^2 n)$. One way to do so is to exhaustively search over all integrally-weighted graphs with $\tO(n/\eps^2)$ edges and maximum weight $\tO(\eps^2 n)$, and set $H_A$ to be the one that is an $\eps$-cut sparsifier of $G_A$. Such graph must exist due to \Cref{lem:sparsifier_small_weights}, thereby must be found by Alice. Bob then does the same to compute $H_B$ w.r.t. his graph $G_B$. The player then exchange $H_A$ and $H_B$, and output $H_A \cup H_B$ as an $\eps$-cut sparsifier of $G = G_A \cup G_B$. 

It is clear that the amount of communication is bounded by $\tO(n/\eps^2)$ which is the size of $H_A$ and $H_B$. Moreover, for any cut $(S,T)$ we have
\begin{align*} 
w_{H_A \cup H_B}(S,T) & = w_{H_A}(S,T) + w_{H_B}(S,T) \\
& \leq (1+\eps) \cdot |E_{G_A}(S,T)| + (1+\eps) \cdot |E_{G_B}(S,T)| \\
& = (1+\eps) \cdot |E_{G_A \cup G_B}(S,T)|.
\end{align*}
An analogous calculation also shows that $w_{H_A \cup H_B}(S,T) \geq (1-\eps) \cdot |E_{G_A \cup G_B}(S,T)|.$ These bounds together conclude that $H_A \cup H_B$ as an $\eps$-cut sparsifier of $G = G_A \cup G_B$.
\end{proof}

\section{Spectral Sparsification with Optimal Weights}
\label{appendix:optimal_sparsifier}

For any undirected graph $G = (V,E)$ with a nonnegative weight function $w_G$, denote its \emph{Laplacian matrix} by 
$$L_G := \sum_{e \in E} w_G(e) \cdot \chi_e \chi_e^{\top}$$
whereas $\chi_e = e_a - e_b$ for an edge $e = (a,b)$.

We say a graph $H$ is an $\eps$-\emph{spectral sparsifier} of $G$ iff $E(H) \subseteq E(G)$ and for any $v \in \mathbb{R}^n$, we have

$$1-\eps \leq \frac{v^\top L_H v}{v^\top L_G v} \leq 1+\eps.$$

As a corner case, we note that any Laplacian of a connected graph has null space spanned by an all-one vector $\vec{1}$. Therefore, it suffices to consider only vectors $v$ that are orthogonal to $\vec{1}$.

\begin{observation} An $\eps$-spectral sparsifier is also an $\eps$-cut sparsifier.
\end{observation}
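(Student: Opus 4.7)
The plan is to exhibit the cut-sparsification inequalities as a special case of the spectral-sparsification inequalities by plugging in indicator vectors. Concretely, for any $s$-$t$ cut $(S,T)$, let $\chi_S \in \{0,1\}^n$ denote the characteristic vector of $S$ (i.e.\ $\chi_S(v) = 1$ iff $v \in S$). The key computation is that $\chi_S^\top L_G \chi_S$ equals the cut value $w_G(S,T)$: expanding the Laplacian as $L_G = \sum_{e \in E} w_G(e) \, \chi_e \chi_e^\top$ gives
\[
\chi_S^\top L_G \chi_S \;=\; \sum_{e=(a,b) \in E} w_G(e) \bigl(\chi_S(a) - \chi_S(b)\bigr)^2,
\]
and for every edge $e=(a,b)$ the summand $(\chi_S(a) - \chi_S(b))^2$ equals $1$ precisely when $e$ crosses $(S,T)$ and $0$ otherwise. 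Hence $\chi_S^\top L_G \chi_S = w_G(S,T)$, and analogously $\chi_S^\top L_H \chi_S = w_H(S,T)$.

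With this identity in hand, the claim is immediate. Applying the hypothesized spectral inequality $1-\eps \le v^\top L_H v / v^\top L_G v \le 1+\eps$ to the vector $v := \chi_S$ yields
\[
(1-\eps) \, w_G(S,T) \;\le\; w_H(S,T) \;\le\; (1+\eps) \, w_G(S,T),
\]
which is exactly the definition of an $\eps$-cut sparsifier. The containment $E(H) \subseteq E(G)$ required by the cut-sparsifier definition is already assumed in the definition of a spectral sparsifier.

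The one minor subtlety to address is the corner case $v^\top L_G v = 0$, where the ratio is undefined. This happens precisely when $v$ is constant on every connected component of $G$. For $v = \chi_S$, this would mean no edge crosses $(S,T)$, so $w_G(S,T) = 0$; the spectral inequality then forces $w_H(S,T) = 0$ as well (the same vector also lies in the kernel of $L_H$, since $H$ is a subgraph of $G$ with the same vertex set and hence every edge of $H$ has both endpoints in the same component of $G$). Thus the inequality $(1-\eps) w_G(S,T) \le w_H(S,T) \le (1+\eps) w_G(S,T)$ holds trivially in this degenerate case, completing the argument. No step here is genuinely hard; the proof is essentially a one-line substitution, with the only care needed in handling the zero-cut case.
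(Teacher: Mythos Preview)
Your argument is correct and is exactly the standard one-line substitution of cut indicator vectors into the Laplacian quadratic form; the paper itself states this as an observation without proof, so there is no alternative approach to compare against. One tiny nit: you phrase it for ``any $s$--$t$ cut $(S,T)$,'' but the paper's definition of an $\eps$-cut sparsifier quantifies over \emph{all} cuts $(S,T)$, not just $s$--$t$ cuts---your computation is of course unchanged, so just drop the ``$s$--$t$'' qualifier.
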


The goal of this section is to show that any unweighted undirected graph $G$ has an $\eps$-spectral sparsifer with an optimal tradeoff between the number of edges and the maximum weights, as stated in \Cref{lem:sparsifier_small_weights}. We begin by introducing necessary tools, and then proceed to prove the theorem.

\subsection{Preliminaries}

For any matrix $M \in \mathbb{R}^{n \times n}$, denote $\lambda_1(M) \geq \ldots  \geq \lambda_n(M)$ to be its eigenvalues in a decreasing order. We say a matrix $M$ is \emph{positive semidefinite} (or \emph{psd} for short) iff $\lambda_n(M) \geq 0$.

For any connected graph $G = (V,E)$, its Laplacian $L_G$ is symmetric and is psd. Plus, $L_G$ has $n-1$ non-zero eigenvalues, and its kernel is of dimension 1 and spanned by $\vec{1}$. Hence, we can write
$$L_G := \sum_{i = 1}^n \lambda_i u_iu_i^\top$$
where $\lambda_1 \geq \ldots \geq \lambda_{n-1} > 0$ are the top $n-1$ eigenvalues of $L_G$ and $u_1,\ldots,u_{n-1}$ are the corresponding orthonormal eigenvectors. Note that its bottom eigenvalue is $\lambda_n = 0$ with respect to an orthonormal eigenvector is $u_n = \frac{1}{\sqrt{n}} \cdot \vec{1}$.

Denote a Moore-Penrose oseudoinverse of $L_G$ as:
$$L_G^+ = \sum_{i = 1}^{n-1} \frac{1}{\lambda_i} u_iu_i^\top$$
and its ``square root'' as
$$L_G^{+/2} = \sum_{i = 1}^{n-1} \frac{1}{\sqrt{\lambda_i}} u_iu_i^\top.$$ Also denote
$$\Pi = L_G^{+/2}L_G L_G^{+/2} = \sum_{i = 1}^{n-1} u_iu_i^\top $$
to be a projection matrix on to the range of $L_G$.

\begin{observation} The matrices $L_G, L_G^{+}, L_G^{+/2}$, and $\Pi$ have the same set of orthogonal eigenvectors $u_1,\ldots,u_{n-1}$ and $\frac{1}{\sqrt{n}} \cdot \vec{1}$. They also share the same kernel which is $span(\vec{1})$. 
\end{observation}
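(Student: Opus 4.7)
The plan is to exploit the spectral decomposition of $L_G$ already given in the excerpt and simply read off the spectral decompositions of $L_G^+$, $L_G^{+/2}$, and $\Pi$ from it. Since $L_G = \sum_{i=1}^{n-1} \lambda_i u_i u_i^\top$ with orthonormal eigenvectors $u_1, \ldots, u_{n-1}$ (all with strictly positive eigenvalues) and $u_n = \tfrac{1}{\sqrt{n}}\vec{1}$ (with eigenvalue $0$), the matrix $L_G$ is diagonalized in the orthonormal basis $\{u_1, \ldots, u_{n-1}, u_n\}$.

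The first step is to verify that $L_G^+ = \sum_{i=1}^{n-1}\tfrac{1}{\lambda_i} u_i u_i^\top$ and $L_G^{+/2} = \sum_{i=1}^{n-1}\tfrac{1}{\sqrt{\lambda_i}} u_i u_i^\top$ act on each $u_j$ ($j \leq n-1$) by scaling with $\tfrac{1}{\lambda_j}$ and $\tfrac{1}{\sqrt{\lambda_j}}$, respectively, using the orthonormality $u_i^\top u_j = \delta_{ij}$. On $u_n = \tfrac{1}{\sqrt{n}}\vec{1}$, every term in the sum defining $L_G^+$ and $L_G^{+/2}$ vanishes, so $u_n$ is an eigenvector with eigenvalue $0$. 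This proves that $u_1, \ldots, u_{n-1}, u_n$ are a common set of orthonormal eigenvectors for $L_G$, $L_G^+$, and $L_G^{+/2}$.

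Next, for $\Pi = L_G^{+/2} L_G L_G^{+/2}$, I would either compose the spectral decompositions directly (each composition being diagonal in the $\{u_i\}$ basis) or simply verify that the sum $\sum_{i=1}^{n-1} u_i u_i^\top$ given in the excerpt equals this product. The claim $\Pi u_j = u_j$ for $j \leq n-1$ and $\Pi u_n = 0$ follows immediately from orthonormality. This confirms that $\Pi$ also diagonalizes in the same basis.

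Finally, to conclude the kernel statement, observe that since each matrix is diagonal in the orthonormal basis $\{u_1, \ldots, u_n\}$, its kernel is precisely the span of those $u_j$ whose eigenvalue is $0$. For $L_G$, $L_G^+$, $L_G^{+/2}$, and $\Pi$, exactly one eigenvalue vanishes, namely the one associated with $u_n = \tfrac{1}{\sqrt{n}}\vec{1}$, so the kernel of each is $\operatorname{span}(\vec{1})$. There is no real obstacle here; the entire observation is essentially a bookkeeping exercise on the spectral decomposition, and the only mild subtlety is to note that $G$ is implicitly assumed connected so that $L_G$ has exactly one zero eigenvalue (which is the setting stipulated just above the observation).
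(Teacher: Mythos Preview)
Your proposal is correct; the paper states this as an \emph{observation} without proof, and your argument---reading off the spectral decompositions of $L_G^+$, $L_G^{+/2}$, and $\Pi$ from that of $L_G$ and using orthonormality---is exactly the standard verification one would supply. The only subtlety you already flagged (connectedness of $G$ ensuring a one-dimensional kernel) is precisely the standing assumption in the surrounding text.
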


For any edge $e$, denote $R_e := \chi_e^\top L_G^+ \chi_e$. This quantity is also known as \emph{effective resistance.} A trace argument asserts that the weighted effective resistances sums up to exactly $n-1$. This is known as the Forster's Theorem.

\begin{theorem} For a connected graph $G$, we have $\sum_{e \in G} w_eR_e = n-1$.
\label{thm:forster}
\end{theorem}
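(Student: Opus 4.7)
The proof is a short trace computation. The plan is to rewrite the scalar $R_e = \chi_e^\top L_G^+ \chi_e$ as $\operatorname{tr}(L_G^+ \chi_e \chi_e^\top)$ using cyclicity of the trace, then push the weighted sum inside.

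Concretely, I would compute:
\begin{align*}
\sum_{e \in G} w_G(e)\, R_e
&= \sum_{e \in G} w_G(e)\, \chi_e^\top L_G^+ \chi_e \\
&= \sum_{e \in G} w_G(e)\, \operatorname{tr}\!\left(L_G^+ \chi_e \chi_e^\top\right) \\
&= \operatorname{tr}\!\left( L_G^+ \sum_{e \in G} w_G(e)\, \chi_e \chi_e^\top \right) \\
&= \operatorname{tr}(L_G^+ L_G).
\end{align*}

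The last step invokes the definition $L_G = \sum_{e} w_G(e)\, \chi_e \chi_e^\top$. Now I would identify $L_G^+ L_G$ with the projection $\Pi$ onto the range of $L_G$: using the spectral decompositions displayed just above the theorem, $L_G^+ L_G = \sum_{i=1}^{n-1} u_i u_i^\top = \Pi$. Since $G$ is connected, $\ker L_G = \operatorname{span}(\vec 1)$, so $\Pi$ is the orthogonal projection onto an $(n-1)$-dimensional subspace. Hence $\operatorname{tr}(\Pi) = n-1$, which gives $\sum_{e} w_G(e) R_e = n-1$ as claimed.

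There is no real obstacle — the only thing to be a bit careful about is that $L_G^+ L_G$ really does equal the projection $\Pi$ (not the identity), which uses connectivity of $G$ to pin down the kernel to exactly $\operatorname{span}(\vec 1)$, making the range of $L_G$ have dimension exactly $n-1$. Everything else is trace manipulation.
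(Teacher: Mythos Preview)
Your proof is correct and is precisely the ``trace argument'' the paper alludes to just before stating the theorem; the paper does not actually spell out a proof, so your computation fills in exactly what was intended.
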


The following theorem characterizes the top and bottom eigenvalue of a matrix.
\begin{theorem}[Courant-Fischer formula] $\lambda_1(M) = \max_{v \ne \vec{0}} \frac{v^\top M v}{v^{\top}v}$ and $\lambda_n(M) = \min_{v \ne \vec{0}} \frac{v^\top M v}{v^{\top}v}$.
\label{thm:C-F}
\end{theorem}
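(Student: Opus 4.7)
The plan is to use the spectral decomposition of $M$ and show that the Rayleigh quotient is a convex combination of the eigenvalues, so its range is exactly $[\lambda_n(M), \lambda_1(M)]$. Throughout, I am implicitly assuming $M$ is symmetric (which is the only setting used in this paper, since $M$ will be instantiated with Laplacians and related symmetric matrices); the statement itself makes sense in that regime.

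First I would write the spectral decomposition $M = \sum_{i=1}^n \lambda_i(M) \, u_i u_i^\top$, where $u_1,\dots,u_n$ is an orthonormal basis of eigenvectors corresponding to the ordered eigenvalues $\lambda_1(M) \geq \cdots \geq \lambda_n(M)$. Next, for any nonzero $v \in \mathbb{R}^n$, I would expand $v$ in this eigenbasis by writing $c_i = u_i^\top v$ so that $v = \sum_{i=1}^n c_i u_i$. Using orthonormality, this gives the two identities $v^\top v = \sum_{i=1}^n c_i^2$ and $v^\top M v = \sum_{i=1}^n \lambda_i(M) c_i^2$.

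The key step is then to observe that
\[
\frac{v^\top M v}{v^\top v} \;=\; \sum_{i=1}^n \lambda_i(M) \cdot \frac{c_i^2}{\sum_{j=1}^n c_j^2},
\]
which is a convex combination of the eigenvalues $\lambda_1(M),\dots,\lambda_n(M)$. Therefore the ratio lies in $[\lambda_n(M), \lambda_1(M)]$. For the upper bound, plugging $v = u_1$ (so $c_1 = 1$, all other $c_i = 0$) gives ratio exactly $\lambda_1(M)$; for the lower bound, plugging $v = u_n$ gives ratio exactly $\lambda_n(M)$. This establishes both equalities.

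I do not anticipate a significant obstacle here, as this is the textbook proof of the Courant--Fischer formula specialized to the extremal eigenvalues (the general min-max formulation for intermediate $\lambda_k$ requires an additional argument about subspaces, but the extremal case requires only the convex-combination observation). The only delicate point is the implicit symmetry assumption on $M$: if the paper wishes to apply this to non-symmetric matrices, the statement would need to be corrected, but every use of the theorem in the subsequent sparsification argument is on symmetric positive semidefinite matrices (Laplacians and related projections), so no issue arises in practice.
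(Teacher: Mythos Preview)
Your proof is correct and is the standard textbook argument via the spectral decomposition. The paper itself does not provide a proof of this theorem; it is stated as a known classical result and used as a black box, so there is nothing to compare against.
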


\begin{corollary} For any vector $v \in \mathbb{R}^n$, an $n \times n$ matrix $vv^\top$ is psd. Moreover, we have $\lambda_1(vv^{\top}) = \|v\|_2^2$. 
\label{cor:max_eval}
\end{corollary}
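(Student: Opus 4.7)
The proof is elementary linear algebra, so the plan is to verify the two assertions directly, using \Cref{thm:C-F} for the eigenvalue claim.

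First, I would establish that $vv^\top$ is symmetric by noting $(vv^\top)^\top = vv^\top$, and then check positive semidefiniteness by a direct quadratic form computation: for any $x \in \mathbb{R}^n$, we have $x^\top (vv^\top) x = (v^\top x)(v^\top x) = (v^\top x)^2 \geq 0$. Hence $vv^\top$ is psd.

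Next, for the eigenvalue identity, I would handle the trivial case $v = \vec{0}$ separately, where $vv^\top = 0$ and $\|v\|_2^2 = 0$, so $\lambda_1(vv^\top) = 0 = \|v\|_2^2$. For $v \ne \vec{0}$, I would apply \Cref{thm:C-F} to obtain
\[
\lambda_1(vv^\top) \;=\; \max_{x \ne \vec{0}} \frac{x^\top (vv^\top) x}{x^\top x} \;=\; \max_{x \ne \vec{0}} \frac{(v^\top x)^2}{\|x\|_2^2}.
\]
By the Cauchy–Schwarz inequality, $(v^\top x)^2 \leq \|v\|_2^2 \cdot \|x\|_2^2$, so the ratio above is at most $\|v\|_2^2$. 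Setting $x := v$ achieves equality, giving $\lambda_1(vv^\top) = \|v\|_2^2$ as desired.

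There is no real obstacle here; the only point requiring a sentence of care is the $v = \vec{0}$ edge case, since \Cref{thm:C-F} is stated over nonzero $x$. Everything else is a one-line Cauchy–Schwarz argument plugged into the Courant–Fischer characterization already stated in the excerpt.
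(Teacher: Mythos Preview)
Your proof is correct and follows essentially the same approach as the paper: verify psd via $x^\top(vv^\top)x = (v^\top x)^2 \ge 0$, then apply the Courant--Fischer formula together with Cauchy--Schwarz, with equality at $x = v$. The only differences are cosmetic---you explicitly note symmetry and handle the $v = \vec{0}$ case, which the paper omits.
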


\begin{proof} For any $x \in \mathbb{R}^n$, we have $x^\top (vv^\top) x = \|x^\top v\|_2^2 \geq 0$. Thus, $vv^\top$ is psd. Furthermore, for $x \ne 0^n$, we have 
$$\frac{x^\top (vv^\top) x}{\|x\|_2^2} = \frac{\|x^\top v\|_2^2}{\|x\|_2^2} = \frac{\langle x, v\rangle^2}{\|x\|_2^2} \leq \|v\|^2$$
due to the Cauchy-Schwartz inequality. Moreover, the equality is attained when $x = v$. This implies $\lambda_1(vv^{\top}) = \|v\|_2^2$ as wished.
\end{proof}

We will later need the following matrix concentration bounds which is a simple extension of the the matrix Chernoff bounds \cite{Tropp11}. 

\begin{theorem}[Extended Matrix Chernoff Bounds] Consider a finite sequence $\{X_k\}$ of independent random, psd symmetric matrices of dimension $n$. Let $R$ be a positive scalar such that each of the random matrix satisfies $\lambda_1(X_k) \leq R$ almost surely. Also let $\{\widetilde{X}_l\}$ be a finite of (deterministic) symmetrix psd matrices. Denote $\mathsf{X} = \sum_{i \in [k]} X_i + \sum_{j \in [l]} \widetilde{X}_j$.
Finally, define:
$$\mu_\textnormal{min} := \lambda_n( \E[\mathsf{X}]) \hspace{5mm} \text{ and } \hspace{5mm}  \mu_\textnormal{max} := \lambda_1( \E[\mathsf{X}]).$$ Then for any $\delta \in (0,1)$, we have:
$$\Pr \left[\lambda_1(\mathsf{X}) \geq (1+\delta) \mu_{\textnormal{max}}\right] \leq n \cdot \exp \left( -\frac{\delta^2 \mu_{\textnormal{max}}}{3R}\right)$$
and
$$\Pr \left[\lambda_n(\mathsf{X}) \leq (1-\delta) \mu_\textnormal{min}\right] \leq n \cdot \exp \left( -\frac{\delta^2 \mu_\textnormal{min}}{2R}\right).$$
\label{thm:extended_matrix_chernoff} 
\end{theorem}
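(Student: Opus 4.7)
The plan is to mimic Tropp's standard proof of matrix Chernoff via the Laplace transform method, and to show that the deterministic matrices $\widetilde{X}_j$ can be absorbed into the analysis essentially for free, thanks to the monotonicity of the key scalar functions. Let $S = \sum_{j \in [l]} \widetilde{X}_j$ (deterministic psd) and $Y = \sum_{k} X_k$ (random psd), so that $\mathsf{X} = Y + S$ and $\E[\mathsf{X}] = \E[Y] + S$. For the upper tail, apply the matrix Markov inequality: for any $\theta > 0$,
\[
\Pr\!\left[\lambda_1(\mathsf{X}) \geq t\right] \;\leq\; e^{-\theta t}\,\E\!\left[\operatorname{tr}\exp(\theta S + \theta Y)\right].
\]

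Next I would invoke Lieb's concavity inequality in the standard way: since $\theta S$ is a fixed Hermitian matrix, the map $A \mapsto \operatorname{tr}\exp(\theta S + A)$ is concave on Hermitian matrices, so Jensen's inequality (iterated over the independent $X_k$'s) gives
\[
\E\!\left[\operatorname{tr}\exp\!\bigl(\theta S + \textstyle\sum_k \theta X_k\bigr)\right]
\;\leq\; \operatorname{tr}\exp\!\Bigl(\theta S + \textstyle\sum_k \log \E[\exp(\theta X_k)]\Bigr).
\]
Then I apply Tropp's scalar-based operator bound: for any psd random matrix $X_k$ with $\lambda_1(X_k)\leq R$,
\[
\log \E[\exp(\theta X_k)] \;\preceq\; g(\theta)\,\E[X_k], \qquad g(\theta) = \frac{e^{\theta R}-1}{R}.
\]

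The key new observation comes here. Since $g(\theta) \geq \theta$ for all $\theta\geq 0$ and $S$ is psd, we have $\theta S \preceq g(\theta) S$, hence
\[
\theta S + g(\theta) \E[Y] \;\preceq\; g(\theta)\bigl(S + \E[Y]\bigr) \;=\; g(\theta)\,\E[\mathsf{X}].
\]
Combining this with monotonicity of $A\mapsto \operatorname{tr}\exp(A)$ under the psd order and the elementary bound $\operatorname{tr}\exp(M) \leq n\cdot e^{\lambda_1(M)}$, I obtain
\[
\Pr\!\left[\lambda_1(\mathsf{X}) \geq t\right] \;\leq\; n\cdot \exp\!\bigl(-\theta t + g(\theta)\mu_{\max}\bigr).
\]
Setting $t = (1+\delta)\mu_{\max}$ and optimizing at $\theta = \ln(1+\delta)/R$ (using $-(1+\delta)\ln(1+\delta)+\delta \leq -\delta^2/3$ for $\delta\in[0,1]$) yields the claimed upper-tail bound. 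The lower tail is symmetric: use $\Pr[\lambda_n(\mathsf{X})\leq t]\leq e^{\theta t}\,\E[\operatorname{tr}\exp(-\theta \mathsf{X})]$, the analogous operator bound $\log\E[\exp(-\theta X_k)] \preceq -\tilde g(\theta)\E[X_k]$ with $\tilde g(\theta) = (1-e^{-\theta R})/R$, and now use that $\tilde g(\theta)\leq \theta$ so that $-\theta S \preceq -\tilde g(\theta) S$, giving $-\theta S - \tilde g(\theta)\E[Y] \preceq -\tilde g(\theta)\E[\mathsf{X}]$. Optimization at $\theta = -\ln(1-\delta)/R$ and the inequality $(1-\delta)\ln(1-\delta)+\delta \geq \delta^2/2$ closes the argument.

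\textbf{Main obstacle.} The only nontrivial point is the coupling between the deterministic and random parts: a priori, the Lieb step produces a mixed exponent $\theta S + g(\theta)\E[Y]$ in which $S$ and $\E[Y]$ carry different coefficients, so one cannot immediately recover a bound in terms of $\mu_{\max}=\lambda_1(S+\E[Y])$. The crucial (and easy to miss) observation is that the \emph{same} scalar inequality $g(\theta)\geq\theta$ that underlies Tropp's original proof lets us inflate the coefficient of $S$ from $\theta$ up to $g(\theta)$ without breaking the psd order, thereby reducing to a single bound depending only on $\E[\mathsf{X}]$. Everything else is standard scalar Chernoff optimization, carried out identically to Tropp's original argument.
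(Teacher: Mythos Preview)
The paper does not actually prove this theorem: it is stated with a citation to Tropp and the remark that it is ``a simple extension of the matrix Chernoff bounds,'' followed only by comments on how it will be applied. There is therefore no paper proof to compare against.

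Your argument is correct and is exactly the natural way to justify this extension. The only nonstandard step is indeed the one you isolate: after Lieb's inequality the exponent is $\theta S + g(\theta)\,\E[Y]$, and you need to pass to $g(\theta)\,\E[\mathsf{X}]$. Your use of $g(\theta)\geq \theta$ (and $\tilde g(\theta)\leq \theta$ for the lower tail) together with $S\succeq 0$ and the operator monotonicity of $\operatorname{tr}\exp(\cdot)$ handles this cleanly. The remaining scalar optimization is the standard Chernoff calculation, and your numerical inequalities for $\delta\in(0,1)$ give the stated constants $3$ and $2$ in the denominators. In short, your proof is sound and fills in precisely the detail the paper leaves implicit.
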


We make a few remarks regarding the theorem. First of all, unlike the original statement of \cite{Tropp11}, here we include a set of deterministic matrices $\{\widetilde{X}_l\}$. Secondly, we will eventually apply such theorem with a set of matrices that yields $\E[\mathsf{X}] = \Pi$ whose eigenvalues are $1$ with multiplicity $n-1$ and $0$ with multiplicity $1$ (w.r.t. eigenvector $\vec{1}$). Since it suffices to only work with vectors orthogonal to $\vec{1}$, we can treat $\Pi$ as having all eigenvalues $1$.\footnote{To see this, let $\Pi' := \Pi + \frac{1}{n} \cdot \vec{1}\vec{1}^\top$ so that it has the same set of eigenvectors as $\Pi$ but now all eigenvalues are $1$s. Then, for any vector $v \perp \vec{1}$, we have $v^\top \Pi' v = v^\top \Pi v$.} These assumptions we made do not affect the integrity of such theorem nor of our proofs.

\subsection{Proof of \Cref{lem:sparsifier_small_weights}}

We first show a main lemma which will be used recursively.

\begin{lemma} There exists a universal constant $C > 0$ such that the following is true. Let $G$ be a weighted undirected graph with $n$ vertices, $m$ edges, and edge weights bounded by $W$. Let $\lambda$ be such that $\omega(n^{-\frac{1}{2}}) \leq \lambda \leq o(1)$. There is an $\sqrt{\frac{Cn \log n}{\lambda m}}-$spectral sparsifier of $G$ with $m \cdot \left(\frac{1}{2} \pm \lambda)\right)$ edges and edge weights bounded by $2W$. 
\label{clm:one_shot_sparsifier}
\end{lemma}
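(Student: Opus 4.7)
The plan is to construct $H$ by independent random halving: for each edge $e$ of $G$, flip an independent fair coin $X_e \in \{0,1\}$ and include $e$ in $H$ with the doubled weight $2 w_G(e)$ whenever $X_e = 1$. This trivially keeps every weight in $H$ bounded by $2W$ and gives $\E[L_H] = L_G$, so the task reduces to controlling the spectral deviation $\|L_G^{+/2} L_H L_G^{+/2} - \Pi\|_{\mathrm{op}}$ and the edge count $|E(H)|$; existence of a good $H$ will then follow from the probabilistic method by taking any realization in the joint high-probability success event.

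For the edge count, $|E(H)| = \sum_e X_e$ is a sum of $m$ i.i.d.\ Bernoulli$(1/2)$ variables, so a standard scalar Chernoff bound yields $|E(H)| \in m(1/2 \pm \lambda)$ except with probability at most $2\exp(-\Omega(\lambda^2 m))$; under the hypothesis $\lambda = \omega(n^{-1/2})$ combined with $m \ge n-1$ (connectivity of $G$), this is polynomially small in $n$.

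For the spectral error I would define $Y_e := 2 X_e w_G(e) \cdot L_G^{+/2} \chi_e \chi_e^\top L_G^{+/2}$, so that $\sum_e Y_e = L_G^{+/2} L_H L_G^{+/2}$ and $\E[\sum_e Y_e] = \Pi$. By \Cref{cor:max_eval} each $Y_e$ is psd with $\lambda_1(Y_e) \le 2 w_G(e) R_e$. The main obstacle is that this upper bound could be as large as $2$ for individual ``bridge-like'' edges, preventing a direct application of \Cref{thm:extended_matrix_chernoff}. I plan to circumvent this with a Forster-based preprocessing step: setting the target error $\epsilon := \sqrt{C n \log n / (\lambda m)}$ and a threshold $\tau := \epsilon^2 / (c \log n)$ for a sufficiently large constant $c$, I would call an edge \emph{heavy} if $w_G(e) R_e > \tau$. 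By \Cref{thm:forster}, the number of heavy edges is at most $(n-1)/\tau = O(\lambda m)$ once $c$ is chosen proportional to $C$. Preserve all heavy edges deterministically, contributing a fixed psd matrix $\widetilde{X}$ to the sum in \Cref{thm:extended_matrix_chernoff}, and apply the random halving only to the light edges, for which $\lambda_1(Y_e) \le 2\tau$. \Cref{thm:extended_matrix_chernoff} applied with $R = 2\tau$ and $\mu_{\mathrm{max}} = \mu_{\mathrm{min}} = 1$ (working on the range of $L_G$, as justified by the paper's footnote) then gives
\[
\Pr\bigl[\|L_G^{+/2} L_H L_G^{+/2} - \Pi\|_{\mathrm{op}} > \epsilon\bigr] \le 2n \exp\bigl(-\epsilon^2 / (6\tau)\bigr) = 2 n^{\,1 - c/6},
\]
which is polynomially small for $c$ sufficiently large, yielding the desired $\epsilon$-spectral sparsifier bound.

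Finally I would reconcile the edge count after this modification: $|E(H)|$ now equals $|E_{\mathrm{heavy}}| + \mathrm{Bin}(|E_{\mathrm{light}}|, 1/2)$, which by scalar Chernoff concentrates around $m/2 + |E_{\mathrm{heavy}}|/2 = m/2 \pm O(\lambda m)$, so absorbing constants into $C$ leaves $|E(H)| \in m(1/2 \pm \lambda)$ with high probability. A union bound over the matrix-Chernoff event and the scalar-Chernoff event produces a realization satisfying all three requirements simultaneously, which is the $H$ promised by the lemma. The main difficulty in this argument is precisely the heavy-edge obstruction that blocks a naive matrix-Chernoff analysis; the remaining work is careful bookkeeping to align Forster's count, the matrix-Chernoff tolerance, and the edge-count Chernoff so that all three parameters line up with $\lambda$ and $\epsilon$ as claimed.
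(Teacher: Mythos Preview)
Your proposal is correct and follows essentially the same approach as the paper's proof: both split edges into heavy and light based on the leverage score $w_e R_e$ (your threshold $\tau = Cn/(c\lambda m)$ matches the paper's $n/(\lambda m)$ up to constants), keep heavy edges deterministically, randomly halve the light edges with doubled weight, bound the number of heavy edges via Forster, and apply the extended matrix Chernoff bound with $R = 2\tau$ to control the spectral error. The only cosmetic differences are that the paper uses Chebyshev rather than Chernoff for the scalar edge-count concentration, and it fixes the threshold as $n/(\lambda m)$ directly so that $|E_{\mathrm{high}}| \le \lambda m$ without needing your constant-absorption step.
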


\begin{proof} Consider the following random procedure.
\begin{enumerate}
    % \item For each $e \in E$, denote $R_e := \chi_e^\top  L^+_G \chi_e$. This quantity is known as the \emph{effective resistance}.
    \item Let $E_{\text{low}} \subseteq E$ be a set of edges for which $w_eR_e \leq \frac{n}{\lambda m}$. Also let $E_{\text{high}} = E \setminus E_{\text{low}} $. Then due to \Cref{thm:forster}, we have $|E_{\text{low}}| \geq (1-\lambda)m$.
    \item For each $e \in E_{\text{low}}$, draw $z_e \in \{0, 1\}$ uniformly at random.
    \item Let $Y$ be a graph consisting of $e \in E_{\text{low}}$ such that $z_e = 1$ and double the weight to $2w_e$.
    \item Output $H = Y \cup E_{\text{high}}$.
\end{enumerate}

We claim that the output graph $H$ is the desired sparsifier with non-zero probability. First of all, let's count number of edges in $H$. This is $$|E_{\text{high}}| + B(|E_{\text{low}}|, \frac{1}{2}) = m - |E_{\text{low}}| + B(|E_{\text{low}}|, \frac{1}{2}) \equiv m - B(|E_{\text{low}}|, \frac{1}{2}).$$
Via Chebychev inequality, we have the followings with constant probability.
$$B(|E_{\text{low}}|, \frac{1}{2}) \leq \frac{|E_{\text{low}}|}{2} + 4\sqrt{|E_{\text{low}}|} \leq \frac{m}{2} + 4\sqrt{m} \leq m \cdot (\frac{1}{2}+\lambda)$$
and 
$$B(|E_{\text{low}}|, \frac{1}{2}) \geq \frac{|E_{\text{low}}|}{2} - 4\sqrt{|E_{\text{low}}|} \geq \frac{(1-\lambda)m}{2} - 4\sqrt{m} \geq m \cdot (\frac{1}{2}-\lambda).$$
Hence, with constant probability, the number of edges of $H$ lies within $m \cdot \left(\frac{1}{2} \pm \lambda \right)$.

Next, we will analyze the approximation error. For any $e \in E_{\text{low}}$, denote a random vector $y_e := \sqrt{2w_ez_e} \cdot L^{+/2}_G\chi_e$ and an $n \times n$ random matrix $Y_e := y_ey_e^\top$. Similarly, for any $e \in E_{\text{high}}$, denote a vector $y_e := \sqrt{w_e} \cdot L^{+/2}_G\chi_e$ and an $n \times n$ matrix $Y_e := y_ey_e^\top $. Notably, randomness is \emph{not} involved in those $Y_e$ of  $e \in E_{\text{high}}$ and every $Y_e$ is psd and symmetric. 

By basic calculations, we can show that $ \sum_{e \in E} Y_e = L^{+/2}_G L_H L^{+/2}_G$, $\E\left(\sum_{e \in E} Y_e\right) = \Pi$, and $\lambda_1(Y_e) \leq \frac{2n}{\lambda m}$ for every $e \in E_{\text{low}}$. We defer these calculations to via \Cref{fact:sum_ys} and \Cref{fact:bound_eig_Y} which will come up shortly. Applying the extended matrix Chernoff bound (\Cref{thm:extended_matrix_chernoff}), the random matrix $L_G^{+/2} L_H L_G^{+/2}$ has all non-zero eigenvalues between $1 \pm \eps$ except probability $2n \cdot \exp(-\frac{\eps^2 \lambda m}{6n}).$
By setting $\eps = \sqrt{\frac{Cn \log n}{\lambda m }}$ for suitably large constant $C$, such error probability is at most $n^{-10}$. 

When it is the case that $L_G^{+/2} L_H L_G^{+/2}$ has all eigenvalues lied between $1 \pm \eps$, we claim that $H$ is a spectral sparsifier of $G$. To see this, take any vector $v \in \mathbb{R}^n$ and let $u = L_G^{+/2}v$. If $u = \vec{0}$, it means $v = \vec{1}$ which appears to be the corner case.\footnote{$v = \vec{1}$ is in the kernel of both $L_G$ and $L_H$. As a result, we have $v^\top  L_G  v = v^\top  L_H  v = 0$ which satisfies the bounds $1-\eps \leq \frac{v^\top  L_H  v}{v^\top  L_G v} \leq 1+\eps$ if we were to abuse that $\frac{0}{0} = 1$.} Otherwise, we have:

$$\frac{v^\top  L_H  v}{v^\top  L_G v} = \frac{u^\top  L_G^{+/2} L_H L_G^{+/2} u}{u^\top  u} \leq \lambda_1(L_G^{+/2} L_H L_G^{+/2}) \leq 1+\eps$$
and 
$$\frac{v^\top  L_H  v}{v^\top  L_G v} = \frac{u^\top  L_G^{+/2} L_H L_G^{+/2} u}{u^\top  u} \geq \lambda_n(L_G^{+/2} L_H L_G^{+/2}) \geq 1-\eps$$
where both final inequalities follows the Courant-Fischer formula. 

By the union bound, both conditioning events (bounds on the number of edges and on the eigenvalues) occur with non-zero probability. This means there exists $H$ with $m \cdot \left(\frac{1}{2} \pm \lambda\right)$ edges that is a  $\sqrt{\frac{Cn \log n}{\lambda m}}$-spectral sparsifier of $G$. Finally, each edge of $H$ has its weight at most doubled up from $G$. Thus, the maximum weight of $H$ is at most $2W$.
\end{proof}

\begin{fact} $\sum_{e \in E} Y_e = L^{+/2}_G L_H L^{+/2}_G$ and $\E \left(\sum_{e \in E} Y_e \right)= \Pi$.
\label{fact:sum_ys}
\end{fact}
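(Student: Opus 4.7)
The plan is a direct bookkeeping calculation from the definitions. First I would expand $Y_e = y_e y_e^\top$ in both cases: for $e \in E_{\text{low}}$ this gives $Y_e = 2 w_e z_e \cdot L_G^{+/2} \chi_e \chi_e^\top L_G^{+/2}$, while for $e \in E_{\text{high}}$ it gives $Y_e = w_e \cdot L_G^{+/2} \chi_e \chi_e^\top L_G^{+/2}$. In both cases each summand is of the form $L_G^{+/2} (c_e \chi_e \chi_e^\top) L_G^{+/2}$ where $c_e$ is the effective weight assigned to $e$ in the sparsifier $H$ (namely $2w_e$ whenever the edge survives the coin flip, $w_e$ for preserved high-resistance edges, and $0$ for the killed low-resistance edges).

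Next I would factor the common $L_G^{+/2}$ out on both sides and recognize the middle sum as the Laplacian of $H$. Concretely, since $H$ by construction has weight $2w_e \cdot z_e$ on each $e \in E_{\text{low}}$ and weight $w_e$ on each $e \in E_{\text{high}}$, we have
\[
L_H \;=\; \sum_{e \in E_{\text{low}}} 2w_e z_e\, \chi_e \chi_e^\top \;+\; \sum_{e \in E_{\text{high}}} w_e\, \chi_e \chi_e^\top.
\]
Sandwiching this identity between $L_G^{+/2}$ on either side immediately yields $\sum_{e \in E} Y_e = L_G^{+/2} L_H L_G^{+/2}$, proving the first equation.

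For the second equation, I would take expectations edge-by-edge. The only randomness comes from the coins $z_e$ for $e \in E_{\text{low}}$, which are uniform on $\{0,1\}$ with $\E[z_e] = 1/2$. Hence $\E[2 w_e z_e] = w_e$, so in expectation each edge of $E_{\text{low}}$ contributes the same matrix $w_e L_G^{+/2} \chi_e \chi_e^\top L_G^{+/2}$ that a high-resistance edge deterministically contributes. Summing over all edges gives
\[
\E\!\left[\sum_{e \in E} Y_e\right] \;=\; L_G^{+/2}\!\left(\sum_{e \in E} w_e\, \chi_e \chi_e^\top\right)\! L_G^{+/2} \;=\; L_G^{+/2} L_G L_G^{+/2} \;=\; \Pi,
\]
where the last equality is by the definition of $\Pi$ given in the preliminaries.

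There is no real obstacle here; the statement is essentially a definitional unpacking. The only small care needed is to match the random weighting scheme of $H$ (double-weighting the surviving $E_{\text{low}}$ edges) with the factor of $2$ that appears inside $y_e$ for those edges, which is exactly what makes the expectation come out to $\Pi$ rather than to a scaled projection. No concentration or spectral machinery is invoked at this step, as those are reserved for the application of the extended matrix Chernoff bound in the surrounding proof of Lemma~\ref{clm:one_shot_sparsifier}.
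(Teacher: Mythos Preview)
Your proposal is correct and follows essentially the same approach as the paper: expand each $Y_e$, factor out $L_G^{+/2}$ on both sides to recognize $L_H$ in the middle, and then take expectations using $\E[z_e]=\tfrac12$ (equivalently, $\E[L_H]=L_G$) to obtain $\Pi$. There is no substantive difference between your argument and the paper's.
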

\begin{proof} Consider the following derivation:
    \begin{align*}
    \sum_{e \in E} Y_e & = \sum_{e \in E_{\text{low}}} Y_e + \sum_{e \in E_{\text{high}}} Y_e \\
    & = \sum_{e \in E_{\text{low}}} L^{+/2}_G \left(2w_ez_e \cdot \chi_e \chi_e^\top \right) L^{+/2}_G + \sum_{e \in E_{\text{high}}} L^{+/2}_G \left(w_e \chi_e \chi_e^\top \right) L^{+/2}_G\\
    & = L^{+/2}_G \left(\sum_{e \in E_{\text{low}}} 2w_ez_e \cdot  \chi_e \chi_e^\top  + \sum_{e \in E_{\text{high}}} w_e \chi_e \chi_e^\top  \right) L^{+/2}_G  \\
    & = L^{+/2}_G L_H L^{+/2}_G.
\end{align*}
This proves the first equality.

For the second equality, notice that $\E(L_H) = L_G$ since each $e \in E_{\text{low}}$ has equal probabilities of its weight being doubled or zeroed. As a result, we have:
$$\E\left(\sum_{e \in E} Y_e\right) = \E\left(L^{+/2}_G L_H L^{+/2}_G\right) = L^{+/2}_G \E\left(L_H\right) L^{+/2}_G = L^{+/2}_G L_G L^{+/2}_G = \Pi$$
as wished. \end{proof}

\begin{fact}
    For every $e \in E_{\textnormal{low}}$, we have $\lambda_1(Y_e) \leq \frac{2n}{\lambda m}$.
\label{fact:bound_eig_Y}
\end{fact}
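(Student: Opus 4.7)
The plan is to reduce the eigenvalue bound to a squared-norm computation, and then unfold the definition of $y_e$ to recover an effective-resistance expression that is directly controlled by the definition of $E_{\text{low}}$.

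First, since $Y_e = y_e y_e^\top$ is a rank-one psd matrix, I would invoke \Cref{cor:max_eval} to conclude $\lambda_1(Y_e) = \|y_e\|_2^2$. This reduces the task to bounding $\|y_e\|_2^2$, where $y_e = \sqrt{2 w_e z_e}\cdot L_G^{+/2}\chi_e$ by definition.

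Next I would expand the squared norm by writing $\|y_e\|_2^2 = y_e^\top y_e = 2 w_e z_e \cdot \chi_e^\top L_G^{+/2} L_G^{+/2}\chi_e = 2 w_e z_e \cdot \chi_e^\top L_G^{+} \chi_e$, using the fact that $L_G^{+/2}$ is symmetric and satisfies $L_G^{+/2} L_G^{+/2} = L_G^{+}$ (both matrices share the same orthonormal eigenbasis and eigenvalues multiply). The expression $\chi_e^\top L_G^{+} \chi_e$ is precisely the effective resistance $R_e$ by definition, so $\|y_e\|_2^2 = 2 w_e z_e R_e$.

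Finally, I would use $z_e \in \{0,1\}$ to drop the $z_e$ factor, giving $\|y_e\|_2^2 \leq 2 w_e R_e$, and then apply the defining property of $E_{\text{low}}$, namely $w_e R_e \leq \frac{n}{\lambda m}$, to conclude $\lambda_1(Y_e) \leq \frac{2n}{\lambda m}$ as required. There is no real obstacle here; the proof is essentially a direct unfolding of definitions, and the only thing to be careful about is the identification $L_G^{+/2} L_G^{+/2} = L_G^{+}$, which follows from the spectral decomposition stated earlier in the appendix.
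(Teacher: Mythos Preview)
Your proposal is correct and matches the paper's proof essentially step for step: both invoke \Cref{cor:max_eval} to reduce to $\|y_e\|_2^2$, expand this as $2w_ez_e\,\chi_e^\top L_G^+\chi_e = 2w_ez_e R_e$, drop $z_e\in\{0,1\}$, and apply the defining bound $w_eR_e \le \tfrac{n}{\lambda m}$ for $e\in E_{\text{low}}$.
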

\begin{proof}
Following \Cref{cor:max_eval}, we have
\begin{align*}
   \lambda_1(Y_e) = \|y_e\|^2 & = 2w_ez_e \cdot \| L^{+/2}_G\chi_e \|_2^2 \leq 2w_e \cdot \chi_e^\top  L_G^+ \chi_e = 2w_eR_e \leq \frac{2n}{\lambda m}
\end{align*}
where the first inequality is due to the fact that $z_e \in \{0, 1\}$ and the second ineqality is due to the definition of $E_{\text{low}}$.
\end{proof}

Now we are ready to prove an analogue of \Cref{lem:sparsifier_small_weights} for spectral sparsifier which is inherently a stronger result.

\begin{theorem} Let $G$ be an unweighted undirected connected graph with $n$ vertices and average degree $d$. Then, for any $\eps \in (0,1)$, there is an $\eps$-spectral sparsifier of $G$ with $O\left(\frac{n \log^2{n}}{\eps^2}\right)$ edges and edge weights bounded by $O\left(\frac{\eps^2d}{\log^2{n}}\right)$.
\end{theorem}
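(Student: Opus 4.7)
The plan is to iterate the one-shot sparsifier Lemma from the previous subsection $T$ times, where each round roughly halves the edge count while doubling the maximum weight and introducing a small spectral error. I would set the iteration parameter $\lambda = \Theta(1/\log n)$ throughout, which lies in the admissible range $\omega(n^{-1/2}) \leq \lambda \leq o(1)$ required by the one-shot lemma. Writing $m_0 = nd/2$ for the starting edge count, after $i$ rounds the edge count $m_i$ is of order $m_0 \cdot 2^{-i}$ (the $\lambda$-slack contributes only a constant multiplicative factor across $O(\log n)$ rounds, since $\prod_{i=1}^{T}(1 \pm 2/\log n) = \Theta(1)$ when $T = O(\log n)$). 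I would stop at the first index $T$ with $m_T \leq C' \cdot n \log^2 n / \varepsilon^2$ for a sufficiently large constant $C'$.

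\textbf{Tracking the three quantities.} With the above stopping rule, $T = O\!\bigl(\log(d\varepsilon^2/\log^2 n)\bigr)$, so the maximum edge weight, which doubles each round starting from $1$, reaches $2^T = O(d\varepsilon^2/\log^2 n)$, exactly the bound claimed. The edge count at termination is $O(n\log^2 n/\varepsilon^2)$ by construction. For the spectral error, recall that spectral sparsification composes multiplicatively: if $H_{i+1}$ is an $\varepsilon_i$-spectral sparsifier of $H_i$, then $H_T$ is a $\Bigl(\prod_{i=0}^{T-1}(1+\varepsilon_i) - 1\Bigr)$-spectral sparsifier of $G$, which is at most $2\sum_i \varepsilon_i$ whenever that sum is small. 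By the one-shot lemma, $\varepsilon_i = \sqrt{Cn\log n/(\lambda m_i)}$, and since $m_i = \Theta(m_0 \cdot 2^{-i})$ this is a geometric series dominated by the last term:
\[
\sum_{i=0}^{T-1} \varepsilon_i \;=\; O\!\left(\sqrt{\frac{n \log n}{\lambda \cdot m_T}}\right) \;=\; O\!\left(\sqrt{\frac{n \log^2 n}{m_T}}\right),
\]
using $\lambda = \Theta(1/\log n)$. Plugging in $m_T \geq \Omega(n\log^2 n/\varepsilon^2)$ with a suitable constant yields total error $O(\varepsilon)$, which can be absorbed into $\varepsilon$ by rescaling constants.

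\textbf{Main obstacle.} The principal delicate point is to simultaneously balance the three competing demands: (i) enough rounds $T$ so the edge count reaches the target; (ii) not so many rounds that the max weight $2^T$ blows past $O(\varepsilon^2 d/\log^2 n)$; and (iii) the per-round errors $\varepsilon_i$, which grow as $m_i$ shrinks, must still sum to $O(\varepsilon)$. The choice $\lambda = \Theta(1/\log n)$ threads this needle: it keeps the halving essentially on schedule across all $T = O(\log n)$ rounds, and it gives just enough room in the one-shot error bound to make the geometric sum match the target edge count. A minor bookkeeping issue is handling the case where $d$ is already small enough that the trivial sparsifier $H = G$ works (i.e., when the claimed edge count exceeds $m_0$), which is handled by returning $G$ itself with weights $1 = O(\varepsilon^2 d/\log^2 n)$. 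Finally, since every intermediate $H_i$ has integer multiples of $1$ as weights (after successive doublings, weights are in $\{0, 2^i\}$ for edges that survive, so the overall sparsifier has weights in a bounded range of powers of two), the conclusion follows.
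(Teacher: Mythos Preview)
Your proposal is correct and follows essentially the same approach as the paper: both iterate the one-shot sparsifier with $\lambda = \Theta(1/\log n)$, track the geometric decay of $m_i$ and the doubling of the max weight, and control the total spectral error as a geometric sum dominated by the final term. The only cosmetic difference is that the paper defines the stopping index $\ell$ via an error-budget condition rather than an edge-count threshold, but the two are equivalent; your parenthetical about surviving weights being exactly $2^i$ is slightly off (edges that land in $E_{\text{high}}$ in some round are not doubled, so weights can be any $2^j$ with $j \le i$), but this does not affect the max-weight bound.
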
 

\begin{proof} Set $\lambda = (c \cdot \log n)^{-1}$ for sufficiently large constant $c$. Denote $G_0$ by $G$. For each $i$, let $G_i$ denote an $\eps_i$-spectral sparsifier of $G_{i-1}$ obtained from \Cref{clm:one_shot_sparsifier} with respect to $\lambda$. Let $m_i$ denote the number of edges of $G_i$. Via \Cref{clm:one_shot_sparsifier}, for each $i$, we have $m\cdot \left(\frac{1}{2}-\lambda\right)^i  \leq m_i \leq m \cdot \left(\frac{1}{2}+\lambda\right)^i$ and $\eps_i = \sqrt{\frac{Cn \log^2 n}{m_i}}$ for some universal constant $C>0$. Let $\ell$ be the largest non-negative integer such that  
$$ \sqrt{\frac{n \log^2 n}{ m}} \cdot \left(\frac{1}{2} - \lambda\right)^{-\ell/2} \leq \frac{\eps}{40 \sqrt{C}}.$$ 
We claim that $G_\ell$ is the desired $\eps$-spectral sparsifier of $G$. Note here that $\ell = O(\log n)$.

\paragraph{Maximum weight of $G_\ell$.} Notice that maximum weight of $G_{\ell}$ is $2^\ell$, and we have
\begin{align*}
    2^{\ell} \leq \left(\frac{1}{2} - \lambda\right)^{-\ell} \leq O\left( \frac{\eps^2 m}{n \log^2 n} \right) = O\left( \frac{\eps^2 d}{\log^2 n} \right).
\end{align*}

\paragraph{Number of edges of $G_\ell$.}  By definition of $\ell$, we have 
$\sqrt{\frac{n \log^2 n}{ m}} \cdot \left(\frac{1}{2} - \lambda\right)^{-(\ell+1)/2} > \frac{\eps}{40\sqrt{C}}.$ This further implies $\left(\frac{1}{2} - \lambda\right)^{\ell} < 400C \cdot \frac{n \log^2 n}{\eps^2 m}$. We then have:
\begin{align*}
    m_\ell  \leq m \cdot \left(\frac{1}{2} + \lambda\right)^{\ell} & = m \cdot \left(\frac{1}{2} - \lambda\right)^{\ell} \cdot \left(\frac{1+2\lambda}{1-2\lambda}\right)^\ell \\
    & \leq \frac{n \log^2 n}{\eps^2} \cdot \left(400C \cdot e^{6\lambda \ell}\right) \tag{$\frac{1+2\lambda}{1-2\lambda} \leq 1+6\lambda \leq e^{6\lambda}$}\\
    & = O\left(\frac{n \log^2 n}{\eps^2 }\right) \tag{$\lambda = O(\frac{1}{\log{n}})$ and $\ell = O(\log{n})$}
\end{align*}

\paragraph{Approximation factor.} Let us first bound $\eps_1+\ldots + \eps_\ell$. Consider the following derivation.

\begin{align*}
    \sum_{i \in [\ell] }\eps_i = \sum_{i \in [\ell] } \sqrt{\frac{Cn \log^2 n}{m_i}} & = \sqrt{\frac{Cn \log^2 n}{m}} \cdot \sum_{i \in [\ell] } \left(\frac{1}{2}-\lambda\right)^{-i/2} \\
    & = \sqrt{\frac{Cn \log^2 n}{m}} \cdot 5 \cdot \left(\frac{1}{2}-\lambda\right)^{-\ell/2} \tag{geometric sum} \\
    & \leq \sqrt{\frac{Cn \log^2 n}{m}} \cdot 5 \cdot \left(\frac{\eps}{40\sqrt{C}} \cdot \sqrt{\frac{n \log^2 n}{m}}\right) \tag{definition of $\ell$}\\
    & \leq \frac{\eps}{8}.
\end{align*}

Next, we compare $v^\top L_{G_\ell}v$ and $v^\top L_{G}v$ for any vector $v \in \mathbb{R}^n$. Again, we omit the corner case of $v = \vec{1}$. Then, we have:
\begin{align*}
    \frac{v^\top L_{G_\ell} v}{v^\top L_{G}v} \leq \prod_{i \in [\ell]} (1+\eps_i) \leq e^{\sum_{i \in [\ell]} \eps_i} \leq e^{\eps/8} \leq 1+\eps.
\end{align*}
and 
\begin{align*}
    \frac{v^\top L_{G_\ell} v}{v^\top L_{G}v} \geq \prod_{i \in [\ell]} (1-\eps_i) \geq 1- \sum_{i \in [\ell]} \eps_i \geq 1-\frac{\eps}{8} \geq 1-\eps.
\end{align*}
which concludes that $G_\ell$ is an $\eps$-spectral sparsifier of $G$ as wished.
\end{proof}

\section{Finding a Large Flow via Communication: Improved Analysis}
\label{sec:large_flow_comm}

The goal of this section is to prove \Cref{lem:large_flow_comm}, thereby concluding our minimum $s$-$t$ cut algorithm in the two-player communication model. We first state our main claim.

\begin{restatable}{clm}{mainclaimcomm} Let $f,k,k'$ be known parameters such that $k' < k \leq f$ with a promise that $\nu(G) \geq f$, and also given explicitly is a subgraph $H \preceq G$ with $\nu(H) \geq f-k$. Then, there is a deterministic protocol which communicates $\tO\left(nf\cdot\frac{\sqrt{k}}{k'}\right)$ bits and outputs $H^{\textnormal{after}} \preceq G$ with $\nu(H^{\textnormal{after}}) \geq f-k'$.
\label{claim:main_claim_comm}
\end{restatable}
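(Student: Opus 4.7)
The plan is to adapt the witness-killing strategy from \Cref{sec:large_flow_cq} to the two-party communication setting, keeping the algorithm essentially the same as in the weaker version but tightening the analysis. Both players maintain a common explicit graph $H_i$, initialized to $H_1 := H$. At each round they identify a $\gamma$-good edge $e_i \in G \setminus H_i$, i.e., one satisfying $|\W(H_i \cup \{e_i\})| \leq (1-\gamma)\,|\W(H_i)|$, for $\gamma = \Omega(k'/(n\sqrt{k}))$ as guaranteed by \Cref{clm:good_edge_exists}. Since $G = G_A \cup G_B$, any such edge lies in one of the players' sides, so each player brute-force searches their own half and they coordinate with $O(\log n)$ bits to broadcast one such $e_i$. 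After updating $H_{i+1} := H_i \cup \{e_i\}$, we iterate until $\W(H_T) = \emptyset$, at which point \Cref{invariant} certifies $\nu(H_T) \geq f-k'$, and we output $H^{\textnormal{after}} := H_T$.

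Since each round exchanges $O(\log n)$ bits and the number of rounds is bounded by $\tO(\log |\W(H_1)|/\gamma)$, the total communication is $\tO(\log|\W(H_1)|/\gamma)$ bits. The trivial witness count $|\W(H_1)| \leq n^{O(n)}$ immediately recovers the weaker $\tO(n^2\sqrt{k}/k')$ bound quoted in the overview. To reach the target $\tO(nf\sqrt{k}/k')$, the plan is to sharpen the witness count to $|\W(H_1)| \leq n^{O(f)}$, so that $\log|\W(H_1)| = \tO(f)$ and the iteration count collapses to $\tO(f/\gamma) = \tO(nf\sqrt{k}/k')$, matching the target once multiplied by $O(\log n)$ bits per round.

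For the tighter witness count, I intend to exploit the forest-packing machinery of \Cref{lem:k-cut-preserved} and \Cref{lem:packing_preserve_mc}. Since $\nu(G) \geq f$, each player can locally (and at no communication cost) compute a $\Theta(f)$-forest packing $P_A, P_B$ of their private subgraph; the union $P := P_A \cup P_B$ has $O(nf)$ edges and satisfies $\nu(P) \geq f$ by the cut-preservation of \Cref{lem:k-cut-preserved}. The plan is to restrict the pre-allocated cut $Y$ in the definition of a witness to pairs drawn from $P$ (rather than from the full $S \times T$), which confines the $(S,T)$-dependent factor in the witness count from $2^n$ down to $n^{O(f)}$, since only cuts that interact non-trivially with the sparse $P$ can carry a valid $Y$. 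This reduction is purely an accounting step --- both the protocol and the $\gamma$-good edges remain unchanged.

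The main obstacle is reconciling this restricted witness definition with both directions of \Cref{invariant} and the flow-completion argument of \Cref{clm:good_edge_exists}: I must verify that whenever $\nu(H_i) < f-k'$ a valid restricted witness still exists (so the invariant still certifies termination) and that the flow-completion set $Q$ from the proof of \Cref{clm:good_edge_exists} can be chosen within $P$ (so that the $\gamma = \Omega(k'/(n\sqrt{k}))$ lower bound on $\gamma$-good edges survives the restriction). Both steps reduce to showing that the packing $P$ carries every relevant min-cut structure of $G$ up to the threshold $f-k'-1$, which follows from $\nu(P) \geq f$ and the flow-cover bound of \Cref{lem:flow_cover}. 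Plugging $\log|\W(H_1)| = \tO(f)$ and $\gamma = \Omega(k'/(n\sqrt{k}))$ into the iteration count yields $\tO(nf\sqrt{k}/k')$ bits of communication, completing the proof of Claim~\ref{claim:main_claim_comm}.
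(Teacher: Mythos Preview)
Your high-level plan is right: the only thing separating the weaker $\tO(n^2\sqrt{k}/k')$ bound from the target $\tO(nf\sqrt{k}/k')$ is sharpening $\log|\W(H_1)|$ from $\tO(n)$ to $\tO(f)$. But the mechanism you propose --- restricting $Y$ to pairs drawn from a $\Theta(f)$-forest packing $P$ --- does not accomplish this, and in fact breaks the argument.

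The gap is that restricting $Y$ does nothing to control the number of cuts $(S,T)$. A witness is a triple $(S,T,Y)$; even if $Y$ is confined to a set of $O(nf)$ candidate pairs, for a fixed $Y$ the number of $s$--$t$ cuts with $E_{H}(S,T)\subseteq Y$ is governed by the component structure of $H\setminus Y$, and without further assumptions on $H$ this count is still $2^{\Theta(n)}$. Worse, insisting $Y\subseteq E(P)$ forces every $H_i$-edge crossing the cut to lie in $P$; since nothing guarantees $H_i\subseteq P$, a small cut of $H_i$ whose crossing edges fall outside $P$ has no restricted witness at all, so the direction ``$\W(H_i)=\emptyset\Rightarrow\nu(H_i)\ge f-k'$'' of \Cref{invariant} fails. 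Your final paragraph flags these as obstacles but the one-line appeal to ``$\nu(P)\ge f$ and \Cref{lem:flow_cover}'' addresses neither.

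What actually drives the $n^{O(f)}$ count is a \emph{connectivity} pre-processing of $H$: each player sends a spanning forest of her private graph (cost $\tO(n)$), after which every vertex of $H$ is reachable from $s$. Once $H$ is connected, any set $Y$ of at most $f-k'-1$ pairs leaves $H\setminus Y$ with at most $f-k'$ components, hence at most $2^{f-k'}$ compatible cuts; combined with $\le n^{2(f-k'-1)}$ choices for $Y$ this already yields $|\W(H)|\le n^{O(f)}$ for the \emph{original} witness definition --- the forest packing is a red herring. The paper takes a variant route: after the same connectivity step it passes to \emph{residual} witnesses relative to a fixed flow $F\preceq H$ of value $f-k$ (so the cut condition becomes $w_{H_F}(S,T)\le k-k'-1$), and bounds the number of cuts with a prescribed crossing set by a labeling argument that exploits the $f-k$ directed paths of $F$ together with BFS from $s$. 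Either route works, but both hinge on the connectivity pre-processing that your proposal omits.
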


As a corollary, we can obtain a ``promise'' version of \Cref{lem:large_flow_comm}.

\begin{corollary} Let $f$ and $\Delta \leq f$ be parameters with a promise that $\nu(G) \geq f$.Then, there is a deterministic protocol which communicates $\tO\left(\frac{nf}{\sqrt{\Delta}}\right)$ bits and outputs $H \preceq G$ with $\nu(H) \geq f-\Delta$.
\label{cor:main_lem_comm}
\end{corollary}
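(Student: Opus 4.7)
My plan is to mirror the proof of \Cref{cor:main_lem_cut_query}, which was obtained by repeatedly invoking \Cref{claim:main_claim_cq} with geometrically decreasing parameters. Here I will instead recursively invoke \Cref{claim:main_claim_comm}. Concretely, set $H_0 = \emptyset$ and let $k_i = f/2^i$ for $i = 0, 1, 2, \ldots$. For each $i$, apply \Cref{claim:main_claim_comm} with $(k, k') = (k_i, k_{i+1}) = (k_i, k_i/2)$, starting from $H_i$, and let $H_{i+1}$ be the resulting subgraph. Iterate until $k_{i+1} \leq \Delta$, that is, through parameter pairs $(f, f/2), (f/2, f/4), \ldots, (2\Delta, \Delta)$, and then output the final $H$.

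To verify correctness I need to check that the precondition $\nu(H_i) \geq f - k_i$ of \Cref{claim:main_claim_comm} is maintained inductively. The base case $i = 0$ is trivial since $\nu(\emptyset) = 0 = f - k_0$. For the inductive step, the guarantee of \Cref{claim:main_claim_comm} applied at iteration $i$ gives $\nu(H_{i+1}) \geq f - k_{i+1}$, which is exactly the precondition needed at iteration $i+1$. After the last iteration we have $\nu(H) \geq f - \Delta$ as required, and $H \preceq G$ since each round only adds edges of $G$.

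For the communication cost, plugging $k' = k/2$ into the bound of \Cref{claim:main_claim_comm} yields a per-iteration cost of $\tO(nf \cdot \sqrt{k_i}/k_{i+1}) = \tO(nf/\sqrt{k_i})$. Summing over $i$,
\[
\sum_{i=0}^{\lceil \log_2(f/\Delta)\rceil} \tO\!\left(\frac{nf}{\sqrt{k_i}}\right) = \tO\!\left(\frac{nf}{\sqrt{f}}\right) \sum_{i} 2^{i/2} = \tO\!\left(\frac{nf}{\sqrt{\Delta}}\right),
\]
since the geometric series is dominated by the last term corresponding to $k_i \approx \Delta$. This matches the claimed bound.

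There is really no substantive obstacle here: the argument is a direct communication-model analogue of the recursive halving scheme already carried out in \Cref{cor:main_lem_cut_query}, with \Cref{claim:main_claim_cq} replaced by \Cref{claim:main_claim_comm}. The only minor point to be careful about is that the geometric parameters $k_i$ remain integers (or at least well-defined values with $k' < k \le f$) throughout the recursion, which is handled by rounding $k_i$ to the nearest power-of-two multiple of $\Delta$, at the cost of only constant factors hidden inside $\tO(\cdot)$.
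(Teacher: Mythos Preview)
Your proposal is correct and is essentially identical to the paper's own proof: both recursively apply \Cref{claim:main_claim_comm} with $(k,k') = (f, f/2), (f/2, f/4), \ldots, (2\Delta, \Delta)$ starting from $H = \emptyset$, verify the inductive invariant $\nu(H) \geq f-k$, and bound the total communication by the geometric sum dominated by the last term. The paper's writeup is terser (and in fact contains a typo, saying ``cut-queries'' instead of ``bits of communication''), but the argument is the same.
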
 

\begin{proof} Use \Cref{claim:main_claim_comm} recursively with $(k,k') = (f, f/2),\ldots,(2\Delta, \Delta)$ with $H$ being $H^{\text{after}}$ from the previous iteration (for the first iteration, set $H = \emptyset$.) Inductively, each iteration is successful to obtain $\nu(H^{\text{after}}) \geq f-k'$. Thus, upon the completion we obtain $H$ with $\nu(H) \geq f-\Delta$. In total, the number of cut-queries made to $G$ is $\tO\left(\frac{nf}{\sqrt{\Delta}}\right)$ due to the geometric sum.
\end{proof}

Now let's prove \Cref{lem:large_flow_comm} from \Cref{cor:main_lem_comm}

\largeflowcomm*

\begin{proof} Consider the following deterministic communication protocol.

\begin{enumerate}
    \item Compute $\hat{f}$ such that $0.99 \nu(G) \leq \hat{f} \leq 1.01 \nu(G)$ using $\tO(n)$ bits via \Cref{lem:sparsifier_comm}.
    \item If $\hat{f} \leq 0.99 \Delta$, output $H := \emptyset$.
    \item Else, we have $\hat{f} \geq 0.99 \Delta$. Set $\lambda = \frac{\Delta}{20\hat{f}} \leq \frac{1}{10}$.
    \item Compute $f$ such that $(1-\lambda)\nu(G) \leq f \leq (1+\lambda)\nu(G)$ using $O(n/\lambda^2)$ bits via \Cref{lem:sparsifier_comm}.
    \item If $f \leq (1-\lambda) \Delta$, output $H = \emptyset$.
    \item If $f \geq (1-\lambda) \Delta$, use  \Cref{cor:main_lem_comm} with $(f, \Delta) \leftarrow \left(\frac{f}{1+\lambda}, \frac{\lambda f}{10}\right)$ and output such $H$.
\end{enumerate}

Let us first argue the correctness. In Step 2, if $\hat{f} \leq 0.99 \Delta$, we then know that $\nu(G) \leq \Delta$; hence, simply outputting $H = \emptyset$ is correct.

Otherwise, we must have $\hat{f} \geq 0.99 \Delta$. In Step 4, we compute $f$. Similar to before, if $f \leq (1-\lambda) \Delta$, we then know that $\nu(G) \leq \Delta$; hence, simply outputting $H = \emptyset$ is correct.

Otherwise, we must have $f \geq (1-\lambda) \Delta$. We also know that  $\nu(G) \geq \frac{f}{1+\lambda}$. Together with the fact that $\frac{f}{1+\lambda}  \geq \frac{\lambda f}{10} $, the premises of \Cref{cor:main_lem_comm} are met. As a result the output $H$ via Step 6 must satisfy:
    \begin{align*}
         \nu(H) \geq \frac{f}{1+\lambda} -\frac{\lambda f}{10} & = \nu(G) - \left[\nu(G) - \left(\frac{1}{1+\lambda} - \frac{\lambda}{10}\right) \cdot f \right] \\
         & \geq  \nu(G) - \left(\frac{1}{1-\lambda} - \frac{1}{1+\lambda} + \frac{\lambda}{10}\right) \cdot f \tag{$\nu(G) \leq \frac{f}{1-\lambda}$} \\
         & \geq \nu(G) - 10 \lambda f \tag{$\lambda \leq \frac{1}{10}$} \\
         & \geq \nu(G) - \frac{\Delta}{2} \cdot \frac{f}{\hat{f}} \tag{$\lambda = \frac{\Delta}{20 \hat{f}}$} \\
         & \geq \nu(G) - \Delta
    \end{align*}
where the last inequality follows the facts that $f \leq (1+\lambda) \nu(G) \leq 1.1 \nu(G)$ and $0.99 \nu(G) \leq \hat{f}$. Hence, our protocol does compute $H$ with a guarantee that $\nu(H) \geq \nu(G) - \Delta$.

Finally, the protocol communicates $\tO(n)$ bits in Step 1, $\tO\left(\frac{n}{\lambda^2}\right)$ bits in Step 4, and $\tO\left(\frac{n\sqrt{f}}{\sqrt{\lambda}}\right)$ bits in Step 6. Using the fact that $f \leq (1+\lambda)\nu(G) \leq 1.1 \nu(G)$ and $\lambda = \frac{\Delta}{20\hat{f}} \geq \frac{\Delta}{20 (1.01) \cdot \nu(G)}$, the total communication is bounded by $$\tO\left(\frac{n \cdot \nu(G)}{\sqrt{\Delta}} + \frac{n \cdot \nu(G)^2}{\Delta^2}\right)$$ bits, as desired.
\end{proof}

\paragraph{Setting up the proof of \Cref{claim:main_claim_comm}.}
\label{sec:main_lem}

Our task only remains to \Cref{claim:main_claim_comm}. We assume its setting: let $f,k,k'$ be such that $k' < k \leq f$ with a promise that $\nu(G) \geq f$, and also given for free is $H \preceq G$ with $\nu(H) \geq f-k$. We wish to obtain $H^{\text{after}} \preceq G$ with $\nu(H^{\text{after}}) \geq f-k'$. Our protocol proceeds in a similar fashion as sketched in \Cref{sec:setup} barring changes in the notions of witness: we shall now consider a witness with respect to the residual graph instead of actual graph $H$.

Specifically, let $F$ be a flow subgraph of $H$ with size $\nu(F) = f - k$ that is fixed throughout the protocol. For any graph $\mathcal{G}$ containing $F$ as a subgraph, we define a set of \emph{residual witness} of $\mathcal{G}$ with respect to $F$, denoted $\mathcal{W}_F(\mathcal{G})$, with an invariant that $\nu(\mathcal{G}) \geq f - k'$ if and only if $\mathcal{W}_F(\mathcal{G}) = \emptyset$. The protocol starts with $H_1 \leftarrow H$. In each round $i = 1, 2, \ldots$, the players compute a private edge $e_i \in G \setminus H_i$ that reduces the number of witnesses by a fraction of $\gamma$; that is, $e_i$ be such that $|\mathcal{W}_F(H_i \cup {e_i})| \leq (1-\gamma) \cdot |\mathcal{W}_F(H_i)|$, exchange such edge, and then set $H_{i+1} \leftarrow H_i \cup {e_i}$. The protocol terminates at the first round $T$ where $\mathcal{W}_F(H_T) = \emptyset$, ensuring that the final graph $H^{\text{after}} := H_T$ has $\nu(H^{\text{after}}) \geq f-k'$ via our invariant. Since the protocol communicates $T$ edges in total, our goal is to ensure that a small $T= \widetilde{O}\left(nf \cdot \frac{\sqrt{k}}{k'}\right)$ suffices for termination.

\paragraph{Organization.} In \Cref{subsec:preprocess}, we apply an inexpensive pre-processing step to $H$, ensuring that all of its vertices is actually reachable from the terminal $s$. In \Cref{subsec:witness}, we formally define \emph{residual witnesses} and prove their corresponding useful properties. \Cref{subsec:witness_ub} establishes an upper bound on the starting number of residual witnesses of $H$. Finally, \Cref{subsec:conclude} integrates these results, yielding a protocol that satisfies \Cref{claim:main_claim_comm}.

\subsection{Pre-processing $H$} 
\label{subsec:preprocess}

As an initial step, we apply a pre-processing procedure to $H$ as follows:

\begin{enumerate}
\item Alice computes a spanning forest $\mathcal{F}_A$ of her full input graph $G_A = (V,E_A)$. She then sends $\mathcal{F}_A$ to Bob and adds it to $H$.
\item Bob computes a spanning forest $\mathcal{F}_B$ of his full input graph $G_B = (V,E_B)$. He then sends $\mathcal{F}_B$ to Alice and adds it to $H$.
\item The players then combine $\mathcal{F}_A \cup \mathcal{F}_B$ and compute a set of vertices $V' \subseteq V$, consisting of vertices in $V$ that are reachable from $s$ via $\mathcal{F}_A \cup \mathcal{F}_B$.
\item Alice and Bob remove all vertices in $V \setminus V'$ from their respective private graphs and from the public graph $H$.
\end{enumerate}

This pre-processing procedure is relatively cheap in communication: it requires only 2 rounds of communication and incurs a cost of $\widetilde{O}(n)$ bits. A key observation is that the graph $\mathcal{F}_A \cup \mathcal{F}_B$ preserves pairwise connectivity of vertices in $G$. That is, a pair of vertices $u$ and $v$ is connected in $G$ if and only if they are connected in $\mathcal{F}_A \cup \mathcal{F}_B$.  Therefore, the vertex set $V'$ obtained in Step 3 represents all the vertices in $V$ that are reachable from $s$ in $G$.

The consequences of this pre-processing are as follows. Steps 1 and 2 only add edges to $H$, ensuring that $\nu(H)$ never decreases. The deletions in Step 4 do not affect the max-flow value of either $G$ or $H$, because any vertex not reachable from $s$ cannot participate in any $s$-$t$ flow. Thus, the pre-conditions of \Cref{claim:main_claim_comm} remain valid after the pre-processing.

Moreover, after the pre-processing, the vertex set of $H$ becomes $V'$, and $H$ contains $\mathcal{F}_A \cup \mathcal{F}_B$ as a subgraph. Consequently, every vertex in $H$ is reachable from $s$, enabling us to assert the following assumption.

\begin{assumption}
In the public graph $H$, every vertex $v \in V(H)$ is reachable from $s$.
\label{assumption:H_connected}
\end{assumption}

\subsection{Residual Witness}
\label{subsec:witness}

We now formally define the revised notion of witness in which we call a \emph{residual witness}.

\begin{definition}[Residual Witness]
Let $\mathcal{G} = (V, E)$ be an unweighted undirected graph with two designated terminals $s$ and $t$, and $G$ contains $F$ as a flow subgraph. A tuple $\mathsf{W} = (S, T, Y)$ is called a \emph{residual witness} of $\mathcal{G}$ w.r.t $F$ (or just residual witness for short) if the following conditions hold:  

\begin{enumerate}  
    \item $(S, T)$ is an $s$-$t$ cut of $\mathcal{G}_F$ with value $w_{\mathcal{G}_F}(S, T) \leq k - k' - 1$.  
    \item $Y = E_{\mathcal{G}_F}(S,T) \cup X$, where $X \subseteq (S \times T)\setminus E_{\mathcal{G}_F}(S,T)$ is of size at most $k-k'-1-w_{\mathcal{G}_F}(S, T)$.
\end{enumerate}

\label{def:residual_witness}
\end{definition}

Intuitively, each $(a, b) \in X$ represents a \emph{currently non-existent} unit-capacity undirected edge that cross the cut $(S, T)$. The set $X$ then represents a \emph{pre-allocated} set of edges that could potentially be added to $\mathcal{G}$ in the future, ensuring that the cut value of $(S, T)$ remains at most $k - k' - 1$. Note unlike the previous definition of witness, it is possible to have $X = \emptyset$.

\begin{restatable}[Residual Witness Invariant]{inv}{invariantres} For any $\mathcal{G}$ containing $F$ as a flow subgraph, we have
$\nu(\mathcal{G}) \geq f - k'$ if and only if $\mathcal{W}_F(\mathcal{G}) = \emptyset$.
\label{invariant_res}
\end{restatable}

\begin{proof}
First, suppose that $\nu(\mathcal{G}) \geq f - k'$. Assume for the sake of contradiction that a residual witness exists and let $(S,T)$ be a cut with respect to that witness. By definition of a witness, the cut $(S, T)$ has value $w_{\mathcal{G}_F}(S, T) \leq k - k' - 1$. Via \Cref{clm:reduce_cut_by_flow}, we have $|E_{\mathcal{G}}(S,T)| = \nu(F) + w_{\mathcal{G}_F}(S, T)$. Thus, we derive:  $$f - k' \leq \nu(\mathcal{G}) \leq |E_{\mathcal{G}}(S,T)| = \nu(F) + w_{\mathcal{G}_F}(S, T) \leq (f-k) + (k - k' - 1) = f - k' - 1$$
yielding a contradiction. This concludes $\mathcal{W}_F(\mathcal{G}) = \emptyset$.

Conversely, suppose that $\nu(\mathcal{G}) \leq f - k' - 1$. Let $(S,T)$ be a minimum $s$-$t$ cut of $\mathcal{G}$ with value $|E_{\mathcal{G}}(S,T)| = \nu(G) \leq f - k' - 1$. Again, following \Cref{clm:reduce_cut_by_flow}, we have
$$w_{\mathcal{G}_F}(S,T) = |E_{\mathcal{G}}(S,T)| - \nu(F) \leq (f - k' - 1) - (f-k) = k-k'-1.$$
Then by taking $X = \emptyset$, $(S, T, E_{\mathcal{G}_F}(S,T))$ is a residual witness of $\mathcal{G}$, implying $\mathcal{W}_F(\mathcal{G}) \neq \emptyset$.
\end{proof}

We now define the process of eliminating a witness. This is analogous to \Cref{def:kill}.

\begin{restatable}[Killing a residual witness]{defi}{killres}
Let $\mathcal{G} = (V,E)$ be an unweighted and undirected graph with terminals $s$ and $t$. Let $F$ be its flow subgraph. Let $\mathsf{W} = (S, T, Y)$ be a residual witness of $\mathcal{G}$. We say that a pair of vertices $(a, b) \notin E$ \emph{kills} $\mathsf{W}$ iff $\mathsf{W}$ is \emph{not} a residual witness of $\mathcal{G} \cup (a,b)$. Equivalently, this occurs if and only if $a \in S$, $b \in T$, and $(a, b) \notin Y$. 
\end{restatable}

Recall the brief description of our protocol that Alice and Bob begin with $H_1 := H$ and iteratively add private edge(s) that eliminate a $\gamma$ fraction of remaining residual witnessed. The next claim shows that when $\gamma = \Omega\left(\frac{k'}{n\sqrt{k}}\right)$, at least one player is in possession of such edge; thus, by having each player exhaustively search for that edge, it is surely found.

\begin{claim}
    For any iteration $i$ where $\mathcal{W}_F(H_i) \ne \emptyset$, there exists an edge $e_i \in G \setminus H_i$ such that 
    $$|\mathcal{W}_F(H_i \cup e_i)| \leq \left(1-\Omega\left(\frac{k'}{n\sqrt{k}}\right)\right) \cdot |\mathcal{W}_F(H_i)|.$$
    \label{claim:dec_witnesses}
\end{claim}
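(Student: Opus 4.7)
The plan is to mirror the proof sketch of \Cref{clm:good_edge_exists}, suitably adapted to residual witnesses. The high-level structure has three steps: (i) exhibit a small ``booster'' set $Q \subseteq G \setminus H_i$ whose addition to $H_i$ already pushes the flow up to $f$; (ii) argue that every residual witness is killed by at least $k'+1$ edges of $Q$; (iii) conclude by averaging over $\mathcal{W}_F(H_i)$.

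For step (i), I would build $Q$ through the residual graph $G_F$. Since $F \preceq H_i \preceq G$ with $\nu(F) = f-k$, \Cref{prop:sum_flows} gives $\nu(G_F) = \nu(G) - \nu(F) \geq k$. Pick a non-circular $s$–$t$ flow $F' \preceq G_F$ of value exactly $k$. The weights in $G_F$ are bounded by $2$ (the reversed $F$-edges), so \Cref{lemma:flow_cover_weighted} bounds the number of edges in $F'$ by $O(n\sqrt{k})$. Each edge of $F'$ is either a reverse of some edge of $F$ or an oriented copy of some undirected edge in $G \setminus F$; collect the latter underlying undirected edges into $U \subseteq G \setminus F$, and set $Q := U \setminus H_i$, so $|Q| \le |U| \le O(n\sqrt{k})$. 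Since $F \subseteq H_i$, we have $H_i \cup Q \supseteq F \cup U$, and it is routine to check $F' \preceq (F \cup U)_F$; applying \Cref{prop:sum_flows} in $F \cup U$ then yields $\nu(H_i \cup Q) \geq \nu(F \cup U) \geq \nu(F) + \nu(F') = f$.

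For step (ii), fix a residual witness $\mathsf{W} = (S,T,Y)$ of $H_i$ with $Y = E_{(H_i)_F}(S,T) \cup X$ and $|X| \leq k-k'-1 - w_{(H_i)_F}(S,T)$. Because $Q \cap F = \emptyset$, the identity $(H_i \cup Q)_F = (H_i)_F \cup Q$ holds, giving $w_{(H_i \cup Q)_F}(S,T) = w_{(H_i)_F}(S,T) + |E_Q(S,T)|$. On the other hand $\nu((H_i \cup Q)_F) = \nu(H_i \cup Q) - \nu(F) \geq k$, so $|E_Q(S,T)| \geq k - w_{(H_i)_F}(S,T)$. Any edge $e \in E_Q(S,T)$ lies in $G \setminus H_i$ and hence outside $F$, so $e \notin E_{(H_i)_F}(S,T)$; thus $e$ fails to kill $\mathsf{W}$ only when $e \in X$. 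Subtracting, the number of edges of $Q$ that kill $\mathsf{W}$ is at least $\bigl(k - w_{(H_i)_F}(S,T)\bigr) - |X| \;\geq\; k'+1$.

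For step (iii), double counting $\{(\mathsf{W}, e) : e \in Q \text{ kills } \mathsf{W}\}$ gives total $\geq (k'+1) \cdot |\mathcal{W}_F(H_i)|$, so by pigeonhole some single $e \in Q$ kills at least $\frac{k'+1}{|Q|} = \Omega\!\left(\frac{k'}{n\sqrt{k}}\right)$ fraction of witnesses, which is exactly the claim. The main delicate point (and my expected obstacle) is the bookkeeping in step (i): moving from the mixed weighted flow $F'$ in $G_F$ back to an undirected edge set $Q$ in $G$ and verifying both the cardinality bound and that $\nu(H_i \cup Q) \geq f$. The right tool is the weighted flow-cover \Cref{lemma:flow_cover_weighted} (not the unweighted \Cref{lem:flow_cover}), together with the clean decomposition $(H_i \cup Q)_F = (H_i)_F \cup Q$ used in step (ii).
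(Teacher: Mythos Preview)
Your proof is correct and follows essentially the same averaging-over-a-small-booster-set argument as the paper: your $F'$ and $Q$ play the roles of the paper's $Q$ and $Q^*$, respectively. The only cosmetic differences are that the paper stays entirely in the residual world (showing directly that $(H_i)_F \cup Q^*$ has min-cut $\geq k$, without the detour through $\nu(H_i \cup Q) \geq f$) and invokes \Cref{lem:flow_cover} rather than \Cref{lemma:flow_cover_weighted} to bound the edge count---either lemma works here.
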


\begin{proof} 
Recall that $F$ is a flow graph of size $f - k$, and $G_F$ is the residual graph of $G$ with respect to $F$. By \Cref{prop:sum_flows}, $G_F$ has a maximum flow of at value least $k$ since  
$$\maxflow(G_F) = \maxflow(G) - \maxflow(F) \geq f - (f - k) = k.$$

Let $Q$ be a flow subgraph of $G_F$ that consists of $k$ non-circular flows. By \Cref{lem:flow_cover}, $Q$ contains $O(n\sqrt{k})$ edges. Let $Q^* \subseteq Q$  denote the set of edges in $Q$ that does not appear in $(H_i)_F$. Then, we have $Q \preceq (H_i)_F \cup Q^*$, implying that $(H_i)_F \cup Q^*$ has a maximum flow, i.e., a minimum cut, of value at least $k$.

Let $\mathsf{W} = (S, T, E_{(H_i)_F}(S,T) \cup X)$ be an arbitrary residual witness of $H_i$. We know that $w_{(H_i)_F \cup Q^*}(S, T) \geq k$ while  
$w_{(H_i)_F \cup X}(S, T) = w_{(H_i)_F}(S,T) + |X| \leq k - k' - 1.$\footnote{Here, we slightly abuse the notation by treating $X$ as a set of undirected unit-capacitated edges in $(H_i)_F$, even though they do not actually exist in $(H_i)_F$.} Thus, we obtain $w_{Q^* \setminus X}(S, T) \geq k' + 1$. In addition, we have $Q^*\setminus X \subseteq Q \subseteq G \setminus H_i$ only consists of private undirected edge with weight 1. Hence, the set $Q^* \setminus X$ represents (at least) $k' + 1$ edges among $Q^*$ that kills $\mathsf{W}$ . 

Finally, by the averaging argument over all residual witnesses, there must exist a private edge in $Q^*$ that kills at least a fraction of $$\frac{k' + 1}{|Q^*|} \geq \frac{k' + 1}{|Q|} = \Omega\left(\frac{k'}{n \sqrt{k}}\right)$$
of $\W_F(H_i)$. This completes the proof.
\end{proof}

\subsection{Bounding the Number of Residual Witnesses of $H$}
\label{subsec:witness_ub}

We now bound the number of residual witnesses of $H$ with respect to its flow subgraph $F$ of size $f - k$. Throughout this subsection, we write $(a, b, w)$ to denote a \emph{directed} edge from $a$ to $b$ with weight $w$, and reserve the notation $(a, b)$ exclusively for an undirected, unweighted edge between vertices $a$ and $b$. In the context of cuts and flows, we may replace an undirected edge $(a, b)$ with the pair of directed edges $(a, b, 1)$ and $(b, a, 1)$ as needed. We adopt these views interchangeably.

\begin{claim} The number of residual witnesses of $H$ is at most $n^{O(f)}$.
\label{claim:witnesses_ub}
\end{claim}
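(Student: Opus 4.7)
The plan is to bound the two components of a residual witness separately: the underlying $s$-$t$ cut $(S, T)$ of $H_F$ and the edge set $Y$. First I would invoke \Cref{clm:reduce_cut_by_flow}, which gives $|E_H(S, T)| = \nu(F) + w_{H_F}(S, T)$. Since $\nu(F) = f - k$ and the residual witness condition requires $w_{H_F}(S, T) \leq k - k' - 1$, any such cut of $H_F$ corresponds to an $s$-$t$ cut of $H$ with value at most $f - k' - 1 < f$. Hence it suffices to count $s$-$t$ cuts of $H$ with value less than $f$, and then multiply by the number of valid $Y$'s per cut.

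For the cut count, I would leverage \Cref{assumption:H_connected}: since $H$ is connected (as an undirected graph), fix any spanning tree $\mathcal{T}_H$ of $H$. Any $s$-$t$ cut of $H$ with value at most $v$ crosses at most $v$ edges of $\mathcal{T}_H$ (because $\mathcal{T}_H \subseteq H$). Once we fix the convention $s \in S$, $t \in T$, each such cut is uniquely specified by (i) the set of crossing tree edges (at most $\binom{n-1}{\leq v} \leq n^v$ choices), together with (ii) the assignment of the resulting at most $v + 1$ forest components to the $s$-side or $t$-side (at most $2^v$ choices, with $s$'s component and $t$'s component already pinned). Applying this with $v = f - k' - 1 < f$ yields at most $n^{O(f)}$ valid cuts.

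Next, for a fixed cut $(S, T)$, the set $Y = E_{H_F}(S,T) \cup X$ is fully determined by $X$, which is a subset of $(S \times T) \setminus E_{H_F}(S, T)$ of size at most $k - k' - 1 \leq f$. The number of such subsets is bounded by $\binom{n^2}{\leq f} \leq n^{O(f)}$. Multiplying the two contributions gives $|\mathcal{W}_F(H)| \leq n^{O(f)} \cdot n^{O(f)} = n^{O(f)}$, as desired.

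I do not anticipate a major obstacle; the spanning tree argument is classical and its main subtlety — verifying that the parameterization (crossing-tree-edge set, component-assignment) injectively covers all small-valued $s$-$t$ cuts — is standard once the convention on $s$ and $t$ is fixed. One technicality to acknowledge is that the bound is stated in terms of $f$ rather than $k$; this is convenient because $k \leq f$ by hypothesis, so both the cut count and the extension count automatically absorb into the single $n^{O(f)}$ bound.
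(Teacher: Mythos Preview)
Your proposal is correct but takes a genuinely different route from the paper. The paper does not reduce to counting small cuts of $H$; instead it fixes the residual cut-edge set $C = E_{H_F}(S,T)$ first (there are $n^{O(f)}$ choices since $|C| \le k-k'-1 \le f$), and then bounds, for each such $C$, the number of cuts $(S,T)$ realizing it via a three-step vertex-labeling argument. That argument exploits the decomposition of $F$ into $f-k$ edge-disjoint $s$--$t$ paths: along each reversed flow path, the labels must be monotone between consecutive $C$-edges, giving $n^{O(f)}$ labelings of $V(F)$; the remaining vertices are then labeled uniquely by propagating along undirected edges of $H \setminus F$, using \Cref{assumption:H_connected}.

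Your approach is more elementary and more direct: you translate the residual cut constraint into $|E_H(S,T)| < f$ via \Cref{clm:reduce_cut_by_flow}, and then bound small cuts of the connected undirected graph $H$ by the classical spanning-tree enumeration. This sidesteps the flow-path analysis entirely and makes the dependence on \Cref{assumption:H_connected} completely transparent (you need it only to get a spanning tree). The paper's route, while more intricate, has the merit of working natively in the residual graph and highlighting why the flow structure itself controls the witness count; but for the bare claim $|\mathcal{W}_F(H)| \le n^{O(f)}$, your argument is both shorter and cleaner.
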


\begin{proof}
We will attempt to (over)count the number of residual witnesses $\mathsf{W} = (S, T, E_{H_F}(S,T) \cup X)$ of $H$. First, observe that since $E_{H_F}(S,T)$ contains up to $k - k' - 1 \leq k \leq f$ edges, there are at most $n^{O(f)}$ choices for $E_{H_F}(S,T)$. Similarly, as $|X| \leq k - k' - 1 \leq k \leq f$, there are at most $n^{O(f)}$ choices for $X$. In what follows, we will show that for any fixed realization of $C \subseteq S \times T \times \{1,2\}$ with $w(C) \leq k-k'-1$, the number of cuts $(S, T)$ such that $E_{H_F}(S,T) = C$ is at most $n^{O(f)}$. This count is equivalent to determining the number of ways to label each $v \in V$ with either `s' or `t' such that the set of directed edges from vertices labeled `s' to those labeled `t' corresponds exactly to $C$.  

For convenience, we adopt the following terminology: a \emph{labeling} is a function $\phi: V \rightarrow \{\text{`s'}, \text{`t'}\}$ such that $\phi(s) = \text{`s'}$ and $\phi(t) = \text{`t'}$. A directed edge $(a, b, w) \in H_F$ is called an \emph{$s$-$t$ edge} if $\phi(a) = \text{`s'}$ and $\phi(b) = \text{`t'}$. A labeling $\phi$ is said to be \emph{consistent} with $C$ if the set of $s$-$t$ edges is precisely $C$. Therefore, our task reduces to bounding the number of consistent labelings.  

The edges of $H_F$ can be partitioned into two disjoint sets, $E_1 \cup E_2$, defined as follows:
\begin{itemize} 
\item $E_1 = H \setminus F$ is the set of undirected unit-capacitated edges. Each such edge $(a,b)$ can equivalently be replaced with two directed edge $(a,b,1)$ and $(b,a,1)$. We adopt these views interchangeably. Let $V(E_1)$ denote the set of vertices incident to edges in $E_1$. 

\item $E_2$ is the set of directed edges with weight 2, corresponding to the reversal of the flow $F$. Let $V(E_2)$ denote the set of vertices incident to edges in $E_2$. 
\end{itemize}

We now proceed to count the number of labels $\phi$ that are consistent with $C$. Our approach is to assign the labels in three steps: to $V(C)$, to $V(E_2) \setminus V(C)$, and to the remaining vertices. By \Cref{assumption:H_connected}, we may assume that $H$ is connected. Since $H$ and $H_F$ share the same underlying graph, we have $V(E_1) \cup V(E_2) = V$. Consequently, assigning a label to each vertex in $V(E_1) \cup V(E_2)$ ensures that no vertex remains unlabeled.

\paragraph{Step 1: Counting the number of labels for $V(C)$.}
Fix $C$. There can be at most one labeling of $V(C)$ that is consistent with $C$; specifically, for any $(a, b, w) \in C$, we must have $\phi(a) = \text{`s'}$ and $\phi(b) = \text{`t'}$.  See \Cref{fig:step1} for an illustration. Going forward, we fix the labels of the vertices in $V(C)$ accordingly.

\begin{figure}[h]
    \centering
    \begin{tikzpicture}
    % Nodes in set A

    \node[draw, circle, minimum size = 8mm, fill=gray!20] (s) at (-2.5, -1) {$s$};

    \node[draw, circle, minimum size = 8mm] (A0) at (0, 2) {};
    \node[draw, circle, minimum size = 8mm] (A1) at (0, 0) {};
    \node[draw, circle, minimum size = 8mm] (A2) at (0, -2) {`t'};
    \node[draw, circle, minimum size = 8mm] (A3) at (0, -4) {};

    % Nodes in set B
    \node[draw, circle, minimum size = 8mm] (B1) at (2.5, 2) {};
    \node[draw, circle, minimum size = 8mm] (B2) at (2.5, 0) {};
    \node[draw, circle, minimum size = 8mm] (B3) at (2.5, -2) {`s'};
    \node[draw, circle, minimum size = 8mm] (B4) at (2.5, -4) {`s'};

    \node[draw, circle, minimum size = 8mm] (C1) at (5, 2) {`t'};
    \node[draw, circle, minimum size = 8mm] (C2) at (5, 0) {`t'};
    \node[draw, circle, minimum size = 8mm] (C3) at (5, -2) {`s'};
    \node[draw, circle, minimum size = 8mm] (C4) at (5, -4) {};

    \node[draw, circle, minimum size = 8mm] (D1) at (7.5, 2) {`s'};
    \node[draw, circle, minimum size = 8mm] (D2) at (7.5, 0) {};
    \node[draw, circle, minimum size = 8mm] (D3) at (7.5, -2) {`s'};
    \node[draw, circle, minimum size = 8mm] (D4) at (7.5, -4) {};

    \node[draw, circle, minimum size = 8mm, fill=gray!20] (t) at (10, -1) {$t$};

    % Edges between sets
    \draw[<-, >=stealth, blue] (s) -- (A1);
    \draw[<-, >=stealth, blue] (A1) -- (B2);
    \draw[<-, >=stealth, blue] (B2) -- (B1);
    \draw[<-, >=stealth, blue] (B1) -- (C1);
    \draw[<-, >=stealth, blue, snake] (C1) -- (D1);
    \draw[<-, >=stealth, blue] (D1) -- (t);

    \draw[<-, >=stealth, red] (s) -- (A2);
    \draw[<-, >=stealth, red, snake] (A2) -- (B3);
    \draw[<-, >=stealth, red] (B3) -- (B2);
    \draw[<-, >=stealth, red] (B2) -- (C2);
    \draw[<-, >=stealth, red, snake] (C2) -- (C3);
    \draw[<-, >=stealth, red] (C3) -- (D2);
    \draw[<-, >=stealth, red] (D2) -- (t);

    \draw (s) -- (A3);
    \draw (A2) -- (B2);
    \draw[snake] (A2) -- (B4);
    \draw (A3) -- (B3);
    \draw (B4) -- (C4);
    \draw (C4) -- (D4);
    \draw (D3) -- (D4);
    \draw[snake] (D3) -- (t);
    \draw (A0) -- (B1);

\end{tikzpicture}
    \caption{Shown above is a graph $H_F$ with its partial labels given a fixed $C$. The shaded vertices ($s$ and $t$) are already given a label. The red and blue edges represent $E_2$ which is the reversal of a flow graph $F$ of size 2. The black edges represent undirected unit-capacitated edges, each of which viewed as two directed edges of weight 1. The \emph{wavy} edges represent the fixing of $C$. Step 1 \emph{uniquely} labels vertices in $V(C)$ according to $C$. Such unique label is indicated by the label within each node itself. }
    \label{fig:step1}
\end{figure}
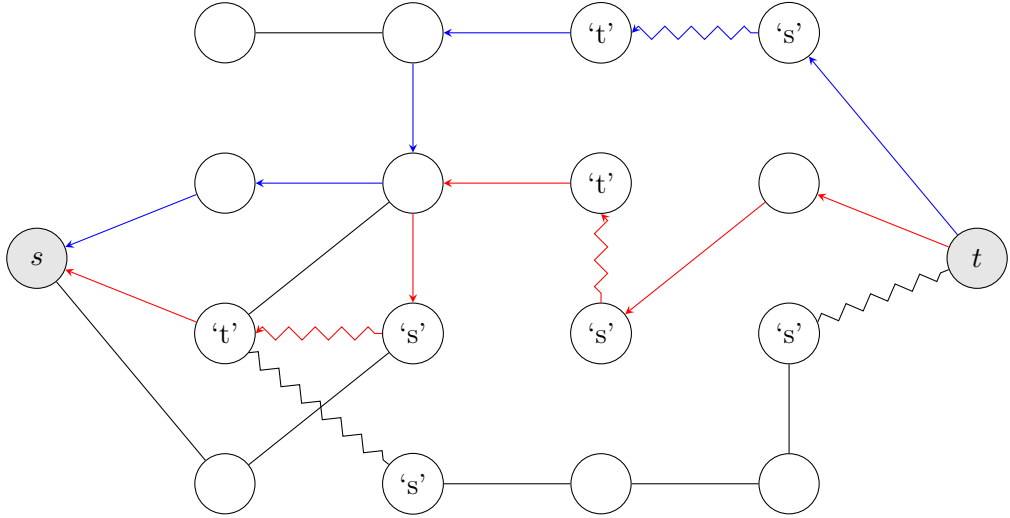

\paragraph{Step 2: Counting the number of labels for $V(E_2) \setminus V(C)$ given a label of $V(C)$.}  
Fix a labeling of $V(C)$ from Step 1. In this step, we will label $V(E_2)\setminus V(C)$. Recall that $E_2$ is the reversal of $F$, which consists of $f - k$ edge-disjoint $s$-$t$ flows. Thus, the subgraph induced by $E_2$ consists of $f - k$ edge-disjoint $t$-$s$ flows. This means we can partition $E_2$ into $f - k$ disjoint directed paths from $t$ to $s$. Call them $P_1, \ldots, P_{f - k}$. For each $i$, let $\eta_i = |P_i \cap C|$ denote the number of edges in $P_i$ that appear in $C$. We then have: $$\sum_{i=1}^{f - k} \eta_i = \sum_{i=1}^{f - k} |P_i \cap C| \leq |C| \leq w(C) \leq k - k' - 1.$$

Now consider each $i \in [f - k]$. The edges in $P_i \cap C$ split $P_i$ into $\eta_i + 1$ contiguous segments. We claim that for a segment consisting of $l$ edges, we can label its vertices in at most $l \leq n$ ways. To see this, suppose that this segment is $\mathcal{S} = (v_0, \ldots, v_l)$. Then, $v_0$ must have already been labeled with `t' (either $v_0 = t$ or it is the tail of some $s$-$t$ edge in $C$). Similarly, $v_{l}$ must have already been labeled with `s'. As there can be no other $s$-$t$ edges in $\mathcal{S}$, the only valid labelings are monotonic — that is, to label $\{v_0, \ldots, v_j\}$ with `t' and $\{v_{j+1}, \ldots, v_d\}$ with `s' for some $0 \leq j \leq l-1$; giving $l$ possible labeling of $V(\mathcal{S})$. See \Cref{fig:step2} for an illustration.

Thus, the number of consistent labels for $V(E_2)$ is upper-bounded by:  
$$\prod_{i=1}^{f - k} n^{\eta_i + 1} \leq n^{f - k + \sum_{i=1}^{f - k} \eta_i} \leq n^{f - k' - 1} < n^f.$$
Going forward, we fix the labels of the vertices in $V(E_2) \setminus V(C)$ accordingly.

\begin{figure}[h]
    \centering
    \begin{tikzpicture}
    % Nodes in set A

    \node[draw, circle, minimum size = 8mm, fill=gray!20] (s) at (-2.5, -1) {$s$};    

    \node[draw, circle, minimum size = 8mm] (A1) at (0, 0) {`s'};
    \node[draw, circle, minimum size = 8mm] (B2) at (2.5, 0) {`s'}; 
    \node[draw, circle, minimum size = 8mm] (B1) at (2.5, 2) {`t'};
    \node[draw, circle, minimum size = 8mm, fill=gray!20] (C1) at (5, 2) {`t'};
    \node[draw, circle, minimum size = 8mm, fill=gray!20] (D1) at (7.5,2) {`s'};    
    \node[draw, circle, minimum size = 8mm, fill=gray!20] (t) at (10, -1) {$t$};

    % Edges between sets
    \draw[<-, >=stealth, blue] (s) -- (A1);
    \draw[<-, >=stealth, blue] (A1) -- (B2);
    \draw[<-, >=stealth, blue] (B2) -- (B1);
    \draw[<-, >=stealth, blue] (B1) -- (C1);
    \draw[<-, >=stealth, blue, snake] (C1) -- (D1);
    \draw[<-, >=stealth, blue] (D1) -- (t);

\end{tikzpicture}
    \caption{Continuing from Step 1, the shaded vertices are already given a fixed label. Shown above one of the 4 possible consistent labeling to a segment of length $l = 4$ on the blue flow.  The remaining 3 possibilities are as follows: `s' $\leftarrow$ `s' $\leftarrow$ `s'; `s' $\leftarrow$ `t' $\leftarrow$ `t'; `t' $\leftarrow$ `t' $\leftarrow$ `t'.}
    \label{fig:step2}
\end{figure}

\paragraph{Step 3: Counting the number of labels for $V$ given a labeling of $V(E_2) \cup V(C)$.}
Fix a labeling of $V(E_2) \cup V(C)$ from Steps 1 and 2. In this final steps, we will label the remaining vertices. For any vertex $u \in V$, let $d(u)$ denote the length of the shortest path from $s$ to $u$ in $H$ (such path must exist due to \Cref{assumption:H_connected}). We will prove the following claim by induction on $d$: For any vertex $u \in V$ such that $d(u) = d$, given a fixed labeling $\phi$ of $V(E_2) \cup V(C)$, there is at most one consistent label for $u$ with respect to $C$. We may assume $u \notin V(E_2) \cup V(C)$, as those vertices have already been assigned labels from the previous steps.

The base case $d = 1$ means $(s, u) \in E(H)$. Since $u \notin V(E_2)$, the undirected unit-capacitated edge $(s, u)$ belongs to $H_F$, which then can be replaced with two directed edges $(s, u, 1)$ and $(u, s, 1)$. Furthermore, because $u \notin V(C)$, neither $(s, u, 1)$ nor $(u, s, 1)$ can be an $s$-$t$ edge. As a result, $s$ and $u$ must be assigned the same label. Since $s$ is labeled with `s', $u$ can only be labeled with `s'.

For the inductive step, suppose the claim holds for all vertices $v$ with $d(v) \leq d - 1$. Consider a vertex $u$ with $d(u) = d$. Let $u'$ be the predecessor of $u$ on the shortest path from $s$ to $u$ in $H$, meaning $d(u') = d - 1$, and $(u', u)$ is an edge in $H$. By the induction hypothesis, $u'$ has a unique label, denoted as $l \in \{s, t\}$. We now consider the possible orientations of the edge(s) between $u$ and $u'$ in $H_F$:

\begin{enumerate}
    \item Either $(u, u', 2)$ or $(u', u, 2)$ belongs to $H_F$. In this case, $u \in V(E_2)$, which contradicts the assumption that $u \notin V(E_2)$.
    \item Both $(u, u', 1)$ and $(u', u, 1)$ belong to $H_F$. Since $u \notin V(C)$, neither $(u, u', 1)$ nor $(u', u, 1)$ is an $s$-$t$ edge. Therefore, $u$ and $u'$ must have the same label. As $u'$ can only take the label $l$, $u$ must also be labeled $l$.
\end{enumerate}

This completes the induction proof. As a result, given a labeling of $V(E_2) \cup V(C)$ from Steps 1 and 2, there can be at most one possible labeling of the remaining vertices in $V$. See \Cref{fig:step3} for an illustration.

\begin{figure}[h]
    \centering
    \begin{tikzpicture}
    % Nodes in set A
    \node[draw, circle, minimum size = 8mm, fill=gray!20] (s) at (-2.5, -1) {$s$};

    \node[draw, circle, minimum size = 8mm] (A0) at (0, 2) {`t'};
    \node[draw, circle, minimum size = 8mm, fill=gray!20] (A1) at (0, 0) {`s'};
    \node[draw, circle, minimum size = 8mm, fill=gray!20] (A2) at (0, -2) {`t'};
    \node[draw, circle, minimum size = 8mm] (A3) at (0, -4) {`s'};

    % Nodes in set B
    \node[draw, circle, minimum size = 8mm, fill=gray!20] (B1) at (2.5, 2) {`t'};
    \node[draw, circle, minimum size = 8mm, fill=gray!20] (B2) at (2.5, 0) {`s'};
    \node[draw, circle, minimum size = 8mm, fill=gray!20] (B3) at (2.5, -2) {`s'};
    \node[draw, circle, minimum size = 8mm, fill=gray!20] (B4) at (2.5, -4) {`s'};

    \node[draw, circle, minimum size = 8mm, fill=gray!20, fill=gray!20] (C1) at (5, 2) {`t'};
    \node[draw, circle, minimum size = 8mm, fill=gray!20] (C2) at (5, 0) {`t'};
    \node[draw, circle, minimum size = 8mm, fill=gray!20] (C3) at (5, -2) {`s'};
    \node[draw, circle, minimum size = 8mm] (C4) at (5, -4) {`s'};

    \node[draw, circle, minimum size = 8mm, fill=gray!20] (D1) at (7.5, 2) {`s'};
    \node[draw, circle, minimum size = 8mm, fill=gray!20] (D2) at (7.5, 0) {`t'};
    \node[draw, circle, minimum size = 8mm, fill=gray!20] (D3) at (7.5, -2) {`s'};
    \node[draw, circle, minimum size = 8mm] (D4) at (7.5, -4) {`s'};

    \node[draw, circle, minimum size = 8mm, fill=gray!20] (t) at (10, -1) {$t$};

    % Edges between sets
    \draw[<-, >=stealth, blue] (s) -- (A1);
    \draw[<-, >=stealth, blue] (A1) -- (B2);
    \draw[<-, >=stealth, blue] (B2) -- (B1);
    \draw[<-, >=stealth, blue] (B1) -- (C1);
    \draw[<-, >=stealth, blue, snake] (C1) -- (D1);
    \draw[<-, >=stealth, blue] (D1) -- (t);

    \draw[<-, >=stealth, red] (s) -- (A2);
    \draw[<-, >=stealth, red, snake] (A2) -- (B3);
    \draw[<-, >=stealth, red] (B3) -- (B2);
    \draw[<-, >=stealth, red] (B2) -- (C2);
    \draw[<-, >=stealth, red, snake] (C2) -- (C3);
    \draw[<-, >=stealth, red] (C3) -- (D2);
    \draw[<-, >=stealth, red] (D2) -- (t);

    \draw (s) -- (A3);
    \draw (A2) -- (B2);
    \draw[snake] (A2) -- (B4);
    \draw (A3) -- (B3);
    \draw (B4) -- (C4);
    \draw (C4) -- (D4);
    \draw (D3) -- (D4);
    \draw[snake] (D3) -- (t);
    \draw (A0) -- (B1);

\end{tikzpicture}
    \caption{Continuing from Step 2, the shaded vertices are already given a fixed label. Shown above is the only possible label to the remaining (unshaded) vertices}
    \label{fig:step3}
\end{figure}

Combining the three steps, for a fixed $C$, there are at most $n^f$ possible labels that are consistent with $C$. This implies that the number of $(S, T)$ pairs whose cut is precisely $C$ is at most $n^f$. As we have already argued that there are $n^{O(f)}$ possibilities for $C$ and $X$, the total number of witnesses to $H_F$ is at most $n^{O(f)}$. This concludes the proof of \Cref{claim:witnesses_ub}.
\end{proof}

\subsection{Putting All Together: Proof of \Cref{claim:main_claim_comm}}
\label{subsec:conclude}

Finally, we conclude by proving \Cref{claim:main_claim_comm}. 

\mainclaimcomm*

\begin{proof}
Let $\beta > 0$ be a suitably small hidden constant following the statement of \Cref{claim:dec_witnesses}. We propose the following protocol.

\begin{enumerate}
\item{Players pre-process $H$ so that every vertex $v \in H$ is reachable from $s$.}
\item{Players locally determine $F \preceq H$, consisting of exactly $f - k$ edge-disjoint $s$-$t$ paths.}
\item If $\W_F(H) = \emptyset$, output the current $H$ and terminate. Otherwise, proceed.
\item[4a.] Alice declares her private edge $e^{\text{Alice}}$, if any, such that $|\W_F(H \cup e^{\text{Alice}})| \leq \left(1-\frac{\beta k'}{n\sqrt{k}}\right) \cdot |\W_F(H)|$.
\item[4b.] Bob declares his private edge $e^{\text{Bob}}$, if any, such that $|\W_F(H \cup e^{\text{Bob}})| \leq \left(1-\frac{\beta k'}{n\sqrt{k}}\right) \cdot |\W_F(H)|$.
\item[5.] Both players update $H \leftarrow H \cup \{ e^{\text{Alice}}, e^{\text{Bob}}\}$ and go back to Step 3.
\end{enumerate}

If our algorithm ever terminates, its correctness follows from \Cref{invariant_res} that $\W_F(H) = \emptyset$ implies $\nu(H) \geq f-k'$. Thus, it suffice to justify the termination and communication costs.

Assume that at the start of an iteration at Line 3, we have $\W_F(H) \ne \emptyset$. Applying \Cref{claim:dec_witnesses}, it is guaranteed that either Alice must have found $e^{\text{Alice}}$ or Bob must have found $e^{\text{Bob}}$. Without loss of generality, suppose that Alice has found $e^{\text{Alice}}$. Then, upon the insertion of $e^{\text{Alice}}$ and $e^{\text{Bob}}$, the number of remaining residual witnesses is reduced to:
\begin{align*}
    |\W_F(H \cup \{ e^{\text{Alice}}, e^{\text{Bob}}\})| & \leq |\W_F(H \cup \{ e^{\text{Alice}} \} | \tag{adding $e^{\text{Bob}}$ only reduces witnesses} \\
    & \leq \left(1-\frac{\beta k'}{n\sqrt{k}}\right) \cdot |\W_F(H)| \tag{Alice finds $e^{\text{Alice}}$ satisfying Line 4a}
\end{align*}
\Cref{claim:witnesses_ub} asserts that the number of starting residual witnesses is bounded by $n^{O(f)}$. Therefore, we only need 
$$\left(\frac{n\sqrt{k}}{\beta k'} \right) \cdot \log{\left(n^{O(f)}\right)} = \tO\left(nf \cdot \frac{\sqrt{k}}{k'}\right)$$ iterations before all residual witnesses are killed. Then, in the next iteration, the protocol will surely terminate.

Finally, in each iteration, the players only communicate at most two edges which costs $O(\log n)$ bits. Thus, the total communication cost of our protocol is, up to a polylogarithmic factor, bounded by the number of iterations which is $\tO\left(nf \cdot \frac{\sqrt{k}}{k'}\right)$. This concludes our proof. 
\end{proof}

\end{document}